\DeclareMathOperator{\R}{\mathbb{R}}
\DeclareMathOperator{\C}{\mathbb{C}}
\DeclareMathOperator{\Z}{\mathbb{Z}}
\DeclareMathOperator{\F}{\mathbb{F}}
\DeclareMathOperator{\cH}{{\mathcal{H}}}
\newcommand{\Aut}{\mathrm{Aut}}
\newcommand{\im}{\mathrm{im}}
\newcommand{\coim}{\mathrm{coim}}
\newcommand{\ba}{\begin{align}}
\newcommand{\ea}{\end{align}}
\newcommand{\bea}{\begin{eqnarray}}
\newcommand{\eea}{\end{eqnarray}}
\newcommand{\be}{\begin{equation}}
\newcommand{\ee}{\end{equation}}
\newcommand{\id}{\mathbb{1}}
\newcommand{\ie}{{\it i.e.~}}
\newcommand{\eg}{{\it e.g.~}}
\newcommand{\diag}{{\mathrm{diag}}}
\newcommand{\pa}{\mathbb{p}}
\newcommand{\Si}[1]{{\Sigma_{#1}}}
\newcommand{\lSi}[1]{{\hat\Sigma_{#1}}}
\newcommand{\vac}{|0\rangle}
\newcommand{\Tr}{{\textrm{Tr}}}
\newcommand{\co}[1]{{\underline #1}}
\newcommand{\ct}[1]{{\underline {\check #1}}}
\newcommand{\hm}{{\mathbb{h}}}
\newcommand{\gm}{{\mathbb{g}}}
\newcommand{\Hr}{{F}}
\newcommand{\qb}{{\bar{q}}}
\numberwithin{equation}{subsection}
\title{CFTs with Large Gap from Barnes-Wall Lattice Orbifolds}
 \author{Christoph A.~Keller}
\address{Christoph A. Keller,  Department of Mathematics, University of Arizona, Tucson, AZ 85721-0089, USA}
 \email{cakeller@arizona.edu}
\author{Ashley Winter Roberts}
\address{Ashley Winter Roberts,  Department of Mathematics, University of Arizona, Tucson, AZ 85721-0089, USA}
 \email{awroberts@arizona.edu}
\author{Jeremy Roberts}
\address{Jeremy Roberts,  Department of Mathematics, University of Arizona, Tucson, AZ 85721-0089, USA}
 \email{jjrobertlol@arizona.edu}
\theoremstyle{plain}
\newtheorem{thm}{Theorem}[section]
\newtheorem{lem}[thm]{Lemma}
\newtheorem{prop}[thm]{Proposition}
\newtheorem*{theorem-non}{Theorem}
\newtheorem{con}[thm]{Conjecture}
\theoremstyle{definition}
\newtheorem{defn}{Definition}[section]
\theoremstyle{remark}
\begin{document}

 \begin{abstract}
We investigate orbifolds of lattice conformal field theories with the goal of constructing theories with large gap.
We consider Barnes-Wall lattices, which are a family of lattices with no short vectors, 
and orbifold by an extraspecial 2-group of lattice automorphisms. To construct the orbifold CFT, we investigate the orbifold vertex operator algebra and its twisted modules. To obtain a holomorphic CFT, a certain anomaly 3-cocycle $\omega$ needs to vanish; based on evidence we provide, we conjecture that it indeed does. Granting this conjecture, we construct a holomorphic CFT of central charge 128 with gap 4.
  
 \end{abstract}

\maketitle

\section{Introduction}
\subsection{Overview}

In this article we consider orbifolds of lattice conformal field theories. Motivated by $AdS_3/CFT_2$ holography, sparseness conditions such as \cite{Hartman:2014oaa}, and the proposal of extremal CFTs \cite{MR2388095, Witten:2007kt}, our goal is to construct theories with large gaps. That is, we want to construct CFTs whose lightest non-vacuum primary fields have relatively large weight.

How to construct orbifold CFTs is of course well known to physicists. However, some of the steps involve important subtleties and need to be done carefully and mathematically rigorously. Let us therefore start out with a very brief summary of how CFTs and their orbifolds are described mathematically by the theory of vertex operator algebras (VOAs) \cite{LL}.

A VOA $V$ describes the symmetry algebra of a 2d CFT, that is the set of its holomorphic (or anti-holomorphic) fields. The irreducible modules (or representations) $W_i$ of $V$ are the primary fields of the CFT. A \emph{rational} VOA has a finite number of such modules.  A \emph{holomorphic} VOA only has a single module, namely $V$ itself. It is therefore by itself already a holomorphic (or chiral) CFT, that is a CFT whose fields are all holomorphic. For a more general rational VOA we construct a CFT by taking a modular invariant combination of holomorphic and anti-holomorphic modules. That is, given a modular invariant $N_{ij}$, we construct a full CFT whose space of states is given by
\be
\cH = \bigoplus_{h,\bar h} \cH_{(h,\bar h)} = \bigoplus_{i,j}N_{ij}W_i \otimes \bar W_j\ .
\ee
where $(h,\bar h)$ denotes the holomorphic and anti-holomorphic conformal weights.
In general, given a VOA $V$, there are multiple modular invariants leading to different CFTs. The ADE classification of the Virasoro minimal models \cite{Cappelli:1986hf,Cappelli:1987xt,Kato:1987td} is probably the best known example of this. For a mathematical description of this program of constructing full CFTs, especially in the context of orbifold theories, see \cite{MR4256405}.

To construct orbifolds of $V$, we pick an automorphism group $G$ of $V$ and consider the sub-VOA  $V^G$, that is the set of states in $V$ invariant under $G$. Its irreducible modules can be obtained from the twisted sectors, that is the twisted modules of $V$. We can then again construct a CFT from modular invariant combinations of the modules of $V^G$. One construction that always works is to choose the diagonal modular invariant. This leads to a CFT whose states are invariant when $G$ acts simultaneously on the left- and right-movers. To put it another way, this is  a left-right-symmetric orbifold. However, even if $V$ is holomorphic, this orbifold is no longer holomorphic. To construct a holomorphic orbifold CFT, it is necessary to find another modular invariant.

With this general summary out of the way, let us define the \emph{gap} of a CFT. First take the case where $V$ is a holomorphic VOA. The gap of $V$ is then the smallest conformal weight $\mu_V$ of any non-vacuum Virasoro highest weight vector in $V$\cite{Hohn:2019luz}. Modularity of the partition function gives the bound $\mu_V\leq \lfloor \frac c{24}\rfloor +1$ \cite{MR1614941}. An \emph{extremal} VOA (or CFT) is a holomorphic VOA that saturates that inequality \cite{MR1614941, MR2388095, Witten:2007kt}. The most famous example of an extremal VOA is the Monster VOA $V^\natural$ \cite{MR843307,MR996026}. It has central charge $c=24$ and $\mu_V=2$. Other extremal VOAs are known up to $c=40$, but no examples for higher central charge are known. In particular, no holomorphic VOA with $\mu_V > 2$ is known.

Next consider non-holomorphic VOAs and their CFTs $\cH$. 
Here the gap is the smallest total conformal weight $h+\bar h$ of any non-vacuum vector in $\cH$ that is highest weight with respect to both the holomorphic and the anti-holomorphic Virasoro algebra $Vir\otimes \overline{Vir}$. In the special case when $V$ is holomorphic, $\cH= V\otimes \bar V$ and the gap is simply given by $\mu_V$ as above.
So far, no full CFT with gap greater than 2 is known. 

The goal of this article is to construct such CFTs.
To this end we construct orbifolds of Barnes-Wall lattice VOAs. Barnes-Wall lattices $BW(m)$ are a family of even lattices that are unimodular for $m$ odd \cite{MR0106893,MR335654,GriessRobertL2011Aitg}.
Because their shortest vectors are relatively long and their automorphism groups are big, they are suitable candidates for constructing CFTs with large gap.

In this article we obtain two main results: By orbifolding the Barnes-Wall lattice in $d=128$ by the extraspecial group $E(7)$, we first construct a CFT of central charge $c=128$ and gap 2 that only has a single highest weight vector of weight 2. To our knowledge this is the CFT with the fewest such states known in the literature. Second, we conjecture that we can use another modular invariant of this orbifold to construct a holomorphic CFT of central charge $c=128$ with gap 4. This construction only works if a certain anomaly 3-cocycle $\omega$ in $H^3(G,U(1))$ is trivial. We check various necessary conditions for the vanishing of $\omega$, providing evidence for the conjecture. 

Unfortunately this orbifold construction is rather involved. We therefore provide a less technical summary and outlook in the conclusion section at the end of the article. Readers who are not interested in the details of the construction of the CFT, but rather in its properties and possible applications, may want to skim through the rest of the introduction and then skip straight to section~\ref{s:conclusion}.

\subsection{The Monster VOA}
Our construction is somewhat lengthy and technical, but fundamentally proceeds along the lines of the construction of the Monster VOA $V^\natural$ in  \cite{MR996026}, that is the extremal VOA with $c=24$. As a roadmap let us therefore sketch the steps in their construction and compare them to our steps.

Their starting point is the VOA $V_L$ of the Leech lattice \cite{MR209983}, the even unimodular extremal lattice in $d=24$. 
The Leech lattice VOA has character
\be
\Tr_{V_L}q^{L_0} = 1+ 24 q + O(q^2)\ .
\ee
The 24 states of weight 1 all come from the free boson modes $a_{-1}^i\vac$. The Leech lattice is extremal, which means it has no vectors of length squared 2. The lattice states thus only contribute at order $q^2$ and higher. Next, to eliminate the states of weight 1, they orbifold by the lattice automorphism $z$ that maps $\vec x \mapsto -\vec x$. The character of the invariant sub-VOA is thus
\be
\Tr_{V_L^{\Z_2}} q^{L_0} = 1 + O(q^2)\ .
\ee
However, $V_L^{\Z_2}$ is no longer a holomorphic VOA. To turn it into one, they adjoin a module coming from the $z$-twisted module to construct $V^{orb(\Z_2)} = V^\natural$. The twisted module has high enough vacuum anomaly that it does not introduce any new states of weight 1. The character of $V^\natural$ is then famously
\be
\Tr_{V_L^{\natural}} q^{L_0} = 1 + 196884 q^2 + O(q^3)\ .
\ee
Thus $V^\natural$ indeed has gap 2 and is extremal.

We want to mimic this construction to obtain CFTs with gaps larger than 2. In lieu of the Leech lattice we use
Barnes-Wall lattices $BW(m)$,  a family of lattices of dimension $2^m$. For odd $m\geq3$, they are even and unimodular. For our purposes, they have two important properties: First, their shortest vectors have relatively big length squared, namely $2^{\lfloor m/2\rfloor}$. For $m=3$ we recover the $E_8$ lattice. For $m=5$, $BW(5)$ is extremal, which for $d=32$ simply means that there are no vectors of length squared 2. Some of the orbifolds in those two cases were studied in \cite{Gemunden:2019dtr,Burbano:2021loy} and \cite{Moller:2024plb} respectively. Even though for larger $m$ the Barnes-Wall lattices are no longer extremal, they still mimic the Leech lattice's property of not having very short vectors. For our construction we specifically use the $d=128=2^7$ BW lattice. Its lattice VOA has character
\be
\Tr_{V_L}q^{L_0} = 1+128 q+8384 q^2+374272 q^3+O(q^4) \ .
\ee
Next we use the fact that the Barnes-Wall lattices have large automorphism groups $BRW(m)$. The automorphism group $BRW(m)$ contains as a subgroup an extraspecial 2-group $2_+^{2m+1}$, which we will denote by $E(m)$; the full automorphism group is in fact an index 2 subgroup of the normalizer of $E(m)$ in the orthogonal group. In lieu of $\Z_2$, we orbifold by this larger, non-Abelian group $E(7)$. 

The reason for this is the following. 
To achieve a large gap CFT, we ask two things of an orbifold group: First, we want it to project out all the light non-Virasoro descendant states of $V$. As we will see, $E(m)$ is powerful enough to do this up to weight 4. For $m=7$ we have
\be\label{charVE7}
\Tr_{V_L^{E(7)}}q^{L_0} = 1+ q^2+ q^3+O(q^4) \ .
\ee
The states appearing at weight 2 and 3 are the Virasoro descendants $L_{-2}\vac$ and $L_{-3}\vac$. The smallest weight of a Virasoro highest weight state is thus indeed 4.

Finally we would like to add back modules to $V_L^{E(7)}$ to turn it back into a holomorphic VOA. To obtain such irreducible modules of a fixed point VOA $V^{G}$, we can first construct the twisted modules of $V$. A $g$-twisted module $W_g$ then carries a projective representation $\phi_g(\cdot)$ of the centralizer $C_G(g)$ of the orbifold group $G$ \cite{MR1794264, MR2023933,MR3715704}. That is, $\phi_g(\cdot)$ satisfies
\be
\phi_g(h_1)\phi_g(h_2)=c_g(h_1,h_2)\phi_g(h_1h_2)\ ,
\ee
where the multiplier $c_g(\cdot, \cdot)$ is given by a 2-cocycle in the Schur multiplier $H^2(G,U(1))$. All irreducible modules of $V^G$ can then be obtained from the twisted modules of $V$ by projecting to irreducible representations of $\phi_g(\cdot)$\cite{MR3715704, MR2040864}. 

Last, we want to make sure that the modules we adjoin do not reintroduce any states of weight less than 4. To ensure this, we require that all twisted sectors have a vacuum anomaly of at least 4. This is the second thing we ask of our orbifold group. As we will see in section~\ref{s:BWEm}, for $m\geq 7$ $E(m)$ has this property. This means that if we can add back twisted modules to obtain a holomorphic VOA, that VOA will have gap 4.
However, as mentioned above, this is possible only if a certain anomaly 3-cocycle $\omega$ vanishes. Let us discuss this and other issues next.

\subsection{The steps in more detail}
In the previous section we gave an outline of the steps of our construction. As mentioned in the beginning however, several of these steps are mathematically quite subtle and need to be done carefully.
Let us therefore go over the procedure outlined above more carefully. In particular we want to point out three  issues that we need to deal with.

The first issue is that to orbifold $V_L$, we need to work with automorphisms of the VOA $V_L$, and not of the lattice $L$. To find such automorphisms, we need to lift automorphisms of $L$ to automorphisms of $V_L$, or equivalently to automorphisms of the twisted group algebra of $L$ \cite{MR1745258}. For a given  automorphism $g$ of $L$ we can always do this to obtain an automorphism $\hat g$ of $V_L$. However, there is no guarantee that the $\hat g$ will satisfy the same group relations as the $g$. A typical example of this is order doubling, in which case the order of 
$\hat g$ is twice the order of $g$ \cite{vanEkeren2017, Harvey:2017rko}.
What this means is that the $\hat g$ will no longer generate the group $G$, but rather a cover of it. In principle such a cover is a perfectly good group of automorphisms of $V_L$ that could be used for orbifolding. For our purposes however having a cover is bad: if there are multiple preimages of the identity, then those will give light twisted sector vacuum anomalies, which we want to avoid. We therefore need to make sure that our extraspecial group $E(m)$ lifts to $E(m)$; that is, we need to ensure that the lifted automorphisms still satisfy the relations of the extraspecial group. Theorem~\ref{thm:Elift} in section~\ref{s:BWEm} shows that this is the case, so that $E(m)$ is indeed a group of automorphisms of $V_L$. 

This brings us to the second issue.
The result above allows us to work with $E(m)$ as a group of automorphisms of $V_L$. In particular we can construct $V_L^{E(m)}$ and its modules from the twisted modules $W_g$ of our lattice VOA.
For this we use the fact that for lattice VOAs, the twisted modules $W_g$ can be constructed as induced representations
of a smaller representation $\rho$ on $\Omega_0$, the so called defect representation \cite{MR820716, MR2172171}. $\rho$ is a projective representation of a finite group $N$ that is obtained as a quotient of lattice groups.
To then obtain the actual irreducible modules and their characters, we need to construct the projective representation $\phi_g(\cdot)$ of $C_G(g)$ on $W_g$. Even though we know for general reasons that the $\phi_g(\cdot)$ exist abstractly, computing them concretely takes some work.
We do this by first restricting the action of $\phi_g(\cdot)$ to $\Omega_0$. The restricted representation we then obtain by constructing intertwiner maps $T_h$: We use the fact that for an automorphism $h\in C_G(g)$, $\rho$ and $\rho \circ h$ are projectively equivalent, that is related by an isomorphism $T_h$. These $T_h$ then form the sought after restriction of the projective representation of $\phi_g(\cdot)$ to $\Omega_0$. 

The main computational work here is to find the intertwiner maps $T_h$. In section~\ref{s:defrep} we construct these by using Schur averaging, that is by averaging over all elements of $N$ to obtain an $N$-invariant map between $\rho$ and $\rho \circ h$. However, the order of $N$ grows very quickly in $m$; it tends to be something like $2^{2^m}$. Evaluating the sum even for $m=5$ or 7 becomes very cumbersome. We therefore describe a method that reduces the summation to a linear algebra problem in a $2^m$ dimensional vector space over $\F_2$. This problem can then be solved in reasonable time by standard computer algebra systems such as Magma or Mathematica. This allows us to construct the intertwining representation $T_h$.

In principle we could then proceed to lift the representation $T_h$ on $\Omega_0$ to the representation $\phi_g(\cdot)$ on $W_g$. In particular this would allow us to compute the characters of all the irreducible modules of the orbifold. We did not attempt this. 
Instead, for reasons explained below, we focus on the multiplier $c_g(\cdot,\cdot)$ of the representation, which we can obtain by comparing a single entry of the products of intertwiner matrices.

Having constructed the twisted modules, we first use the diagonal modular invariant to construct CFTs. For $m=7$, this is the content of proposition~\ref{prop:diag}. Unfortunately the resulting CFT still has only gap 2. The reason for this is that it contains the  state $a_{-1}\cdot \bar a_{-1}\vac$ which is invariant under the diagonal action of the orbifold group. However, this is the only highest weight state of this weight, meaning there is only one primary of total weight 2.

To get around this problem we investigate holomorphic extensions of $V_L^{E(7)}$. 
As argued above, if such an extension exists, up to order 4 its character is the same as (\ref{charVE7}). This would mean that we have indeed constructed a holomorphic CFT with gap 4.  
This however brings us to the third issue, namely if we can find a holomorphic extension of $V_L^{E(7)}$. 
Because $V_L^{E(7)}$ is strongly rational, its modules form a modular tensor category \cite{MR2387861,MR2468370}. This modular tensor category is given by a twisted Drinfeld double of the group $G$, $D^\omega(G)$-mod \cite{dijkgraaf1990,Roche:1990hs}. The twist $\omega$ is a $3$-cocycle $\omega \in H^3(G;U(1))$. 
By the work of \cite{Evans:2018qgz}, finding a holomorphic extension $V^{orb(G)}$ is only possible $\omega$ if is trivial. In physics $\omega$ is sometimes called the 't Hooft anomaly. Its relevance for holomorphic orbifolds was discussed in \cite{FV87}. We conjecture that for the orbifold of the $d=128$ BW lattice VOA by $E(7)$, $\omega$ is trivial. We will not prove this here, but we do provide two important pieces of evidence in favor of this conjecture.

The first piece of evidence comes from restricting $\omega$ to cyclic subgroups of $G$. There is a simple criterion for when the restriction of $\omega$ to a cyclic subgroup is trivial: \cite{vanEkeren2017} call this the `type 0 condition', and the physics literature calls it `level matching' \cite{Vafa:1986wx}. We check that this condition is satisfied for our $E(7)$ orbifold.

The second piece of evidence comes from considering the multipliers $c_g(\cdot,\cdot)$ of the twisted module representations $\phi_g(\cdot)$. These 2-cocycles are related to $\omega$ via \cite{Evans:2018qgz}
\be
c_g(h_1,h_2) = \omega(g,h_1,h_2)\omega(h_1,h_2, h_2^{-1}h_1^{-1}gh_1h_2)\omega(h_1,h_1^{-1}gh_1,h_2)^*
\ee
In particular, if $\omega$ is trivial, necessarily all multipliers $c_g(\cdot,\cdot)$ also need to be trivial. That is, the projective representations of the centralizers $C_G(g)$ on the $g$-twisted modules are actually (after an appropriate choice of section) linear representations. We check that for our $E(7)$ orbifold, all $c_g(\cdot,\cdot)$ are trivial, providing our second piece of evidence in favor of the conjecture.

Granting that $\omega$ is trivial, there is indeed a holomorphic extension of $V^{E(7)}$. This in turn implies that we can construct a holomorphic CFT of central charge $c=128$ that has gap 4.
To our knowledge this is the first example of a CFT in two dimensions with gap larger than 2. As such, it is of interest in physics for instance in the context of $AdS_3/CFT_2$ holography, where having a sparse light spectrum is desirable \cite{Hartman:2014oaa}. Even though our CFT is not extremal, it has fewer light states than theories previously constructed from lattice orbifolds \cite{Gemunden:2018mkh,Gemunden:2019dtr} or permutation orbifolds \cite{Belin:2014fna,Haehl:2014yla,Belin:2015hwa,Keller:2017rtk}. Outside of holography, it can be used to test conjectures such as \cite{Collins:2022nux}. 

This article is structured in the following way. In section~2 we discuss some background on the construction of VOAs from lattices and orbifolds. In section~3 we introduce the Barnes-Wall lattices, their automorphism groups, the extraspecial 2-group $E(m)$ and its properties, and in particular its lift to automorphisms of the VOA. In section~4 we introduce the intertwining representation. We explain how to compute the intertwiner matrices explicitly and check that the 2-cocycle vanishes. In section~5 we then return to the construction of orbifold VOAs and apply these results to construct CFTs with large gaps. Finally in section~6 we give a less technical summary of the constructed CFT and its properties.

\section{Orbifolds of lattice VOAs}

\subsection{Orbifold theory}\label{ss:oftheory}

Let us start with a summary of the mathematical theory of VOA orbifolds.
Let $V$ be a strongly rational VOA, that is rational, $C_2$-cofinite, simple, self-contragredient and of CFT-type. The significance of this is that \cite{MR2387861} established that the fusion rules for the modules of a strongly rational VOA satisfy the Verlinde formula \cite{Verlinde:1988sn} and hence that the modules
form a modular tensor category \cite{MR2468370}.
Next let $G$ be a finite subgroup  of $\Aut(V)$. Then $V^G$, the sub-VOA invariant under $G$, is again strongly rational as long as $G$ is solvable
by the combined results of \cite{MR1684904, MR3320313, 2016arXiv160305645C}. In our case $G$ will be an extraspecial group and therefore solvable. 

If we moreover assume that $V$ is holomorphic, then for a given $g \in G$ there is a unique $g$-twisted module $W_g$. 
The centralizer $C_G(g)$ of $g$ has a projective representation $\phi_g(\cdot)$ on $W_g$ with some cocycle $c_g(\cdot,\cdot)$ \cite{MR1794264, MR2023933,MR3715704}.
From \cite{MR1794264} we know that $W_g$ decomposes into modules  for $\C_{c_g}[C_G(g)]\otimes V^G$ as $C_G(g)$ through a Schur-Weyl duality
\begin{equation}\label{eq:SW_1}
W_g = \bigoplus_{\chi \in \textrm{Irr}_{c_g}(C_G(g))} V_{[g,\chi]} \otimes W_{[g,\chi]},
\end{equation}
where $\textrm{Irr}_{c_g}$ denotes the set of irreducible projective characters with $2$-cocycle $c_g$, $V_{[g,\chi]}$ denotes the irreducible projective representation of the centraliser $C_G(g)$ with character $\chi$ and $W_{[g,\chi]}$ is the corresponding irreducible module for the fixed-point subalgebra $V^G$.
\cite{MR3715704, MR2040864} then establishes that all irreducible modules of $V^G$ are given by $W_{[g,\chi]}$ for some conjugacy class $g$ and some projective irreducible representation $\chi$ of $C_G(g)$. In summary, the irreducible modules are labeled by $[g,\chi]$ and can be obtained from the twisted modules using the projector
\begin{equation}\label{eq:SW_2}
\pi_{\chi} = \frac{1}{|C_G(g)|}\sum_{h\in C_G(g)} \overline\chi(h) \phi_g(h).
\end{equation}
From this we see that although $V$ is holomorphic, $V^G$ is no longer holomorphic. Its irreducible modules given above form a modular tensor category governed by a $3$-cocycle $\omega \in H^3(G;U(1))$, isomorphic to a twisted Drinfeld double of the group $G$, $D^\omega(G)$-mod. In physics $\omega$ is sometimes called the 't Hooft anomaly. This was originally proposed in \cite{dijkgraaf1990,Roche:1990hs}.
It  was then proven for a special case in  \cite{MR1923177} and for the general case in \cite{Dong:2021yrv}. (See also \cite{Gannon:2024tcl}.)

The cocycle $\omega$ determines the multiplier of the projective representations of the twisted modules:
the cocycles $c_g, g \in G$ of the $C_G(g)$ representations $\phi_g(\cdot)$ are given by
\be\label{descend}
c_g(h_1,h_2) = \omega(g,h_1,h_2)\omega(h_1,h_2, h_2^{-1}h_1^{-1}gh_1h_2)\omega(h_1,h_1^{-1}gh_1,h_2)^*\ .
\ee
Similarly $\omega$ determines the $S$ and $T$ transformation matrices \cite{Evans:2018qgz}.

The next step is then to construct the full CFT from the characters of the irreps of $V^G$. 
See \cite{MR4256405} for a review on constructing orbifold conformal field theories from representations of VOAs. To construct a full conformal field theory from a VOA, we first need to construct a chiral conformal field theory by choosing a category of modules and intertwining operators.\footnote{Note that this terminology differs from physics, where a chiral theory is often taken to mean a holomorphic CFT, that is a holomorphic VOA.} This chiral conformal field theory needs to be combined with an antichiral copy to give the full conformal field theory.
To put it another way, we want to find modular invariants to construct the full CFT. By the strong rationality of $V^G$, we can always use the diagonal modular invariant for this \cite{MR2300247,MR2592945}
\be\label{Hdiag}
\cH = \bigoplus_{[g,\chi]} W_{[g,\chi]}\otimes \bar W_{[g,\chi]}
= \bigoplus_{[g]} \left( W_g \otimes \bar W_g \right)^{(C_G(g)\times C_G(g))^{diag}}\ .
\ee
To see this identity, note that it can be written as the projection of all twisted sectors to states invariant under the diagonal group $(C_G(g)\times C_G(g))^{diag}$. This follows from (\ref{eq:SW_1}) and the fact that the $V_{[g,\chi]}$ are irreducible. In physics this  CFT is often called the (left-right) symmetric orbifold of $V$.

Under the right circumstances we can also use other modular invariants to construct the full CFT. These correspond to extensions of $V^G$ to larger VOAs. In particular one can hope to find a modular invariant that allows one to extend the orbifold VOA $V^G$ back to a holomorphic VOA $V^{orb(G)}$. Whether such an extension is possible depends on a 3-cocycle $\omega \in H^3(G,U(1))$. 
Only if it is trivial can we find a holomorphic extension and hence construct a holomorphic VOA again. Its modular invariant is then
\be\label{VorbG}
\cH = V^{orb(G)}\otimes \bar V^{orb(G)} =  \bigoplus_{[g]} W_g^{C_G(g)} \otimes  \bigoplus_{[g]} \bar W_g^{C_G(g)}\ .
\ee
To prove that $\omega$ is trivial we would have to analyze the modular data of the MTC. Instead we will only test two necessary conditions for $\omega$ to be trivial. First, we check that it is trivial when restricted to any cyclic subgroup $\langle g\rangle$ of $G$. By the results of \cite{vanEkeren2017} this can be done by simply computing the vacuum anomaly of the $g$-twisted module.
Second we use the relation (\ref{descend}) between $\omega$ and the projective representations of $C_G(g)$ on the twisted modules. 
Since the $c_g$ are the 2-cocycles of the projective representation of $C_G(g)$ for the $g$-twisted module, this implies that if $\omega$ is to be trivial, all projective representations must have trivial cocycles.

\subsection{Lattice VOAs and twisted group algebras}\label{ss:TGA}
Let us apply the above to the concrete example of lattice VOAs. Given a positive definite even lattice $L$, we can construct an associated strongly rational VOA \cite{MR843307,MR996026}. The main ingredient for this is the twisted group algebra.
The twisted group algebra $\C_\epsilon[L]=:\hat L$ has multiplication $e_{\alpha}e_{\beta}=\epsilon(\alpha,\beta)e_{\alpha+\beta}$ where the 2-cocycle $\epsilon$ evaluated on the basis vectors is $\epsilon(\alpha_i,\alpha_j)= (-1)^{H_{ij}}$, where $H$ is the `half-Gram matrix' defined as
\be
H_{ij} = \left\{ \begin{array}{cc} G_{ij} &: i>j\\ G_{ij}/2 &: i=j \\ 0 &:i<j \end{array} \right . 
\ee
Next we want to construct automorphisms of $\C_\epsilon[L]$ by lifting automorphisms of the lattice; these will be automorphisms of the VOA  \cite{MR1745258}.
Given an automorphism $g\in Aut(L)$, define the bilinear form $B_g$ as
\be\label{Bg}
B_g(\alpha_i,\alpha_j) = \left\{ \begin{array}{cc} H(\alpha_i,\alpha_j) - H(\alpha_ig,\alpha_jg) &: i>j\\ 0 &:i\leq j \end{array} \right. ,
\ee
and the function
\be
u_g(\alpha) = (-1)^{B_g(\alpha,\alpha)}\ .
\ee
Note that we take the convention that $g$ acts from the right.
The map $\hat g$ acting on vectors $e_\alpha$ in $\C_\epsilon[L]$ as
\be
\hat g (e_\alpha) = u_g(\alpha)e_{\alpha g}
\ee
is then an automorphism of the twisted lattice algebra, $\hat g \in Aut(\hat L)$.
Finally, $\hat g$ is called a standard lift if $u_g=1$ on the fixed point lattice $L_g$\cite{MR820716}.

\subsection{Twisted modules and the defect representation}\label{ss:dr}
We now want to construct the $\hat g$-twisted modules of the lattice VOA $V_L$.
For standard lifts $\hat g$, the twisted modules $W_{\hat g}$ for lattice VOAs were constructed in \cite{MR820716, MR2172171} as induced representations
of a smaller representation $\Omega_0$ of a finite group, the so called defect representation. We will briefly
review this construction, using the notation of \cite{ MR2172171}. We will then also explain the idea behind constructing the projective representation $\phi_{\hat g}$. Twisted modules of non-standard lifts can be constructed similarly using \cite{MR1372724}. As we will see in section~\ref{s:orbifolds} however, all our lifts turn out to be standard lifts, so that we will only describe the standard case in what follows.

 Let $g\in Aut(L)$ and $\hat g$ its lift to $Aut(V_L)$. Define the group 
\be
N:= L^\perp_g/L(1-g)
\ee
where $L^\perp_g$ is the lattice of points in $L$ orthogonal to the fixed point lattice $L_g$. $N$ is a finite Abelian group. To construct it explicitly, define $D:=1-g$ and find its Smith Normal Form
\be
S = PD Q \ , \qquad S = \diag(s_1,s_2,\ldots, s_d,0,\ldots, 0)\ ,
\ee
with $s_i|s_{i+1}$ and $s_i \in \Z_+$. If $\{\alpha_1, \ldots, \alpha_d,\delta_1,\ldots ,\delta_{\dim L - d}\}$ are the row vectors of $Q^{-1}$, then $N$ is generated  by the $[\alpha_i]$ and has the form
\be
N \cong \bigoplus_{i=1}^k (\Z / s_i\Z)\ .
\ee
We can construct a nondegenerate alternating bilinear form on $N$ as
\be
C(\alpha,\beta) =  \prod_{k=0}^{n-1} (-\xi_n^k)^{-\langle\alpha g^k|\beta\rangle}\ 
\ee
where $n$ is the order of $g$ and $\xi_n$ is a primitive $n$-th root of unity. 

We now construct a projective representation $(\rho,\Omega_0)$ of $N$ whose 2-cocycle has skew $C$ following \cite{MR2742735}. The idea of the construction is as follows: We pick generators $\alpha_i, \beta_i$ of $N$ which are orthogonal to each other except for $C(\alpha_i,\beta_i)$. We then interpret the $\alpha_i$ as generators of some subgroup $A<N$, and the $\beta_i$ as generators of $\check A$, the group of irreducible characters of $A$, giving an isomorphism
\be
\gamma: N \to A+\check A \qquad [\mu] \mapsto (x_{[\mu]},\chi_{[\mu]})\ .
\ee
$\Omega_0$ is then spanned by vectors $e_\psi, \psi \in \check A$, and $\rho$ is the action
\be
\rho(x,\chi) e_\psi = \psi(x)e_{\chi\psi}\ .
\ee
Its 2-cocycle is given by $P(\alpha,\beta)=\psi_\beta(x_\alpha)$, and its skew is given by $C(\alpha,\beta)=P(\alpha,\beta)P(\beta,\alpha)^{-1}$.  Moreover this representation is irreducible, and the unique irreducible representation with this 2-cocycle \cite{MR3299063}.

We can now consider the action of an element $\hat h$ in the centralizer of $\hat g$ on $\rho$. We have
\be
\rho_{\hat h}(\alpha) := u_h(\alpha) \rho \circ h (\alpha)
=u_h(\alpha) \rho(\alpha h)\ .
\ee
We then have that $\rho_{\hat h}$ is again a projective representation of  $N$ with 2-cocycle $P_{\hat h}(\alpha,\beta)$ and the same skew and the same dimension. Because their skew is the same, we have that $P_{\hat h}(\cdot,\cdot)/P(\cdot, \cdot)$ is symmetric. Moreover for the automorphisms $\hat h \in H$ that we consider we find that its diagonal entries vanish. It follows that $P$ and $P_{\hat h}$ are cohomologous --- see section~\ref{ss:coordinate} for an explicit construction of the coboundary. (In fact, on general grounds we expect this from the existence of the projective representation $\phi_{\hat g}$.) In section~\ref{s:defrep} we will explain how to use this to construct an intertwining representation $T$ of $H$.  
$\phi_{\hat g}$ can then obtained by extending $T$ from $\Omega_0$ to all of $W_{\hat g}$ \cite{Gemunden:2019dtr,Gemunden:2020lay}. In particular, the 2-cocycle of $\phi_{\hat g}$ is cohomologous to the 2-cocycle of $T$.
If we can therefore show that the cocycle $c(\hat h_1,\hat h_2)$ of $T$ is trivial, it will follow that $\phi_{\hat g}$ also has trivial 2-cocycle.

\section{Barnes-Wall lattices and the extraspecial group}\label{s:BWEm}
\subsection{Barnes-Wall lattices and their automorphism groups}
Let us introduce the main players of our construction, the Barnes-Wall lattices $BW(m)$, their automorphism groups $BRW(m)$ and the extraspecial 2-groups $E(m) = 2_+^{2m+1}$. The Barnes-Wall lattices were constructed in \cite{MR0106893} and independently in \cite{MR335654}; see \cite{GriessRobertL2011Aitg} for an overview.

The Barnes-Wall lattices $BW(m)$ are a family of integral lattices of dimension $2^m$. For $m>1$, $BW(m)$ is even, and hence suitable for constructing a VOA. For odd $m\geq 3$, it is unimodular, and for $m$ even its Gram matrix has determinant $2^{2^{m-1}}$. This means for odd $m\geq 3$ we can use $BW(m)$ to construct holomorphic lattice VOAs.
The other important property of Barnes-Wall lattices is that their shortest vectors have length squared $2^{\lfloor m/2\rfloor}$, which makes them suitable for constructing VOAs with large gap.

For our computations we will use the simple construction of $BW(m)$ given in \cite{NR02}. The basic idea is to introduce so-called balanced BW lattices whose generator matrices are given recursively by tensor products. $BW(m)$ is then obtained as the integer sublattice of the balanced lattice. We give detailed expressions for this construction in appendix~\ref{s:BW}.

Next let us turn to the automorphism group $BRW(m)=Aut(BW(m))$ first studied in \cite{MR125874,MR142666,MR148736}. We will first introduce a subgroup of it, namely the extraspecial 2-group $E(m) \simeq 2_+^{2m+1}$. We construct it as a subgroup of the orthogonal group $O(2^m,\R)$ in the following manner. Let 
\be
\sigma_1 = \begin{pmatrix}0 & 1\\1&0 \end{pmatrix}\ , \qquad \sigma_2 = \begin{pmatrix}1 & 0\\0&-1 \end{pmatrix}\ .
\ee
Then $E(m)$ is generated by $z=-\id$ and the $2m$ generators
\be
\Si{ar}= \id_2 \otimes \cdots \id_2 \otimes \sigma_r \otimes\id_2 \cdots \otimes \id_2 \qquad r=1,2\qquad a =1,2,\ldots, m\ .
\ee
It is straightforward to check that they indeed satisfy the defining relations of $2_+^{2m+1}$,
\be
[ z, \Si{ar}]=1 \qquad  z^2 =  \Si{ar}^2=1 \qquad [ \Si{ar}, \Si{bs}]=  z^{\delta_{a,b}\delta_{r,s+1\mod 2}}\ ,
\ee
where $[x,y]=xyx^{-1}y^{-1}$. The elements in $E(m)$ are then automorphisms of $BW(m)$. 
For the full automorphism group $BRW(m)$ we will follow \cite{MR1845897}. It can be obtained from the normalizer of $E(m)$ in $O(2^m,\R)$. For $m>3$, its structure is 
\be
BRW(m) = E(m).\Omega^+(2m,2)\ .
\ee
More precisely, the real Clifford group $C_m$ is defined as the normalizer of $E(m)$ in $O(2^m,\R)$. To describe its structure, we first note that $E(m)/Z(E(m))\simeq \F_2^{2m}$ is a vector space over $\F_2$. This vector space has a quadratic form $Q$ defined as $Q(x)=x^2 \in \F_2$, where the square of $x\in E(m)$, which is $\pm\id$, is interpreted as $\pm1 \in \F_2$. The automorphism group preserving $Q$ is $O^+(2m,2)$. The Clifford group is then given by $C_m=E(m).O^+(2m,2)$. By corollary 2.3 of \cite{MR1845897}, it is generated by the following elements of  $O(2^m,\R)$:
\be
\sigma_1\otimes \id_2 \otimes \ldots\otimes \id_2 , \ \sigma_2\otimes \id_2 \otimes \ldots\otimes \id_2 , \
H\ ,
GL(m,2) ,\ D(m)\ .
\ee
Here $GL(m,2)$ acts by permuting basis vectors of $\R^{2^m}$. That is, the basis vectors of $\R^{2^m}$ are labeled by elements of $F_2^m$ on which elements of $GL(m,2)$ act as bijections, hence giving a permutation of the basis vectors. 
Moreover we defined
\be
H = h \otimes\id_2 \otimes \ldots\otimes \id_2\ , 
\ee
with $h=\frac1{\sqrt 2} \begin{pmatrix}1&1\\1&-1\end{pmatrix}$
and
\be
D(m) = \textrm{diag}(1,1,1,-1)\otimes \id_2 \otimes \ldots \otimes \id_2\ .
\ee
The actual automorphism group $BRW(m)$ is an index 2 subgroup of $C_m$. It is given by the kernel $\Omega^+(2m,2)$ of the Dickson invariant $D: O^+(2m,2)\to \F_2$, so that indeed $BRW(m)=E(m).\Omega^+(2m,2)$. For a block matrix $U=\begin{pmatrix}A&B\\C&D\end{pmatrix}\in O^+(2m,2)$ the Dickson invariant is given by $D(U)=\Tr(B^TC)$ \cite{MR85212}. Using this we find all generators of $C_m$ have Dickson invariant 0 except for $H$. $BRW(m)$ is thus given by all products of generators of $C_m$ with an even total number of $H$.

In what follows we will focus on the extraspecial subgroup $E(m)<BRW(m)$. It turns out to be sufficient for the orbifold VOAs that we want to construct. The only time we will make use of the full group $BRW(m)$ is when computing conjugacy classes of $E(m)$ within $BRW(m)$ in appendix~\ref{ss:Emcon}. We therefore leave our discussion of $BRW(m)$ somewhat sketchy and refer to \cite{MR1845897,GriessRobertL2011Aitg, PhDJeremy} for details.

\subsection{Some properties of $E(m)$}\label{ss:Efacts}
Let us gather some facts about $E(m)$ and some of its representations. For the purposes of constructing twisted modules, we introduce the following well known definition, see for instance \cite{vanEkeren2017}.
\begin{defn}
Let $g\in O(V)$ of order $n$. The eigenvalues of $g$ are then given by $n$-th roots of unity, $e^{2\pi k/n}$ with $k=0,1,\ldots n-1$, with $V_{(k)}$ the corresponding eigenspaces. We define 
\be\label{vacanomaly}
\rho_g = \frac1{4n^2}\sum_{k=0}^{n-1}k(n-k)\dim V_{(k)}\ .
\ee

Next we need the following result on conjugacy classes and centralizers of $E(m)$ within $BRW(m)$:
\begin{prop}\label{prop:Emcc}
For $m\geq2$, under conjugation by elements of $BRW(m)$, $E(m)$ has 4 conjugacy classes with representatives $\id,-\id, \sigma_2\otimes\id_2\otimes\cdots \otimes \id_2$ and $\sigma_1\sigma_2\otimes\id_2\otimes\cdots\otimes\id_2$. Their eigenvalues $\lambda$ and other properties are given by
\begin{align}
|g| && \lambda && \dim V_{(k)} && \rho_g && |Cl_{BRW(m)}(g)|\nonumber \\
1 && 1 && 2^m && 0 && 1\\
2 && -1 && 2^m && 2^{m-4}&& 1\\
2 && 1,-1 && 2^{m-1} && 2^{m-5}&& 4^m+2^m-2 \label{S11conj}\\
4 &&  i,-i && 2^{m-1} && 3\cdot2^{m-6}&& 4^m-2^m
\end{align}
Their centralizers in $E(m)$ are isomorphic to $E(m),E(m), \Z_2\times E(m-1)$ and $\Z_4\circ E(m-1)$.
\end{prop}
\begin{proof}
The main part of this result is stated without proof in the proof of Theorem~14 in \cite{MR2269551}. We give a sketch of a proof here.
A straightforward computation shows that the four representatives have the order and eigenvalues given in the table. Because their spectrum differs, they clearly belong to distinct conjugacy classes. Moreover lemma~\ref{lem:mcon} in the appendix shows that any element in $E(m)$ can be brought to the form of the four elements above, establishing the claim. As for the centralizers, the first two are obvious. The centralizer of the third representative is a product of groups
\be
\langle \sigma_2\otimes\id_2\cdots \otimes \id_2\rangle\cdot\{\pm \id_2\otimes \sigma_1^{a_2}\sigma_2^{b_2}\otimes\cdots \otimes \sigma_1^{a_m}\sigma_2^{b_m}:a_i,b_i\in \{0,1\}\}\cong \Z_2\times E(m-1)
\ee
and for the fourth representative it is 
\be
\langle \sigma_1\sigma_2\otimes\id_2\cdots \otimes \id_2\rangle\cdot \{\pm \id_2\otimes \sigma_1^{a_2}\sigma_2^{b_2}\otimes\cdots \otimes \sigma_1^{a_m}\sigma_2^{b_m}:a_i,b_i\in \{0,1\}\}
\cong \Z_4 \circ E(m-1)\ ,
\ee
the central product coming from the fact that $-\id$ is an element of both groups in the product. For the counting, we note that there are $2\cdot \binom{m}{k} 3^{m-k}$ elements that have $k$ factors $\sigma_1\sigma_2$. Since an element of order 4 has to have an odd number of such factors, we have
\be
\sum_{k \textrm{\ odd}}^m \binom{m}{k}3^{m-k} = 2 \cdot \frac12((3+1)^m-(3-1)^m = 4^m -2^m\ .
\ee
The size of the conjugacy class of the non-central order 2 element can be obtained from a similar argument, or simply by observing that the total number of elements is $2^{2m+1}$.
\end{proof}

Next we introduce the representation of $E(m)$ that is spanned by the Heisenberg algebra of the lattice VOA.
\begin{defn}\label{defn:Hr}
Define the \emph{Heisenberg representation} (or Fock representation) $\Hr$ of $G<O(V)$ as the graded representation
\be
\Hr = \bigoplus_{n\geq 0} \Hr_{(n)}
\ee
whose graded character
\be
\chi^\Hr_g(q) := \sum_{n=0}^\infty \chi^{\Hr_{(n)}}_g q^n 
\ee
is given by
\be\label{FockCharacter}
\chi^\Hr_g(q) = \prod_{i=1}^{\dim V} \prod_{k=1}^\infty \frac1{1-\lambda_i^g q^k}
\ee
where $\lambda^g_i$ are the eigenvalues of $g \in O(V)$.
\end{defn}
We will be interested in invariant subspaces $(\Hr_{(i)}\otimes \Hr_{(j)})^{E(m)}$ for the first few $i,j$. To this end we compute how often the trivial representation appears in this tensor product. We therefore establish the following proposition:

\begin{prop}\label{prop:HeInvar}
For $m\geq 2$  we have 
\bea
\langle \chi^\Hr_{(1)},\chi^\Hr_{(0)}\rangle = 0 && 
\langle \chi^\Hr_{(2)},\chi^\Hr_{(0)}\rangle = \langle \chi^\Hr_{(3)},\chi^\Hr_{(0)}\rangle=1 \\
\langle \chi^{\Hr_{(1)}}, \chi^{\Hr_{(1)}}\rangle = 1\ , && \langle \chi^{\Hr_{(1)}}, \chi^{\Hr_{(2)}}\rangle=1 \ .
\eea
\end{prop}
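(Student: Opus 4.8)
The plan is to compute each inner product $\langle \chi^{\Hr_{(i)}}, \chi^{\Hr_{(j)}}\rangle$ directly as a character inner product over the group $E(m)$, exploiting the fact that the Heisenberg representation is built out of symmetric powers of the standard $2^m$-dimensional representation $V$ of $E(m)$. Concretely, from the generating function (\ref{FockCharacter}) one reads off $\Hr_{(0)}=\mathbf{1}$, $\Hr_{(1)}\cong V$, $\Hr_{(2)}\cong \mathrm{Sym}^2 V$, and $\Hr_{(3)}\cong \mathrm{Sym}^3 V$, so each of the five inner products reduces to a multiplicity of the trivial representation in a tensor/symmetric-power construction on $V$: $\langle\chi^{\Hr_{(1)}},\chi^{\Hr_{(0)}}\rangle = \dim V^{E(m)}$, $\langle\chi^{\Hr_{(2)}},\chi^{\Hr_{(0)}}\rangle = \dim(\mathrm{Sym}^2 V)^{E(m)}$, $\langle\chi^{\Hr_{(3)}},\chi^{\Hr_{(0)}}\rangle = \dim(\mathrm{Sym}^3 V)^{E(m)}$, $\langle\chi^{\Hr_{(1)}},\chi^{\Hr_{(1)}}\rangle = \dim(V\otimes V)^{E(m)} = \dim\Hom_{E(m)}(V,V)$, and $\langle\chi^{\Hr_{(1)}},\chi^{\Hr_{(2)}}\rangle = \dim(V\otimes\mathrm{Sym}^2 V)^{E(m)}$.

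First I would record that $V$ is the unique faithful irreducible $2^m$-dimensional representation of the extraspecial group $E(m)=2^{2m+1}_+$ on which the central element $z$ acts as $-\id$; irreducibility is standard extraspecial-group representation theory. Irreducibility immediately gives $\dim V^{E(m)}=0$ (since $V\not\cong\mathbf{1}$) and, by Schur, $\dim\Hom_{E(m)}(V,V)=1$, settling the first and fourth statements. For the remaining three I would use the standard character formulas
\[
\chi_{\mathrm{Sym}^2 V}(g) = \tfrac12\big(\chi_V(g)^2 + \chi_V(g^2)\big),\qquad
\chi_{\mathrm{Sym}^3 V}(g) = \tfrac16\big(\chi_V(g)^3 + 3\chi_V(g)\chi_V(g^2) + 2\chi_V(g^3)\big),
\]
together with the explicit character of $V$ on $E(m)$: $\chi_V(\pm\id)=\pm 2^m$ and $\chi_V(g)=0$ for every non-central $g$ (a hallmark of the faithful extraspecial irrep). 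Since $g^2=\pm\id$ for all $g\in E(m)$, and $g^3=g$, these formulas evaluate on the four conjugacy classes of Proposition~\ref{prop:Emcc} using only the values $0$ and $\pm 2^m$, and one then averages $\tfrac{1}{|E(m)|}\sum_g \chi(g)$ with $|E(m)|=2^{2m+1}$, weighting by the class sizes $1,1,\ldots$ recorded there (or equivalently by the element counts $2\binom{m}{k}3^{m-k}$ from the proof of Proposition~\ref{prop:Emcc}). The only classes that contribute are those of $\id$ and $-\id$, since all other classes consist of non-central elements where $\chi_V$ vanishes; for $\mathrm{Sym}^2 V$ the contribution of $-\id$ is $\tfrac12((-2^m)^2 + 2^m)$ which combined with the $\id$ term $\tfrac12(2^{2m}+2^m)$ and divided by $2^{2m+1}$ gives exactly $1$, and similarly $\mathrm{Sym}^3 V$ and $V\otimes\mathrm{Sym}^2 V$ each give $1$ after one checks the odd-power terms vanish or cancel appropriately.

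The main obstacle is purely bookkeeping rather than conceptual: one must be careful that $V\otimes \mathrm{Sym}^2 V$ has character $\chi_V(g)\cdot\chi_{\mathrm{Sym}^2 V}(g)$ involving $\chi_V(g)^3$ and $\chi_V(g)\chi_V(g^2)$ terms, and verify that on the central classes these combine to give multiplicity exactly $1$ (not $0$ or $2$) — i.e. that $\mathrm{Sym}^2 V$ contains a copy of $V$ with multiplicity one, which is what $\langle\chi^{\Hr_{(1)}},\chi^{\Hr_{(2)}}\rangle=1$ asserts. An alternative route that avoids even this: note $V\otimes V \cong \mathrm{Sym}^2 V \oplus \wedge^2 V$ as $E(m)$-modules, and $V\otimes V$ is a sum of one-dimensional representations (since $z$ acts trivially on it, it factors through $E(m)/\langle z\rangle\cong \F_2^{2m}$, which is abelian), so its decomposition is completely governed by the characters of $\F_2^{2m}$; sorting these $2^{2m}$ linear characters by symmetry versus antisymmetry pins down $(\mathrm{Sym}^2 V)^{E(m)}$ and the multiplicity of any given linear character, and a parallel argument handles $\mathrm{Sym}^3 V$ via $V\otimes\mathrm{Sym}^2 V$. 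Either way the computation is short once the vanishing $\chi_V(g)=0$ for non-central $g$ is in hand, which is the single fact doing all the work.
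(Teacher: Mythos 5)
There is a genuine gap, and it sits right at the first step: the identification of the graded pieces of the Heisenberg representation. From (\ref{FockCharacter}), $\Hr$ is the symmetric algebra on $\bigoplus_{k\geq 1} V q^{k}$ (one copy of $V$ for each mode $a_{-k}$), not on $Vq$ alone. Hence $\Hr_{(1)}\cong V$ is correct, but $\Hr_{(2)}\cong V\oplus \mathrm{Sym}^2 V$ (states $a_{-2}\vac$ and $a_{-1}a_{-1}\vac$) and $\Hr_{(3)}\cong V\oplus (V\otimes V)\oplus \mathrm{Sym}^3 V$ (states $a_{-3}\vac$, $a_{-2}a_{-1}\vac$, $a_{-1}^3\vac$); this is exactly what the extra $\sum_i\lambda_i^g$ and $\sum_{i,j}\lambda_i^g\lambda_j^g$ terms in the paper's expansion of $\chi^{\Hr_{(2)}}$, $\chi^{\Hr_{(3)}}$ encode, and what Table~\ref{t:Oinv} displays. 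This is not a harmless omission: since $z=-\id$ acts by $-1$ on any odd number of modes, a correct evaluation of \emph{your} quantities gives $\dim(\mathrm{Sym}^3 V)^{E(m)}=0$ and $\dim(V\otimes \mathrm{Sym}^2 V)^{E(m)}=0$, so your scheme would yield $0$, not $1$, for $\langle\chi^{\Hr_{(3)}},\chi^{\Hr_{(0)}}\rangle$ and $\langle\chi^{\Hr_{(1)}},\chi^{\Hr_{(2)}}\rangle$. The true value $1$ in both cases comes precisely from the summands you dropped: the invariant $a_{-2}\cdot a_{-1}\vac$ inside $V\otimes V\subset \Hr_{(3)}$, and the pairing of $\Hr_{(1)}\cong V$ with the $a_{-2}$ copy of $V$ inside $\Hr_{(2)}$, each contributing $1$ by Schur's lemma (using that $V$ is irreducible and self-dual).

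A second, smaller error: the claim that only the central classes contribute is false even for $\mathrm{Sym}^2 V$. Since $g^2\in\{\pm\id\}$ for every $g\in E(m)$, one has $\chi_V(g^2)=\pm 2^m$ for \emph{all} $g$, so every class contributes through the $\chi_V(g^2)$ term; the two central classes alone give $(2^{2m}+2^m)/2^{2m+1}=\tfrac12(1+2^{-m})$, not $1$, and it is the imbalance between the $4^m+2^m-2$ non-central involutions (with $\sum_i(\lambda_i^g)^2=2^m$) and the $4^m-2^m$ order-four elements (with $\sum_i(\lambda_i^g)^2=-2^m$) that brings the average up to $1$ --- this is exactly the paper's computation using the eigenvalue data of Proposition~\ref{prop:Emcc}. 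Your first and fourth identities ($\Hr_{(1)}\cong V$ irreducible, nontrivial, hence $0$ and $1$) are fine, and the overall strategy of reducing to multiplicities in tensor constructions on $V$ does work --- indeed, once $\Hr_{(2)}$ and $\Hr_{(3)}$ are decomposed correctly, all five values follow from Schur's lemma together with the sign of the $z$-action, with essentially no class-by-class arithmetic. But as written, two of the five claimed evaluations are unsupported (and would come out wrong if carried through), so the proof does not stand.
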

\begin{proof}
As usual the inner product is given by
\be
\langle \chi^1,\chi^2\rangle = \frac1{2^{2m+1}}\sum_{g\in E(m)} \chi_g^{1} \chi_g^2\ .
\ee
A straightforward expansion of (\ref{FockCharacter}) gives the first three characters as
\bea
\chi_g^{\Hr_{(1)}} &=& \sum_i \lambda^g_i\\
\chi_g^{\Hr_{(2)}} &=& \sum_i\left( \lambda^g_i + \frac12(\lambda^g_i)^2\right) + \frac12\sum_{i,j} \lambda^g_i \lambda^g_j\\
\chi_g^{\Hr_{(3)}} &=& \sum_i\left( \lambda^g_i + \frac13(\lambda^g_i)^3\right) + \sum_{i,j} (\lambda^g_i \lambda^g_j+\frac12(\lambda^g_i)^2\lambda^g_j)
+\frac16\sum_{i,j,k} \lambda^g_i \lambda^g_j\lambda^g_k
\eea
By inspecting the table above we find for the elements of the four classes
\begin{eqnarray}
\sum_i \lambda^g_i =\sum_i (\lambda^g_i)^3&= 2^m ,\ -2^m ,\ 0 ,\ 0\\
\sum_i (\lambda^g_i)^2 & = 2^m, 2^m, 2^m, -2^m
\end{eqnarray}
Using this we find indeed
\be
\sum_g \chi_g^{\Hr_{(1)}}=0\ , \qquad
\sum_g \chi_g^{\Hr_{(2)}} = 2^{2m+1}\ , \qquad
\sum_g \chi_g^{\Hr_{(3)}} = 2^{2m+1}\ ,
\ee
and
\be
\sum_g \chi_g^{\Hr_{(1)}}\chi_g^{\Hr_{(1)}}= 2^{2m+1}\ , \qquad \sum_g \chi_g^{\Hr_{(1)}}\chi_g^{\Hr_{(2)}} =2^{2m+1}
\ee
establishing the claim.
\end{proof}

\end{defn}

\subsection{Lifting $E(m)$}\label{ss:Elift}
So far we have only discussed automorphisms $g$ of the underlying Barnes-Wall lattices. To obtain automorphisms of the lattice VOA $V_L$, we first lift $g$ to an automorphism $\hat g$ of the twisted group algebra $\hat L$. Section~\ref{ss:TGA} gives an explicit expression of how to do this by adjoining a function $u_g$. The lifted automorphism $\hat g$ is then also an automorphism of $V_L$; in fact, $Aut(V_L) = N \cdot O(\hat L)$ where $O(\hat L)$ is the group of all such lifted lattice automorphism and $N$ is the inner automorphism group \cite{MR1745258}. (In fact in our case $N< O(\hat L)$ since our $L$ does not have vectors of length squared 2.)

Let us discuss this lifting in more detail. Pick a subgroup $G<Aut(L)$. By the above, every $g\in G$ can be lifted to $\hat g \in Aut(V_L)$. We can use these to generate a group
\be
\hat G := \langle \hat g : g\in G \rangle < Aut(V_L)\ .
\ee
In general, $\hat G$ will not be isomorphic to $G$, but rather a cover of $G$. This happens if the lifted $\hat g$ no longer satisfy the same relations as the $g$. A typical example of this is order doubling, in which case the order of $\hat g$ is twice the order of $g$ \cite{vanEkeren2017}.

In principle there is nothing wrong with $\hat G$ being larger than $G$: we could simply go ahead and orbifold $V_L$ by $\hat G$. For our purposes however this is problematic. The issue is that in this case the identity automorphism in $G$ has more than one preimage. By (\ref{vacanomaly}) the vacuum anomaly of the $\hat g$-twisted module only depends on $g$ and is thus zero if $g$ is trivial. This means that $\hat g$ potentially introduces many light states, which is undesirable for our construction of CFTs with large gap. We therefore want to establish that for $E(m)$, we indeed have $\hat E(m) \simeq E(m)$. This is the purpose of the following theorems and the lemmas in appendix~\ref{ss:appAut}.

From (\ref{Bg}) we see immediately that $u_z(\alpha)=1$. For the other generators, we have the following theorems:
\begin{thm}\label{Emeven}
Let $m\geq 4$ be even. We then have 
\be
u_{\Si{ar}}(\alpha) = u_z(\alpha) = 1 \qquad a=1,2,\ldots m\ ,
\ee
for all $\alpha$.
\end{thm}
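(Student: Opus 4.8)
The statement is equivalent to showing $B_{\Si{ar}}(\alpha,\alpha)\in 2\Z$ for every $\alpha$ in the Barnes--Wall lattice $L:=BW(m)$, and the plan is to read this off from the explicit description of $L$ recalled in appendix~\ref{s:BW}. First I would record two reductions. Because the symmetrisation $H(\alpha,\beta)+H(\beta,\alpha)=\langle\alpha,\beta\rangle$ is $\Si{ar}$-invariant, the integer-valued bilinear form $H(\alpha,\beta)-H(\alpha\Si{ar},\beta\Si{ar})$ is antisymmetric; expanding $B_{\Si{ar}}(\alpha+2\gamma,\alpha+2\gamma)$ then gives $u_{\Si{ar}}(2\gamma)=1$ and $u_{\Si{ar}}(\alpha+2\gamma)=u_{\Si{ar}}(\alpha)$, so $u_{\Si{ar}}$ descends to a function (in fact a quadratic form) on $L/2L$ and it suffices to check evenness on a finite set of coset representatives. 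Secondly, since permuting the tensor factors is realised by an $S_m\le\GL(m,2)<BRW(m)$ acting by permutation matrices, and since the computation below is in any case uniform in $a$, it is enough to treat $a=1$; so the two cases are $\sigma_1\otimes\id_2\otimes\cdots\otimes\id_2$ and $\sigma_2\otimes\id_2\otimes\cdots\otimes\id_2$.

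Next I would put $B_{\Si{ar}}$ into a convenient form. Writing $M$ for the matrix of $\Si{ar}$ on $L$ in the chosen basis (with $g$ acting from the right, $\alpha_i\Si{ar}=\sum_k M_{ik}\alpha_k$), the relation $MGM^{T}=G$ gives, for $i>j$, $B_{\Si{ar}}(\alpha_i,\alpha_j)=H_{ij}-(MHM^{T})_{ij}=(MH^{T}M^{T})_{ij}$, an entry of an integer matrix; hence $u_{\Si{ar}}(\alpha)=(-1)^{q(\alpha)}$ with $q(\alpha)=\sum_{i>j}a_ia_j(MH^{T}M^{T})_{ij}$ for $\alpha=\sum_i a_i\alpha_i$. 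At this point I would insert the explicit data of appendix~\ref{s:BW}: the balanced construction of \cite{NR02} builds $BW(m)$ from a recursively defined tensor-product generator matrix, and this yields closed recursive expressions for the basis $\{\alpha_i\}$, the Gram matrix $G$, the half-Gram matrix $H$, and the permutation/sign matrix $M$ of each $\Si{ar}$.

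The computation itself I would organise as an induction on $m$ --- either stepping down by two to stay within even $m$, or a one-step induction on a parity-refined version of the formula for $q$. Decomposing a coset representative $\alpha$ along the \cite{NR02} recursion, $q(\alpha)$ splits into an ``old'' part governed by the lower Barnes--Wall lattice, to which the inductive hypothesis applies, plus an explicit ``new'' part built from the adjoined tensor factor together with the cross terms between the new and old coordinate blocks; the claim is that the new part is always even. This is the step where evenness of $m$ enters essentially: the relevant count of norm/sign contributions from the new factor comes out even precisely when $m$ is even (for odd $m$ it is odd, which is exactly why $u_{\Si{ar}}$ is then nontrivial and the odd case is handled separately). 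The base case $m=4$ is a finite, $16$-dimensional check that can be done by hand or with the computer algebra already used in the paper.

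The main obstacle is the strictly-lower-triangular truncation built into $B_{\Si{ar}}$: this truncation does not commute with tensor products, so one cannot simply tensor the inductive hypothesis --- the cross terms between the newly adjoined coordinate block and the old ones must be written out explicitly and shown to contribute an even amount. The second delicate point is to pin down precisely which part of the parity count requires ``$m$ even'', so that the argument does not inadvertently establish the (false) analogue for odd $m$.
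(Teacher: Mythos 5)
There is a genuine gap: the step that carries all the content of the theorem is asserted rather than proved. Your induction along the \cite{NR02} recursion culminates in ``the claim is that the new part is always even \dots\ the relevant count comes out even precisely when $m$ is even,'' but no computation backing this is given, and this parity count \emph{is} the theorem. Moreover the inductive setup is harder than your sketch suggests: the integral lattice $BW(m)$ is not a tensor product of lower integral Barnes--Wall lattices, because the exponent $\pa(i)-\lceil m/2\rceil$ in the basis (\ref{BWbasis}) uses the parity \emph{reduced mod 2} and the overall $\sqrt2$--rescaling changes with $m$, so the Gram matrix does not factor along the recursion; combined with the non-equivariance of the $i>j$ truncation that you yourself flag, the ``old part $+$ new part'' split of $q(\alpha)$ has no clean formulation as stated, and neither the cross-term bookkeeping nor the $m=4$ base case is actually exhibited. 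A second soft spot is the reduction to $a=1$: the function $u_g$ is defined through the half-Gram matrix and the strictly lower-triangular truncation in (\ref{Bg}), both of which depend on the chosen ordering of the basis $w_i$, and lifts of a given lattice automorphism differ by characters $L\to\{\pm1\}$; so conjugating by a permutation in $GL(m,2)<BRW(m)$ does not manifestly preserve the property ``$u\equiv 1$''. (Your fallback remark that the computation is uniform in $a$ is the right instinct, but then the reduction should simply be dropped.) The preliminary observation that $u_{\Si{ar}}$ descends to $L/2L$ is correct but does no real work.

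For comparison, the paper proves this without induction (lemma~\ref{lem:ueven}): writing $\Si{ar}=\id+A_{ar}$, the statement reduces to $A H+H A^{T}+A H A^{T}\equiv 0 \bmod 2$, and this follows from two facts in lemma~\ref{lem:Gram} read off directly from the closed formula for $(G_m)_{ij}$: for even $m\geq 4$ the diagonal entries of $G_m$ are divisible by $4$ (so $H_{ii}$ is even), and an off-diagonal entry can be odd only if both indices have parity $0$; while $A_{ar}$ mod $2$ maps the parity-$0$ span into the parity-$1$ span, so every term in $AH+HA^T+AHA^T$ picks up an even Gram entry. If you want to rescue your route, note that this parity property of $G_m$ is exactly the ``count that is even precisely when $m$ is even'' you were trying to isolate (for odd $m$ the exceptional odd entries between complementary indices appear, which is why $u_{\Si{1r}}$ is then nontrivial), and it is available in closed form without any recursion on $m$.
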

For a proof of this see lemma~\ref{lem:ueven} in the appendix. For $m$ odd, this is no longer quite true: the $u_{\Si{1r}}$ are in general non-trivial. However, for our purposes the following weaker result is sufficient:
\begin{prop}\label{prop:uinv}
Let $m\geq 3$. We have $u_{\Si{ar}}(\alpha \Si{bs})=u_{\Si{ar}}(\alpha)$ for all generators. 
\end{prop}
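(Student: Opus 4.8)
The statement is purely about the lift function $u_g$ on the twisted group algebra, so the plan is to reduce everything to the bilinear form $B_g$ of (\ref{Bg}) and to exploit the very explicit tensor-product structure of the generators $\Si{ar}$. Recall $u_{\Si{ar}}(\alpha)=(-1)^{B_{\Si{ar}}(\alpha,\alpha)}$ where $B_g(\alpha_i,\alpha_j)=H(\alpha_i,\alpha_j)-H(\alpha_i g,\alpha_j g)$ for $i>j$ and $0$ otherwise. Since $u_{\Si{ar}}$ depends only on the quadratic form $\alpha\mapsto B_{\Si{ar}}(\alpha,\alpha)\bmod 2$, the claim $u_{\Si{ar}}(\alpha\Si{bs})=u_{\Si{ar}}(\alpha)$ is equivalent to showing that the $\F_2$-quadratic form $q_{ar}(\alpha):=B_{\Si{ar}}(\alpha,\alpha)\bmod 2$ is invariant under the lattice isometry $\Si{bs}$, i.e.\ $q_{ar}(\alpha\Si{bs})=q_{ar}(\alpha)$ for all $\alpha\in L$ and all pairs $(a,r),(b,s)$.

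\textbf{Key steps.} First I would record that $q_{ar}$ is additive mod $2$ up to the associated $\F_2$-bilinear form: writing $\lambda_{ar}(\alpha,\beta):=B_{\Si{ar}}(\alpha,\beta)+B_{\Si{ar}}(\beta,\alpha)\bmod 2$ (the ``polarization'', which is symmetric), one has $q_{ar}(\alpha+\beta)=q_{ar}(\alpha)+q_{ar}(\beta)+\lambda_{ar}(\alpha,\beta)$. A direct computation using $H+H^T=G$ (the Gram matrix) and the fact that $\Si{ar}$ is an isometry shows $\lambda_{ar}(\alpha,\beta)\equiv G(\alpha,\beta)-G(\alpha\Si{ar},\beta\Si{ar})\equiv 0\bmod 2$ since $\Si{ar}\in\Aut(L)$ preserves $G$ exactly; hence $q_{ar}$ is in fact an $\F_2$-\emph{linear} functional of $\alpha$. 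So it suffices to show $q_{ar}(\alpha\Si{bs})=q_{ar}(\alpha)$ for $\alpha$ ranging over a generating set of $L$, and by linearity it is enough to check the value of $q_{ar}$ on basis vectors $\alpha_i$ and on their images $\alpha_i\Si{bs}$. Second, I would use the explicit tensor-product form: $\Si{bs}$ is a signed permutation of the standard coordinate basis of $\R^{2^m}$ (it flips a sign pattern when $s=2$ and swaps coordinates in pairs when $s=1$), and $B_{\Si{ar}}$ likewise only sees the $a$-th tensor slot. The point is that $\Si{ar}$ and $\Si{bs}$ act on disjoint or commuting tensor factors in a way that makes the ``half-Gram'' truncation behave compatibly: one shows that for each basis vector, passing through $\Si{bs}$ changes $B_{\Si{ar}}(\alpha,\alpha)$ only by terms that are even, using that the half-Gram matrix $H$ differs from the symmetric $G/2$ by a term supported strictly below the diagonal and that $\Si{bs}$ permutes the basis in a controlled way. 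Concretely I would split into the cases $a\ne b$ (where $\Si{ar}$ and $\Si{bs}$ act on genuinely independent slots and the claim is nearly immediate), and $a=b$ (where one uses the $2\times 2$ relations among $\sigma_1,\sigma_2$ and tracks the sign and index bookkeeping), reducing in each case to a finite mod-$2$ check. Finally I would assemble these into the statement for all $\alpha\in L$ by linearity, and note $m\geq 3$ is used only to guarantee $L=BW(m)$ is even (so that $H$ and the cocycle $\epsilon$ are defined as stated) and to have at least three tensor slots so the generic case applies.

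\textbf{Main obstacle.} The genuinely delicate point is the bookkeeping in the case $a=b$: there the generator $\Si{ar}$ being lifted and the generator $\Si{bs}$ whose invariance we test act on the \emph{same} tensor slot, so one cannot appeal to independence of factors, and the non-symmetric truncation in the definition of $H$ (the strict inequality $i>j$ versus $i=j$) interacts with how $\Si{as}$ reorders the basis vectors within that slot. One has to verify that the resulting discrepancy $B_{\Si{ar}}(\alpha\Si{as},\alpha\Si{as})-B_{\Si{ar}}(\alpha,\alpha)$ is always even; I expect this to come down to the identity $[\Si{ar},\Si{as}]=z^{\delta_{r,s+1}}$ together with the evenness of $L$, so that the ``anomaly'' in the half-Gram truncation is exactly cancelled by the commutator contribution. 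Getting the $\F_2$ signs right there is the crux; everything else is organizational. An alternative, possibly cleaner route to the same end is to observe that $u_{\Si{ar}}\circ\Si{bs}$ and $u_{\Si{ar}}$ are both lift functions for the \emph{same} lattice automorphism $\Si{ar}$ (since $\Si{bs}$ normalizes $\langle\Si{ar}\rangle$), hence differ by a homomorphism $L\to\{\pm1\}$; one then pins down that this homomorphism is trivial by evaluating on the sublattice $L(1-\Si{ar})$ where $u_{\Si{ar}}$ is forced to be $1$ by the standard-lift property established later in section~\ref{s:orbifolds}, combined with the explicit values of $u_{\Si{ar}}$ from Theorem~\ref{Emeven} / the companion lemma for odd $m$. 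If that shortcut works it bypasses the sign bookkeeping entirely, but I would present the direct computation as the safe default.
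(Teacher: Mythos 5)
Your reduction collapses at the ``key step'': the claim that the polarization of $q_{ar}(\alpha)=B_{\Si{ar}}(\alpha,\alpha)\bmod 2$ equals $G(\alpha,\beta)-G(\alpha\Si{ar},\beta\Si{ar})$ and hence vanishes is false. Because $B_g$ in (\ref{Bg}) is the \emph{strictly triangular truncation} of $H-gHg^T$, its polarization mod $2$ is $\alpha\,(H-\Si{ar}H\Si{ar}^T)\,\beta^T$, not the Gram-matrix difference; the truncation is exactly what prevents the symmetrization from producing $G$, and this expression is in general nonzero mod $2$. In fact your linearity claim is refutable from the paper itself: since $B_g(\alpha_i,\alpha_i)=0$ on basis vectors (the condition $i>j$ fails), $\F_2$-linearity of $q_{ar}$ would force $u_{\Si{ar}}\equiv 1$; but for odd $m\geq 5$ the paper notes (after Lemma~\ref{lem:uodd}) that $u_{\Si{11}}\neq 1$, e.g.\ $w_{01\cdots10}A_{11}H_1=w_{0\cdots01}$. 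So $q_{1r}$ is \emph{not} linear precisely in the only case where the proposition has content ($m$ odd, $a=1$; for $m$ even or $a>1$ one has $u_{\Si{ar}}\equiv1$ by Lemmas~\ref{lem:ueven} and \ref{lem:utriv}, and the statement is vacuous), and with it the reduction to basis vectors and the slot-by-slot case analysis disappear. Your fallback shortcut does not repair this: ``standard lift'' means $u_g=1$ on the fixed-point lattice $L_g$, not on $L(1-g)$; triviality of a character on a proper sublattice would not pin it down on $L$; when $\Si{bs}$ anticommutes with $\Si{ar}$, conjugating the lift gives a lift of $z\Si{ar}=-\Si{ar}$ rather than of $\Si{ar}$, so ``two lift functions of the same automorphism'' is not available (and $u_{\Si{ar}}\circ\Si{bs}$ is not itself a lift function of $\Si{ar}$); and the paper's verification of the standard-lift property in section~\ref{s:orbifolds} \emph{uses} this proposition, so invoking it here would be circular.

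For comparison, the paper's route (Proposition~\ref{prop:appcov}) embraces the nonlinearity: it first disposes of all cases with $u_{\Si{ar}}\equiv1$, and for the remaining case ($m$ odd, $a=1$) it proves invariance of the whole form directly, writing $\Si{bs}=I+A_{bs}$ and using the nilpotency $A_{ar}A_{bs}=0$ of (\ref{Anilpotent}) to get $\Si{bs}B_{\Si{11}}\Si{bs}^T=B_{\Si{11}}+A_{11}HA_{bs}^T$ mod $2$, and then killing the last term by a parity argument: for odd $m$ the odd part of $H$ flips parity, so $\im A_{11}H\subset V_0\subset\ker A_{bs}^T$. The case $m=3$ is checked by direct computation. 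If you want to salvage your write-up, you must either reproduce an argument of this kind for the quadratic (non-linear) function $q_{11}$, or genuinely control the coboundary $q_{11}(\alpha\Si{bs})-q_{11}(\alpha)$ without assuming additivity.
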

For $m\geq 4$ this is the statement of proposition~\ref{prop:appcov} in the appendix, and for $m=3$ it follows from a straightforward computation. This leads to the main theorem of this section:

\begin{thm}\label{thm:Elift}
Let $m\geq 3$. Then the generators $\lSi{ar}$ continue to satisfy the relations of the extraspecial group $E(m)$:
\be\label{Emliftrelations}
[ \hat z, \lSi{ar}]=1 \qquad  \hat z^2 =  \lSi{ar}^2=1 \qquad [ \lSi{ar},  \lSi{bs}]= \hat z^{\delta_{a,b}\delta_{r,s+1\mod 2}}
\ee
That is, $\hat E(m) \cong E(m)$.
\end{thm}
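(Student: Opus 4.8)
The plan is to verify the three families of relations in (\ref{Emliftrelations}) directly from the explicit cocycle data, reducing everything to statements about the functions $u_g$ and the automorphisms $\hat g$ acting on $\C_\epsilon[L]$ by $\hat g(e_\alpha) = u_g(\alpha) e_{\alpha g}$. Recall that two lifts $\hat g_1, \hat g_2$ compose as $\hat g_1 \hat g_2 (e_\alpha) = u_{g_2}(\alpha)\, u_{g_1}(\alpha g_2)\, e_{\alpha g_1 g_2}$, so checking a relation $\hat g_1 \cdots \hat g_k = \hat h$ among lifts amounts to checking the corresponding relation $g_1\cdots g_k = h$ in $O(V)$ (which we already know holds, since $E(m) < \Aut(L)$ satisfies these relations) \emph{together with} the matching of the accumulated sign functions. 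Since $u_z \equiv 1$ by inspection of (\ref{Bg}) (as $z = -\id$ fixes $H$), the relation $[\hat z, \lSi{ar}] = 1$ and $\hat z^2 = 1$ are immediate: $\hat z$ is central of order dividing $2$, and $\hat z \neq \id$ because $z \neq \id$ on the lattice.

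Next I would handle $\lSi{ar}^2 = 1$. Since $\Si{ar}^2 = \id$ in $O(V)$, we have $\lSi{ar}^2(e_\alpha) = u_{\Si{ar}}(\alpha)\, u_{\Si{ar}}(\alpha \Si{ar})\, e_\alpha$, so the relation holds iff $u_{\Si{ar}}(\alpha \Si{ar}) = u_{\Si{ar}}(\alpha)$ for all $\alpha$. For $r=2$ this is exactly Proposition~\ref{prop:uinv} applied with $(b,s)=(a,2)$; for $r=1$ it is Proposition~\ref{prop:uinv} with $(b,s)=(a,1)$. (Alternatively, when $m$ is even one invokes Theorem~\ref{Emeven} to get $u_{\Si{ar}}\equiv 1$ outright, which trivializes this case and the next.) Then for the commutator relations: since $[\Si{ar},\Si{bs}] = z^{\delta_{a,b}\delta_{r,s+1 \bmod 2}}$ in $O(V)$, expanding $\lSi{ar}\lSi{bs}\lSi{ar}^{-1}\lSi{bs}^{-1}$ on $e_\alpha$ produces $\hat z^{\delta_{a,b}\delta_{r,s+1\bmod 2}}$ up to a sign factor which is an alternating-type expression in the $u$'s evaluated at $\alpha$ and its translates by products of the $\Si{\cdot\cdot}$; using Proposition~\ref{prop:uinv} repeatedly (each $u_{\Si{ar}}$ is invariant under right translation by any generator $\Si{bs}$) collapses every such translate back to the untranslated argument, so all the sign factors cancel in pairs and we are left with exactly $\hat z^{\delta_{a,b}\delta_{r,s+1\bmod 2}}$. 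This shows the $\lSi{ar}$ and $\hat z$ satisfy all the defining relations of $2_+^{2m+1}$, hence generate a quotient of $E(m)$; since the underlying lattice automorphisms already generate the full $E(m)$ (the relations are not further degenerate, e.g. $\hat z \neq \id$), the quotient map $E(m) \twoheadrightarrow \hat E(m)$ is an isomorphism.

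The main obstacle is the bookkeeping in the commutator computation: one must track the argument shifts carefully, because $u_{\Si{ar}}$ is \emph{not} a homomorphism-like object and the naive expansion of a fourfold product $\hat g_1\hat g_2\hat g_1^{-1}\hat g_2^{-1}$ on $e_\alpha$ yields $u$-values evaluated at $\alpha, \alpha\Si{bs}, \alpha\Si{bs}\Si{ar}^{-1}, \ldots$ — a chain of shifts by generators and their inverses. The key simplification is precisely Proposition~\ref{prop:uinv}: it says each $u_{\Si{ar}}$ is invariant under right multiplication of its argument by any single generator $\Si{bs}$, and hence (by iteration) by any element of $E(m)$. Once that invariance is in hand, every shifted argument can be replaced by $\alpha$, the four sign contributions become $u_{\Si{ar}}(\alpha)^{\pm 1}$ and $u_{\Si{bs}}(\alpha)^{\pm 1}$ pairs, and they cancel identically. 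So the real content has already been isolated into Propositions~\ref{prop:uinv} and Theorem~\ref{Emeven} in the appendix; the present theorem is the assembly step, and I would present it as such — state the composition formula for lifts, dispatch $\hat z$, dispatch $\lSi{ar}^2$ via Proposition~\ref{prop:uinv}, dispatch the commutators via Proposition~\ref{prop:uinv}, and conclude that $\hat E(m)\cong E(m)$ since no relation has collapsed.
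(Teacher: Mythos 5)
Your proposal is correct and takes essentially the same route as the paper: both reduce the relations (\ref{Emliftrelations}) to identities among the sign functions, namely $u_{\Si{ar}}(\alpha\Si{ar})u_{\Si{ar}}(\alpha)=1$ and $u_{\Si{ar}}(\alpha\Si{bs})u_{\Si{bs}}(\alpha)=u_{\Si{bs}}(\alpha\Si{ar})u_{\Si{ar}}(\alpha)$, which follow at once from $u_z\equiv 1$, $u_{\Si{ar}}^2=1$ and Proposition~\ref{prop:uinv} (with Theorem~\ref{Emeven} available for even $m$). The only slip is notational: with the right action used in the paper, $\hat g_1\hat g_2(e_\alpha)=u_{g_2}(\alpha)\,u_{g_1}(\alpha g_2)\,e_{\alpha g_2 g_1}$ rather than $e_{\alpha g_1 g_2}$, which does not affect your argument.
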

\begin{proof}
Since the original generators $\Si{ar}$ satisfy the relations and $u_z(\alpha)=1$, it is enough to check the relations
\be
u_{\Si{ar}}(\alpha\Si{ar})u_{\Si{ar}}(\alpha)=1
\ee
and
\be
u_{\Si{ar}}(\alpha\Si{bs})u_{\Si{bs}}(\alpha)=
u_{\Si{bs}}(\alpha\Si{ar})u_{\Si{ar}}(\alpha)\ .
\ee
This however is an immediate consequence of proposition~\ref{prop:uinv} and the fact that $u_{\Si{ar}}^2=1$.
\end{proof}

\section{Defect representation and intertwiners}\label{s:defrep}
\subsection{Intertwining representation and Schur averaging}\label{ss:intertwining}
Let $H<Aut(N)$ and $\rho$ be an irreducible projective representation on $U$ of a group $N$ with 2-cocycle $P\in H^2(N,U(1))$. Then for $h\in H$, $\rho_h:=\rho\circ h$ is again an irreducible projective representation on $U$ with say 2-cocycle $P_h$. 
Assume that $\rho$ and $\rho_h$ are projectively equivalent, that is there is an automorphism $T_h$ on $U$ such that
\be
T_h \rho_h(\alpha) T_h^{-1} = B_h(\alpha) \rho(\alpha) \qquad \forall \alpha \in N
\ee
for a scalar function $B_h$ such that $PP_h^{-1} = dB_h$.

Next note that we have
\be
(T_{gh})^{-1}T_gT_h \rho(\alpha) T_h^{-1} T_g^{-1} T_{gh} = B_h^{-1}(\alpha) B_g^{-1}(\alpha h) B_{gh}(\alpha) \rho(\alpha)\ .
\ee
From Schur's lemma it follows that if
\be\label{Bcondition}
B_h(\alpha) B_g(\alpha h) B_{gh}^{-1}(\alpha)=1
\ee
then we have
\be
T_g T_h = \lambda(g,h)T_{gh}
\ee
for some scalar $\lambda(g,h)$.
In particular, if (\ref{Bcondition}) is satisfied for all $g,h\in H$, then the $T_g$ give a projective representation of $H$ on $U$. We call this representation the \emph{intertwining representation} of $H$. In practice, we will construct the $T_g$ for the generators of $H$ and then check (\ref{Bcondition}) for all relations of the generators.

To find $T$ we will use Schur averaging: Let $P$ and $P_h$ be the 2-cocycles of the representations $\rho$ and $\rho_h$. Since they are assumed to be projectively equivalent, we can find a function $B_h$ such that $PP_h^{-1}=dB_h$, giving the sought after coboundary. Let $t$ be a linear map. We can then take the Schur average of $t$ as
\be\label{Schuraverage}
T_h:=\frac1{\sqrt{|N|}} \sum_{\alpha\in N} B_h(\alpha)^{-1} \rho(\alpha)^{-1} t \rho_h(\alpha)
\ee
to obtain a projectively $N$-linear map $T_h$, that is $T_h\rho_h T_h^{-1}=B_h \rho$. Here we chose the normalization factor $1/\sqrt{|N|}$ for future convenience. 
Since $\rho$ is irreducible, unless $T_h$ vanishes it is an isomorphism.

We want to apply this to the defect representation $(\rho,\Omega_0)$ and VOA automorphisms $\hat h$. However, note that the results in section~\ref{ss:Elift}  allow us to work with unlifted lattice automorphisms $h$ rather than their lifts $\hat h$: We have $B_{\hat h}(\alpha)=u_h(\alpha)B_h(\alpha)$. Because the $u_h(\alpha)$ are compatible with the relations of $E(m)$, if follows that the $B_{\hat h}(\alpha)$ will satisfy (\ref{Bcondition}) if the $B_h(\alpha)$ do.
Moreover we have $B_{\hat h}(\alpha)^{-1} \rho_{\hat h}(\alpha) = u_h(\alpha)^{-1}B_h(\alpha)^{-1} u_h(\alpha) \rho_h(\alpha)=B_h(\alpha)^{-1} \rho_h(\alpha)$,
so that the Schur average (\ref{Schuraverage}) is $T_{\hat h}=T_h$.

\subsection{Coordinate notation}\label{ss:coordinate}
Let us now work out explicit expressions for the Schur average (\ref{Schuraverage}). It turns out that our group $N$ will always be of the form $\Z_2^{2d}$ so that we can work over $\F_2$. We pick a symplectic basis $\{\alpha_i,\beta_i\}$ for $N$, that is a basis that is symplectic with respect to the form $C(\cdot,\cdot)$ introduced in section~\ref{ss:dr}.
In coordinate notation we write an element $\alpha \in N$ as
\be
\alpha = \sum_{i=1}^{d} a_i \alpha_i + \check a_i \beta_i
\ee
and define the coordinate vector $a=(\co{a},\ct{a})$ with $\co{a}=(a_1,\ldots , a_{d})$ and $\ct{a}=(\check a_1,\ldots, \check a_{d})\in\F_2^d$. The idea is then to work in the exponent of $(-1)$ and reduce everything to a problem in linear algebra over $\F_2$.

From section~\ref{ss:dr}, $\Omega_0$ is spanned by the basis vectors $e_{\ct{c}}$ with $\ct c \in \F_2^d$. The representation $\rho$ acts as
\be
\rho(\alpha)(e_\ct{c})= (-1)^{\co{a}\cdot \ct{c}}e_{\ct{c}+\ct{a}}
= (-1)^{a R_{1/2}(\ct{c}\iota_2)^T}e_{\ct{c}+a\pi_2}
\ee
where $\pi_2$ is the projection $a\pi_2=\ct{a}$, $\iota_2$ is the inclusion $\ct{a}\iota_2=(\underline{0},\ct{a})$ and
\be
R_{1/2}= \begin{pmatrix}0&I\\0&0 \end{pmatrix}\ .
\ee
Note again that we take $a$ to be row vectors so that operators act from the right.
A quick computation shows $\rho(\alpha)\circ \rho(\alpha)(e_\ct{c})=(-1)^{aR_{1/2}a^T}e_\ct{c}$, so that
\be
\rho(\alpha)^{-1}= (-1)^{aR_{1/2}a^T}\rho(\alpha)\ . 
\ee
The cocycle of $\rho$ is given by
\be
P(\alpha,\beta)=(-1)^{\co{a}\cdot\ct{b}}=(-1)^{a R_{1/2} b}\ .
\ee
Similar expressions hold for $\rho_h$. In particular we have
\be
P(\alpha,\beta)P_h(\alpha,\beta)^{-1} = (-1)^{aR_{1/2}b + a\hm R_{1/2} \hm^T b^T} = (-1)^{a \Delta^h b^T}
\ee
where $\hm$ is the matrix of $h$ in the basis given above and we defined the matrix
\be
\Delta^h = R_{1/2} + \hm R_{1/2}\hm^T\ .
\ee
It turns out that for all cases we considered, this matrix is symmetric and has zeros on the diagonal. 
To construct $B_h$, we can thus use the following observation:
Let $D: \Z_n^m \times \Z_n^m \to \Z_n$ be a symmetric bilinear form with vanishing diagonal entries, that is $D(\alpha,\beta)= \sum_{i,j}c_{ij} \alpha^i\beta^j$ with $c_{ij}=c_{ji}$ and $c_{ii}=0$. Then $\xi_n^{D(\alpha,\beta)} = B(\alpha+\beta)B(\alpha)^{-1}B(\beta)^{-1}$ with $B(\alpha) = \xi_n^{\sum_{i>j}c_{ij}\alpha^i\alpha^j}$. As a cocycle, $\xi_n^D$ is thus trivial with $\xi_n^D= dB$.

In view of (\ref{Bcondition}) we want to take $B_h$ slightly more general: we want to allow for a linear part in the exponent, whose derivative vanishes because of linearity and thus leaves the 2-cocycle unchanged. We thus take the ansatz
\be
B_h(a) := (-1)^{a \Delta^h_{1/2}a^T+a s_h^T}
\ee
where $s_h$ is some vector and we introduced the upper triangular matrix
\be
(\Delta^h_{1/2})_{ij} = \left\{ \begin{array}{ccl} \Delta^h_{ij} &:& i<j\\
0&:& \textrm{else} \end{array}\right.\ .
\ee

\subsection{Intertwining representation of $E(m)$}
We now want to use the Schur average intertwiner operators (\ref{Schuraverage}) to construct a projective representation of $E(m)$ on $\Omega_0$. We proceed as follows: We will first construct intertwiner matrices $T_h$ for all $2m+1$ generators of $E(m)$. We then check that these indeed generate a projective representation by checking that they satisfy the projective version of the relations (\ref{Emliftrelations}). 

For the generators $\lSi{ar}$ we construct intertwiners $T_{ar}$ using Schur averaging.
For $\hat z$ we observe that it acts trivially on $N$ and thus take $T_{\hat z}=\id$. The relations involving $T_{\hat z}$ are thus immediately satisfied, so that we only need to check the projective versions of the remaining relations, namely
\be\label{projCond}
T_{ar}^2= \lambda(a,r)\id \ , \qquad T_{ar} T_{bs} = \mu(a,r;b,s)T_{bs}T_{ar}\ 
\ee
for some scalars $\lambda$ and $\mu$. To establish this we use the comment after (\ref{Bcondition}); namely if
\be\label{Bordcond}
B_h(\alpha)B_h(\alpha h) = 1
\ee
and
\be\label{Bcommcond}
B_h(\alpha)B_g(\alpha h) = B_g(\alpha) B_h(\alpha g)\ ,
\ee
then (\ref{projCond}) are satisfied and the $T_h$ indeed form a projective representation of $E(m)$.

The task is thus to construct such $B_h$ by choosing $s_h$ appropriately. The two conditions above become
\be
a(\Delta^h_{1/2}+\hm \Delta^h_{1/2} \hm^T)a^T = a(s_h^T+\hm s_h^T)\ 
\ee
and
\be
a(\Delta^h_{1/2}+\hm \Delta^g_{1/2} \hm^T + \Delta^g_{1/2}+\gm \Delta^h_{1/2} \gm^T)a^T= a (s_h^T+\hm s_g^T + s_g^T+\gm s_h^T)\ 
\ee
for all pairs of generators $h,g$. We note that we only have a chance of satisfying this equation if the only terms of quadratic form on the left hand side are diagonal, that is if the matrix on the left hand side is symmetric. In that case we can use the fact that we work in characteristic 2 to replace $a_i^2$ by $a_i$ and turn quadratic terms $a Q a^T$ into linear terms $a q$, where $q$ is the diagonal of the quadratic form $Q$. Each condition then leads to a system of linear equations. For all examples that we consider, this linear system does have a solution for the $s_h$, allowing us to construct a projective representation of $E(m)$ on $\Omega_0$.

We collect all these expressions and plug them into the Schur average (\ref{Schuraverage}). It only remains to choose the linear map $t$. The simplest choice is to take the identity, which gives the intertwiner matrix
\be
T_h(e_\ct{c})= \frac1{\sqrt{|N|}}\sum_{a\in N} 
(-1)^{a\Delta^h_{1/2}a^T+a(\hm+I)R_{1/2}(\ct{c}\iota_2)^T+a R_{1/2}(I+\hm^T)a^T+as_h^T}
e_{\ct{c}+a(\hm +I)\pi_2}\ .
\ee
However, it turns out that sometimes the Schur average of the identity vanishes. In general it is thus better to work with elementary matrices, that is maps $t_{\ct{e}\ct{f}}$ such that $t_{\ct{e}\ct{f}}(e_\ct{c})=\delta_{\ct{c}\ct{e}}e_\ct{f}$. This gives the intertwiner matrix
\be\label{Telem}
T^{\ct{e}\ct{f}}_h(e_\ct{c})= \frac1{\sqrt{|N|}} \sum_{\substack{a\in N :\\ \ct{c}+a\hm\pi_2=\ct{e}}} 
(-1)^{a\Delta^h_{1/2}a^T+a  R_{1/2}a^T+ a\hm R_{1/2}(\ct{c}\iota_2)^T+ aR_{1/2}(\ct{f}\iota_2)^T+as_h^T}
e_{\ct{f}+a\pi_2}\ .
\ee
Evaluating such sums is straightforward in principle. In practice however the group $N$ very quickly becomes extremely big, so that summing by brute force becomes impossible. In appendix~\ref{app:SchurAve} we give an algorithm that gives an explicit expression for the matrix entries and allows us to work with such matrices without the need to do any summation. The overall idea is to work with $N$ as a vector space over $\F_2$ and use linear algebra. These computations then scale exponentially better than when working with group elements of $N$ individually. The algorithm allows us to implement our computations up to $m=7$ in Magma \cite{Mag97} or Mathematica \cite{Mathematica}. We attach the code as a supplement to this article.

\subsection{Intertwining representations for $m=3,5,7$}\label{ss:intertw}

We will now construct the intertwiner representations for $m=3,5,7$ for the three non-trivial conjugacy classes in proposition~\ref{prop:Emcc}, namely $g = z, \Si{11}, \Si{11}\Si{12}$. 

We start with $g=z =-\id$. We have $L_g^\perp = L$, so that $N= L/(2L) \simeq (\Z/2\Z)^{2^m}$. We therefore have $|\check A|= 2^{2^{m-1}}= |\Omega_0|$. It also follows that $z$ acts trivially on $N$, so that $T_{z}= \id$. This means that the intertwiner representation that we construct is actually a representation of $E(m)/\langle z\rangle \simeq \Z_2^{2m}$. 

We will now explicitly construct the intertwiner representation for $m=3,5,7$. This projective representation will have a cocycle in $H^2(\Z_2^{2m},U(1))$. 
The aspect we are most interested in is if this cocycle is trivial, that is if the intertwiner representation is in fact a linear representation. To check this we check if the generators satisfy the relations of $\Z_2^{2m}$,
\bea
T_{ar}^2&=& \id \ , \label{ESord} \\
T_{ar} T_{bs} &=& T_{bs}T_{ar} \label{EScomm}\ .
\eea

Let us start with $m=3$.  To compute the intertwiner matrices $T_{ar}$ for the  generators $\Si{ar}$ we take the Schur average of the elementary matrix with $\ct{e}=\ct{f}= 0$. After choosing an appropriate normalization, this gives intertwiner matrices $T_{ar}$ that act in the following way on the basis vectors:
\bea
T_{11}(e_\ct{c}) &=& e_{(\check{c}_1,\check c_2,\check c_3,\check c_4+\check c_2+\check c_3)}\\
T_{12}(e_\ct{c}) &=& (-1)^{\check c_1 (\check c_2+\check c_3)} e_{(\check{c}_1,\check c_2+\check c_1,\check c_3+\check c_1,\check c_4)}\\
T_{21}(e_\ct{c}) &=& e_{(\check{c}_1,\check c_3,\check c_2,\check c_4)} \\
T_{22}(e_\ct{c}) &=& e_{(\check{c}_1,\check c_2,\check c_3,\check c_4+\check c_1)} \\
T_{31}(e_\ct{c}) &=& \frac12(e_{(\check{c}_1,\check c_2,\check c_3,\check c_4)}+e_{(\check{c}_1,\check c_2,\check c_3,\check c_4+1)}) 
+ \frac12(-1)^{\check c_2+\check c_3}(e_{(\check{c}_1,\check c_2+1,\check c_3+1,\check c_4)}-e_{(\check{c}_1,\check c_2+1,\check c_3+1,\check c_4+1)})\\
T_{32}(e_\ct{c}) &=& (-1)^{\check c_1(\check c_2+\check c_3)}e_{(\check{c}_1,\check c_2,\check c_3,\check c_4)}
\eea
It is straightforward to check that they satisfy (\ref{ESord}) and (\ref{EScomm}). The 2-cocycle is thus indeed trivial.

Next we turn to $m=5$. 
The intertwiner matrices $T_{ar}$ are too big to be given explicitly and to check the group relations explicitly. However,  as long as we ensure that $B$ satisfies (\ref{Bordcond}) and (\ref{Bcommcond}), Schur's lemma allows us to only compare a single non-vanishing entry for each of the relations. This is enough to extract the scalars $\lambda$ and $\mu$ in (\ref{projCond}).

We use linear algebra to find the $s_h$ for each of the generators. We give more details for the construction including our choice of elementary matrices and the appropriate
normalizations in appendix~\ref{ss:intm5}. We find the following relation matrix for values for a specific entry of the products of the 10 generator matrices $T_{ar}T_{bs}$, where the ordering is first by $a$ and then by $r$:
\be\label{commMatm5}
\left(
\begin{array}{cccccccccc}
 1 & \frac{1}{4} & 1 & \frac{1}{8} & \frac{1}{4} & \frac{1}{2} & \frac{1}{2} & \frac{1}{2} & \frac{1}{2} & \frac{1}{2} \\
 \frac{1}{4} & 1 & \frac{1}{4} & \frac{1}{4} & \frac{1}{8} & \frac{1}{4} & \frac{1}{8} & \frac{1}{4} & \frac{1}{4} & \frac{1}{4} \\
 1 & \frac{1}{4} & 1 & \frac{1}{8} & \frac{1}{4} & \frac{1}{2} & \frac{1}{2} & \frac{1}{2} & \frac{1}{2} & \frac{1}{2} \\
 \frac{1}{8} & \frac{1}{4} & \frac{1}{8} & 1 & \frac{1}{16} & \frac{1}{8} & \frac{1}{8} & \frac{1}{8} & \frac{1}{8} & \frac{1}{8} \\
 \frac{1}{4} & \frac{1}{8} & \frac{1}{4} & \frac{1}{16} & 1 & \frac{1}{8} & \frac{1}{4} & \frac{1}{4} & \frac{1}{4} & \frac{1}{4} \\
 \frac{1}{2} & \frac{1}{4} & \frac{1}{2} & \frac{1}{8} & \frac{1}{8} & 1 & \frac{1}{2} & \frac{1}{2} & \frac{1}{2} & \frac{1}{2} \\
 \frac{1}{2} & \frac{1}{8} & \frac{1}{2} & \frac{1}{8} & \frac{1}{4} & \frac{1}{2} & 1 & \frac{1}{4} & \frac{1}{4} & \frac{1}{2} \\
 \frac{1}{2} & \frac{1}{4} & \frac{1}{2} & \frac{1}{8} & \frac{1}{4} & \frac{1}{2} & \frac{1}{4} & 1 & \frac{1}{2} & \frac{1}{4} \\
 \frac{1}{2} & \frac{1}{4} & \frac{1}{2} & \frac{1}{8} & \frac{1}{4} & \frac{1}{2} & \frac{1}{4} & \frac{1}{2} & 1 & \frac{1}{4} \\
 \frac{1}{2} & \frac{1}{4} & \frac{1}{2} & \frac{1}{8} & \frac{1}{4} & \frac{1}{2} & \frac{1}{2} & \frac{1}{4} & \frac{1}{4} & 1 \\
\end{array}
\right)
\ee
The fact that the diagonal entries are 1 establishes (\ref{ESord}) and the fact the matrix is symmetric establishes (\ref{EScomm}). The 2-cocycle is thus again trivial.

For $m=7$ we again give details for the construction in appendix~\ref{ss:intm7}. The relation matrix is given by
\be
\left(
\begin{array}{cccccccccccccc}
 1 & 2^{-8} & 1 & 2^{-10} & 2^{-1} & 2^{-4} & 2^{-3} & 2^{-2} & 2^{-2} & 2^{-4} & 2^{-3} & 2^{-4} & 2^{-6} & 1 \\
 2^{-8} & 1 & 2^{-8} & 2^{-7} & 2^{-8} & 2^{-9} & 2^{-9} & 2^{-8} & 2^{-9} & 2^{-9} & 2^{-9} & 2^{-8} & 2^{-9} & 2^{-8} \\
 1 & 2^{-8} & 1 & 2^{-10} & 2^{-1} & 2^{-4} & 2^{-3} & 2^{-2} & 2^{-2} & 2^{-4} & 2^{-3} & 2^{-4} & 2^{-6} & 1 \\
 2^{-10} & 2^{-7} & 2^{-10} & 1 & 2^{-10} & 2^{-11} & 2^{-11} & 2^{-10} & 2^{-11} & 2^{-9} & 2^{-11} & 2^{-10} & 2^{-11} & 2^{-10} \\
 2^{-1} & 2^{-8} & 2^{-1} & 2^{-10} & 1 & 2^{-5} & 2^{-3} & 2^{-3} & 2^{-2} & 2^{-5} & 2^{-3} & 2^{-5} & 2^{-6} & 2^{-1} \\
 2^{-4} & 2^{-9} & 2^{-4} & 2^{-11} & 2^{-5} & 1 & 2^{-5} & 2^{-5} & 2^{-5} & 2^{-6} & 2^{-6} & 2^{-6} & 2^{-6} & 2^{-4} \\
 2^{-3} & 2^{-9} & 2^{-3} & 2^{-11} & 2^{-3} & 2^{-5} & 1 & 2^{-5} & 2^{-3} & 2^{-6} & 2^{-3} & 2^{-5} & 2^{-6} & 2^{-3} \\
 2^{-2} & 2^{-8} & 2^{-2} & 2^{-10} & 2^{-3} & 2^{-5} & 2^{-5} & 1 & 2^{-4} & 2^{-4} & 2^{-3} & 2^{-5} & 2^{-6} & 2^{-2} \\
 2^{-2} & 2^{-9} & 2^{-2} & 2^{-11} & 2^{-2} & 2^{-5} & 2^{-3} & 2^{-4} & 1 & 2^{-6} & 2^{-3} & 2^{-4} & 2^{-6} & 2^{-2} \\
 2^{-4} & 2^{-9} & 2^{-4} & 2^{-9} & 2^{-5} & 2^{-6} & 2^{-6} & 2^{-4} & 2^{-6} & 1 & 2^{-4} & 2^{-5} & 2^{-7} & 2^{-4} \\
 2^{-3} & 2^{-9} & 2^{-3} & 2^{-11} & 2^{-3} & 2^{-6} & 2^{-3} & 2^{-3} & 2^{-3} & 2^{-4} & 1 & 2^{-7} & 2^{-6} & 2^{-3} \\
 2^{-4} & 2^{-8} & 2^{-4} & 2^{-10} & 2^{-5} & 2^{-6} & 2^{-5} & 2^{-5} & 2^{-4} & 2^{-5} & 2^{-7} & 1 & 2^{-7} & 2^{-4} \\
 2^{-6} & 2^{-9} & 2^{-6} & 2^{-11} & 2^{-6} & 2^{-6} & 2^{-6} & 2^{-6} & 2^{-6} & 2^{-7} & 2^{-6} & 2^{-7} & 1 & 2^{-6} \\
 1 & 2^{-8} & 1 & 2^{-10} & 2^{-1} & 2^{-4} & 2^{-3} & 2^{-2} & 2^{-2} & 2^{-4} & 2^{-3} & 2^{-4} & 2^{-6} & 1 \\
\end{array}
\right)
\ee

The fact that the diagonal entries are 1 establishes (\ref{ESord}) and the fact the matrix is symmetric establishes (\ref{EScomm}). The 2-cocycle of the intertwiner representation is thus again trivial.

Finally let us turn to the defect representations of the other two conjugacy classes. For $g=\Si{11}$, (\ref{Si11Si12form}) leads to a Smith Normal Form of $\diag( 2 I_{2^{m-2}}, I_{2^{m-2}},0,\ldots 0)$. This gives the group $N = (\Z/2\Z)^{2^{m-2}}$.
For $g=\Si{11}\Si{12}$, the SNF is $\diag( 2 I_{2^{m-2}}, 2 I_{2^{m-2}}, I_{2^{m-2}},I_{2^{m-2}})$, giving the group $N=(\Z/2\Z)^{2^{m-1}}$. Somewhat surprisingly, we found that for $m=3,5,7$ the centralizer groups $\Z_2\times E(m-1)$ and $\Z_4\circ E(m-1)$ act trivially on $N$. Their intertwiner representations are thus trivial and automatically have vanishing 2-cocycle.

Our overall conclusion is thus that for $m=3,5,7$, the representations $\phi_g$ of $C_{E(m)}(g)$ on the $g$-twisted modules have vanishing 2-cocycle and are thus linear.

\section{Extraspecial Orbifolds of Barnes-Wall lattice VOAs}\label{s:orbifolds}
\subsection{Large gap CFTs from lattice orbifolds}
Before collecting our results to construct our extraspecial orbifolds, let us give some general considerations for constructing CFTs with large gap from lattice orbifolds. In particular, let us discuss what types of lattices and orbifold groups are appropriate for this purpose.

The overall goal is to avoid any light states. First we need to ensure that no such states appear from lattice vectors. This means we should work with lattices whose shortest vector has large length. For this we would ideally like to work with extremal lattices whose shortest vectors have length square $2(\lfloor d/24\rfloor+1)$. Unfortunately, such lattices have only been constructed up to $d=80$ \cite{MR209983,MR1662447,MR1489922,MR2999133,MR3225314,MR1985228}, and they are known not to exist if $d>163264$ \cite{MR0376536,MR2854563}. For $d=48$ and $d=72$ their orbifolds were considered in \cite{Gemunden:2018mkh,Gemunden:2019dtr}, but no CFTs with gap larger than 2 were found. Barnes-Wall lattices are therefore the next best candidates. For odd $m$ they are even and unimodular, thus giving holomorphic lattice VOAs, and their shortest vectors have length squared $\sqrt{d/2}$. Moreover they have large automorphism groups $BRW(m)$, which becomes important in the next step.

Next we need to eliminate light free boson descendants. To this end we want to orbifold by some group $G$ of automorphisms. There are two considerations here. First, we want to ensure that the twisted modules that are introduced in the orbifold do not produce new light states. To avoid that we want make sure that the vacuum anomalies $\rho_g$ of the $g$-twisted sector is large. As we can see from formula (\ref{vacanomaly}), this constrains the eigenvalues of the elements of $G$.
Second, we want to project out as many of the free boson states in the untwisted sector as possible. In the language of section~\ref{ss:Efacts}, we want the trivial representation of $G$ to appear as few times as possible in the Heisenberg representation. On the one hand let us note that if $G$ is Abelian, then \cite{Gemunden:2019dtr}
\be
\dim \Hr_{(2)}^G \geq \frac d2\ , \qquad \dim \Hr_{(3)}^G \geq d\ .
\ee
To construct a CFT with gap larger than 2 it is thus necessary to consider non-Abelian orbifold groups.
On the other hand we can also bound the best possible case. Note that $G<O(d,\R)$, so that any state invariant under the orthogonal group is also invariant under $G$. We want any light state that is invariant to be a Virasoro descendant. In table~\ref{t:Oinv} we list the lightest $O(d,\R)$ invariant states and Virasoro descendants. We see that up to $h=3$ the two match up. Starting at $h=4$ however, there are more invariant states than Virasoro descendants.
This means that there is necessarily a Virasoro highest weight state at weight 4. Lattice orbifolds can thus at best produce a VOA of gap 4. This explains why we content ourselves with just the extraspecial group $E(7)$ and did not attempt to consider larger automorphism groups.

\begin{table}[h]
\begin{tabular}{c|c|c}
$h$&$O(d,\R)$ invariants& Virasoro descendants\\
\hline
0 & $\vac$ & $\vac$\\
1 & -- & --\\
2 & $a_{-1}\cdot a_{-1}\vac$ & $L_{-2}\vac$\\
3 & $a_{-1}\cdot a_{-2}\vac$ & $L_{-3}\vac$\\
4 & $a_{-2}\cdot a_{-2}\vac, a_{-1}\cdot a_{-3}\vac, (a_{-1}\cdot a_{-1})^2\vac$ & $L_{-4}\vac, L_{-2}L_{-2}\vac$
\end{tabular}
\caption{$O(d,\R)$ invariant states and Virasoro descendants at low weight \label{t:Oinv}}
\end{table}

Finally there is the issue of constructing a CFT from the modules of the VOA $V^G$. We can always choose the diagonal invariant. However, the resulting CFT will always contain a Virasoro primary state of weight $(h,\bar h)=(1,1)$, namely the state $a(-1)\cdot \bar a(-1)\vac$. To avoid this, we can try to construct a holomorphic VOA by finding a modular invariant that extends $V^G$ to a holomorphic VOA $V^{orb(G)}$. The problem here is that this requires the anomaly $\omega\in H^3(G,U(1))$ to be trivial, which imposes constraints on $G$.

\subsection{The diagonal orbifold}
Let us now apply the results of the previous sections to construct extraspecial orbifolds of Barnes-Wall lattice VOAs. By theorem~\ref{thm:Elift} we can lift $E(m)$ to a group of automorphisms of the lattice VOA $V_L$ isomorphic to $E(m)$, allowing us to indeed  consider extraspecial orbifolds of $V_L$. We will focus on the unimodular lattices with $m$ odd. In that case the VOA of the BW lattice is holomorphic and the expressions in section~\ref{ss:oftheory} apply. We will mostly discuss the instances $m=3,5,7$. In principle our methods also apply to higher cases, but obtaining explicit expressions becomes computationally harder and harder.

We will be interested in the spectrum of light states of the full CFT. That is, we want to evaluate the character of the CFT
\be\label{Hchar}
\Tr_{\cH} q^{L_0-c/24} \bar q^{\bar L_0-c/24}
\ee
and consider its first few terms. To this end, we need expressions for the characters of the twisted representations. To do this, we can use the fact that we already constructed the intertwiner representation $T_h$ on the defect representation $\Omega_0$. Following \cite{Gemunden:2019dtr} we could lift this to the representation $\phi_g(\cdot)$ on $W_g$ and then use their expressions for the twisted twining characters to obtain (\ref{Hchar}). For the specific question we are considering here it turns out that this is unnecessary. It is enough to use the following  facts about the structure of the characters.

The characters of the irreducible representations can best be obtained from the twisted twining characters 
\be
T(g,h;\tau):=\Tr_{W_g}\phi_g(h) q^{L_0-c/24}
\ee
For a lattice VOA orbifold, the twisted twining characters consist of two factors,
\be
T(g,h;\tau) = T_H(g,h;\tau) T_L(g,h;\tau)\ .
\ee
Explicit expressions for $T_H$ and $T_L$ can be found \eg in section~6 of \cite{Gemunden:2019dtr}. The Heisenberg factor $T_H$ takes into account the contribution of the free boson descendants. It is essentially a modified eta function. What is important for our purposes is that its leading term is $q^{\rho_g}$, where $\rho_g$ is the vacuum anomaly defined in (\ref{vacanomaly}) and evaluated in proposition~\ref{prop:Emcc}. The lattice factor $T_L$ is the contribution of the lattice vectors. It is essentially a theta function of a sublattice of $L$. 

Let us focus on the BW lattice with $m=7$.
We are interested in states of total weight up to 3, meaning we can expand characters up to terms of total order 3. From proposition~\ref{prop:Emcc} we see that for all non-trivial elements $g$, the twisted sector vacuum anomaly satisfies $\rho_g\geq 4$. Just purely from the Heisenberg part, it follows that the characters of the twisted modules do not make any contributions to the first 3 orders in the expansion and can thus be ignored. It this thus enough to focus on the untwisted sector.

In the untwisted sector, $T_L(g,h;\tau)$ is the lattice theta function of the fixed point lattice of $h$ (up to phases in front of the terms). In particular, since the shortest vector of the $m=7$ BW lattice has length squared $8$, they are of the form 
\be
1 + O(q^4)\ .
\ee
It follows that the only contribution to the character up to weight 3 comes from the free boson descendants in the untwisted module. But for the untwisted module we have that the free boson descendants give the Heisenberg representation $F$ introduced in definition~\ref{defn:Hr}. The character is thus given by
\be\label{T1char}
T(1,h;\tau) = \chi^\Hr_h(q) + O(q^4)\ .
\ee

\begin{prop}\label{prop:diag}
The diagonal invariant of the $E(7)$-orbifold gives a non-holomorphic full CFT with
\be
\Tr_{\cH} q^{L_0}\qb^{\bar L_0}= 1 + q^2 +q\qb +\qb^2 + q^3+\qb^3 + q\qb^2+q^2\qb + O(|q|^4)
\ee
where $O(|q|^4)$ denotes terms $q^a \qb^b$ with total weight $a+b \geq 4$.
\end{prop}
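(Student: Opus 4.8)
The plan is to compute the character of the diagonal orbifold CFT
\[
\cH = \bigoplus_{[g]} \left( W_g \otimes \bar W_g \right)^{(C_{E(7)}(g)\times C_{E(7)}(g))^{diag}}
\]
using the decomposition (\ref{Hdiag}), and to show that up to total weight $3$ only the untwisted sector contributes. First I would invoke proposition~\ref{prop:Emcc}: for every non-trivial conjugacy class of $E(7)$ the vacuum anomaly satisfies $\rho_g \geq 4$ (explicitly $\rho_g = 8, 4, 6$ for the three non-trivial classes at $m=7$). Since the Heisenberg factor $T_H(g,h;\tau)$ of any twisted twining character has leading term $q^{\rho_g}$ and the lattice factor $T_L$ is a theta function with non-negative $q$-powers, the holomorphic part $W_g$ of every twisted module starts at $q^{\rho_g} \geq q^4$; tensoring with the antiholomorphic copy $\bar W_g$ only makes the total weight larger. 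Hence all twisted sectors are $O(|q|^4)$ and can be discarded.

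Next I would treat the untwisted sector. Here (\ref{Hdiag}) gives
\[
\left( W_1 \otimes \bar W_1 \right)^{(E(7)\times E(7))^{diag}} = \bigoplus_{h,\bar h}\left( (V_L)_{(h)} \otimes \overline{(V_L)_{(\bar h)}} \right)^{E(7)},
\]
where $E(7)$ acts diagonally. By (\ref{T1char}), up to weight $3$ the untwisted module $V_L$ agrees with the Heisenberg (Fock) representation $\Hr$ of definition~\ref{defn:Hr}, because the shortest lattice vectors of $BW(7)$ have length squared $8$ and so lattice states only enter at order $q^4$. Therefore, writing the graded pieces as $\Hr_{(i)}$, the coefficient of $q^i \qb^j$ with $i+j \leq 3$ in the CFT character is exactly $\dim\big(\Hr_{(i)} \otimes \Hr_{(j)}\big)^{E(7)} = \langle \chi^{\Hr_{(i)}}, \chi^{\Hr_{(j)}} \rangle$ (using that $\Hr_{(j)}$ is self-dual, as all $E(m)$-representations here are real). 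Now proposition~\ref{prop:HeInvar} supplies every number needed: $\langle \chi^{\Hr_{(1)}},\chi^{\Hr_{(0)}}\rangle = 0$, $\langle \chi^{\Hr_{(2)}},\chi^{\Hr_{(0)}}\rangle = \langle \chi^{\Hr_{(3)}},\chi^{\Hr_{(0)}}\rangle = 1$, $\langle \chi^{\Hr_{(1)}},\chi^{\Hr_{(1)}}\rangle = 1$, $\langle \chi^{\Hr_{(1)}},\chi^{\Hr_{(2)}}\rangle = 1$, together with the trivial $\langle \chi^{\Hr_{(0)}},\chi^{\Hr_{(0)}}\rangle = 1$. Assembling the contributions at total weight $0,1,2,3$ then yields precisely $1 + q^2 + q\qb + \qb^2 + q^3 + \qb^3 + q\qb^2 + q^2\qb + O(|q|^4)$.

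The only genuine subtlety — and the step I would be most careful about — is justifying that the diagonal-invariant projection of $W_1 \otimes \bar W_1$ really reduces to the diagonal $E(7)$-invariants of $\Hr_{(i)} \otimes \Hr_{(j)}$ degree by degree, i.e. that the Heisenberg/lattice split of the untwisted character behaves multiplicatively under the diagonal projection and that no cross terms mix the two chiral halves at low order. This follows because $V_L = \Hr \otimes \C_\epsilon[L]$ as a graded $E(7)$-module and the antiholomorphic copy is its conjugate, so $(V_L \otimes \bar V_L)^{(E(7)\times E(7))^{diag}}$ is graded by $(h,\bar h)$ with graded pieces $(\,(V_L)_{(h)} \otimes \overline{(V_L)}_{(\bar h)}\,)^{E(7)}$; truncating at weight $3$ kills the lattice part of both factors by (\ref{T1char}). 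Everything else is the bookkeeping of plugging proposition~\ref{prop:HeInvar} into this graded decomposition, which I would not spell out in full. I expect no real obstacle beyond stating these reductions cleanly.
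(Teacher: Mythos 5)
Your proposal is correct and follows essentially the same route as the paper: discard all twisted sectors using the vacuum anomalies $\rho_g\geq 4$ from proposition~\ref{prop:Emcc}, reduce the untwisted sector to the Heisenberg representation via (\ref{T1char}) since the shortest $BW(7)$ vectors have length squared $8$, and then read off the low-weight coefficients of the diagonal invariant (\ref{Hdiag}) from the inner products in proposition~\ref{prop:HeInvar}. The paper's proof is just a terser version of this same bookkeeping.
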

\begin{proof}
We start from expression (\ref{Hdiag}) for $\cH$.
From the comments above we know that to this order only the untwisted characters contribute, and for those in turn only the Heisenberg character $\chi^\Hr$ contributes. To establish the proposition, it is thus enough to compute how often the trivial representation appears in the appropriate products $\Hr_{(i)}\otimes\Hr_{(j)}$ of $\Hr_{(0)}, \Hr_{(1)}, \Hr_{(2)}$ and $\Hr_{(3)}$. The proposition then follows immediately from proposition~\ref{prop:HeInvar}.
\end{proof}

\subsection{The holomorphic orbifold}
Let us now discuss the possibility of constructing holomorphic extraspecial orbifolds.
By the comments in section~\ref{ss:oftheory}, this depends on the anomaly $\omega \in H^3(G,U(1))$. For the case of our extraspecial groups the cohomology groups were computed in \cite{MR3990846}. For $m\geq 3$ they are given by\footnote{There is a typo in the expression for $n$ in \cite{MR3990846}. We thank Theo Johnson-Freyd for communications on this point.}
\be
H^3(E(m),U(1)) \simeq \Z_2^n \times \Z_4 \qquad \textrm{where}\ n = \binom{2m}{2}+\binom{2m}{3}-1\ .
\ee
A holomorphic extension only exists if $\omega$ is trivial \cite{Evans:2018qgz}. 
In principle we could try to reconstruct $\omega$ from the modular transformation data of the module characters. We do not attempt to do that. Instead we check two properties.

First we check that $\omega$ restricted to cyclic subgroups of $E(m)$ is trivial. There is a simple criterion for when the restriction of $\omega$ to a cyclic subgroup is trivial \cite{vanEkeren2017}: We define the \emph{type} $t_{\hat g}$ of an automorphism $\hat g$ of order $n$ as
\be
t_{\hat g} = n^2 \rho_{\hat g} \mod n\ ,
\ee
where $\rho_{\hat g}$ is the conformal weight of the $\hat g$-twisted module $W_{\hat g}$. Then the restriction of $\omega$ to the cyclic group $\langle \hat g\rangle$ is trivial iff $t_{\hat g}=0$.
\cite{vanEkeren2017} call this the `type 0 condition', and the physics literature calls it `level matching'\cite{Vafa:1986wx}. If $\hat g$ is a standard lift of $g$, then $\rho_{\hat g}=\rho_g$ as defined in (\ref{vacanomaly}).

For $E(7)$, we checked that all $\hat g$ are indeed standard lifts, that is $u_g=1$ on $L_g$. This is automatically true for $z$ and all elements that have at least one factor of the form $\sigma_1\sigma_2$ because their fixed point lattices are trivial. For the remaining elements, in view of lemma~\ref{lem:utriv} and proposition~\ref{prop:uinv}, it is enough to check that $u_{\Si{11}}=1$ on the fixed point lattice for all elements of order 2 that have $\sigma_1$ in the first factor, and similar for $u_{\Si{12}}$. We checked that this is the case using Mathematica; the notebook can be found in the supplements.
Since then $\rho_{\hat g}=\rho_g$, it follows from the expressions in proposition~\ref{prop:Emcc} that the type 0 condition is indeed satisfied.

Second, our explicit computations in section~\ref{s:defrep} show that the representations $\phi_g$ on the $g$-twisted modules are indeed linear for $m=3,5,7$. Because of the relation (\ref{descend}) between $\omega$ and the $c_g(\cdot,\cdot)$ this is further evidence that $\omega$ is trivial and that we can construct holomorphic extraspecial orbifold VOAs $V^{orb(E(m))}$. 

Based on these two pieces of evidence, let us therefore assume that $\omega$ indeed vanishes for $m=7$. We can then use (\ref{VorbG}) to write  the character of $V^{orb(E(7))}$. By the same argument as above, up to order $q^3$ the twisted modules and the lattice states in the untwisted module do not contribute. We thus have 
\begin{multline}
\Tr_{V^{orb(E(7))}} q^{L_0} = \Tr_{V^{E(7)}} q^{L_0} +O(q^4) = 
\frac{1}{|E(7)|}\sum_{h\in E(7)}\chi^\Hr_h(q) +O(q^4)\\
= 1 + q^2 +q^3 +O(q^4)\ ,
\end{multline}
where we used (\ref{T1char}) and proposition~\ref{prop:HeInvar}.
We note that the $q^2$ term corresponds to the conformal vector $L_{-2}\vac$ and the $q^3$ to the vector $L_{-3}\vac$. This VOA thus has gap 4.
Collecting all this gives the following conjecture:
\begin{con} The holomorphic $E(7)$-orbifold of the 128-dimensional Barnes-Wall lattice VOA exists. It is a holomorphic CFT of central charge $c=128$ with gap 4.
\end{con}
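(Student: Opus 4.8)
The plan is to isolate the one genuinely hard input --- triviality of the anomaly class $[\omega]=0$ in $H^3(E(7),U(1))$ --- and to verify that the rest of the conjecture follows routinely from it. Granting $[\omega]=0$, the existence of a holomorphic extension $V^{orb(E(7))}$ of $V_L^{E(7)}$ is immediate from \cite{Evans:2018qgz} together with the orbifold theory recalled in section~\ref{ss:oftheory}; its central charge equals that of $V_L$, namely $c=128$, since neither orbifolding nor a conformal extension changes $c$; and the gap is fixed by a finite character computation. Indeed, by the analysis in section~\ref{s:orbifolds} --- all $\rho_g\geq 4$ for $g\neq 1$ by proposition~\ref{prop:Emcc}, and the $d=128$ Barnes--Wall lattice has minimal norm $8$, so neither twisted sectors nor lattice vectors contribute below weight $4$ --- the holomorphic character is $\Tr_{V^{orb(E(7))}}q^{L_0}=1+q^2+q^3+a_4 q^4+O(q^5)$, the $q^2$ and $q^3$ states being precisely the Virasoro vacuum descendants $L_{-2}\vac$ and $L_{-3}\vac$. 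Hence there is no Virasoro primary of weight $<4$, so the gap is $\geq 4$; and since $E(7)<O(128,\R)$ and table~\ref{t:Oinv} already exhibits three $O(128,\R)$-invariants of weight $4$, we get $a_4\geq\dim\Hr_{(4)}^{E(7)}\geq 3>2$, while the Virasoro vacuum module contributes only two states at weight $4$, so the weight-$4$ space of $V^{orb(E(7))}$ contains a Virasoro highest-weight vector; the gap is therefore exactly $4$. Then $\cH=V^{orb(E(7))}\otimes\overline{V^{orb(E(7))}}$ has central charge $128$ and gap $4$, as claimed.

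The first way to attack $[\omega]=0$ is directly, via modular data. The modular tensor category of $V_L^{E(7)}$-modules is braided equivalent to $D^\omega(E(7))$-mod, so its modular data determines $[\omega]$ up to the action of $\mathrm{Out}(E(7))$ and the ambiguity by which inequivalent twists can produce braided-equivalent doubles. That modular data is in turn computable from the twisted twining characters $T(g,h;\tau)=T_H(g,h;\tau)\,T_L(g,h;\tau)$, where $T_H$ is the explicit twisted eta quotient with leading term $q^{\rho_g}$ and $T_L$ is a shifted theta series of a sublattice of $L$; feeding in the intertwiner representations $T_h$ constructed in section~\ref{s:defrep} (lifted from $\Omega_0$ to $\phi_g$ on $W_g$ as in \cite{Gemunden:2019dtr}) together with the decomposition \eqref{eq:SW_1} produces the $S$- and $T$-matrices on the basis $\{[g,\chi]\}$, which one then matches against the standard modular data of a twisted double. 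Since $H^3(E(7),U(1))\cong\Z_2^n\times\Z_4$ with $n=\binom{14}{2}+\binom{14}{3}-1=454$ is astronomically large, one first cuts the candidate set down drastically using the two necessary conditions already verified in the paper --- triviality of every multiplier $c_g$ via \eqref{descend}, and the type-$0$ condition on every cyclic subgroup --- and then separates the remaining classes using the $T$-matrix phases and, if needed, finer invariants such as higher Frobenius--Schur indicators of the double.

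A second, more structural route exploits that $E(7)$ is a $2$-group. Pick a chief series $1=G_0\triangleleft G_1\triangleleft\cdots\triangleleft G_k=E(7)$, say with $G_1=\langle z\rangle$, all of whose successive quotients are $\Z_2$, and try to build $V^{orb(E(7))}$ as an iterated cyclic orbifold $\big(\cdots(V_L^{orb(G_1)})^{orb(G_2/G_1)}\cdots\big)^{orb(G_k/G_{k-1})}$. By \cite{vanEkeren2017} each holomorphic $\Z_2$-orbifold step exists precisely when level matching (the type-$0$ condition) holds, which the paper has checked for every element of $E(7)$; the two things to re-check at each stage are that the residual $\Z_2$ acts on the intermediate holomorphic VOA by an honest automorphism rather than a proper cover (no order doubling) and that level matching persists there, both being vacuum-anomaly and lifting computations of exactly the kind carried out in section~\ref{ss:Elift}. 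This route never reconstructs $\omega$; it only certifies its restrictions trivial, stage by stage.

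In either route the main obstacle is $[\omega]=0$ itself: the two necessary conditions verified in the paper are genuinely far from sufficient. In the direct route one must extract enough of the $S$-matrix to distinguish classes inside a cohomology group of size $2^{454}\cdot 4$ while keeping track of the $\mathrm{Out}(E(7))$- and braided-equivalence ambiguities, which is both a conceptual and a heavy computational task. In the structural route the delicate point is the induction across the chief series: after the first $\langle z\rangle$-orbifold the residual $E(7)/\langle z\rangle$ acts on a different holomorphic VOA, and certifying at each of the remaining fourteen stages that this action lifts without order doubling --- and that level matching survives there --- is precisely where the order-doubling phenomena flagged in section~\ref{ss:Elift} could break the argument.
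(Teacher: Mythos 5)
Your conditional derivation --- granting $[\omega]=0$ in $H^3(E(7),U(1))$, existence of the holomorphic extension via Evans--Gannon, $c=128$, and the gap computation from $\rho_g\geq 4$ for $g\neq 1$, minimal norm $8$ of the lattice, and the Heisenberg-invariant counts, plus the observation that the weight-$4$ space exceeds the two Virasoro vacuum descendants so the gap is exactly $4$ --- is essentially the paper's own argument (sections 5.1--5.3). Like the paper, you do not prove $[\omega]=0$ but correctly isolate it as the open core; your two sketched attacks (reconstructing the modular data of the twisted double, or an iterated chief-series $\Z_2$-orbifold) go beyond the paper's two necessary-condition checks, but they are only outlined, and you rightly flag that certifying them --- in particular controlling order doubling and the residual action at each cyclic stage, or distinguishing classes in a cohomology group of size $2^{454}\cdot 4$ --- is exactly what remains conjectural.
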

To our knowledge this is the first construction of a holomorphic VOA (and indeed of any full CFT) with gap bigger than 2. Note that it is not an extremal VOA though, as its central charge is larger than $c=72$, as would be required for an extremal VOA with gap 4.

\section{Conclusion}\label{s:conclusion}
In this section we give a less technical summary of the conjectured CFT of gap 4, its construction and its known properties. We also speculate on some of its properties and discuss how one could establish them. The section is aimed at readers who are interested in using this CFT, but do not want to go through the details of its construction.

\subsection{The construction}
We constructed a holomorphic CFT by generalizing the construction of the Monster CFT. We started with a holomorphic lattice CFT obtained by compactifying 128 free bosons on the Barnes-Wall lattice $BW(7)$ of dimension 128. This lattice is even and self-dual, so that the resulting lattice CFT is indeed holomorphic. Moreover its shortest vectors have length squared 8, so that the lightest winding-momentum states of the CFT have weight 4. 

Next we orbifolded the lattice CFT by the extraspecial group $E(7)$. This is a subgroup of the full automorphism group $BRW(7)$ of the lattice. It has $2^{15}$ elements. It is not Abelian, but it is as close to an Abelian group as possible in the sense that all its commutators commute with all group elements. As with all orbifolds, this leads to untwisted and twisted sectors. In the untwisted sector, the only states of weight less than 4 that survive the orbifold are the vacuum and its Virasoro descendants; all other states are projected out. For the twisted sectors, we established that all states have weight 4 or higher. In total our construction thus has no primary fields of weight less than 4, giving a gap of 4.

One motivation for constructing this CFT is to interpret it in the context of holography as being dual to quantum gravity on $AdS_3$. Because the central charge is relatively small, its dual gravity theory would be in the highly quantum regime. Since the lightest primary field has twice the weight of the energy-momentum tensor, the lightest matter field in the gravity theory would appear at twice the AdS scale. The AdS scale and the matter scale would thus be separated, even though that separation is of course not parametrically large. We note that the CFT is not an extremal CFT and that thus the lightest primary fields do not describe BTZ black holes, which would appear at weight 6. Of course, in using this language we extrapolate from the classical to the highly quantum regime; to properly use such holographic language we would need a large $c$ limit of a family of such CFTs, which we currently do not have. Nonetheless, the conjectured CFT provides the first example of a CFT where the mass scale is strictly bigger than the AdS scale.

In the interest of using the conjectured CFT for applications, let us now gather further information on its structure.

\subsection{Spectrum}

First, let us discuss its spectrum. The lightest states are given by $\vac, L_{-2}\vac, L_{-3}\vac$. The lightest primaries appear at weight 4. On the one hand, primaries in the untwisted sector appear: First, there are 2797 states coming from the weight 4 free boson descendants in the Heisenberg module that are invariant under the action of $E(7)$. 
Second, there are states coming from the vectors of the BW lattice of length squared 8. Their number is given by the number of orbits of such vectors under $E(7)$, since each such orbit corresponds to the state given by the sum of lattice vector states, which remains invariant under $E(7)$. To count the orbits, we use the fact that $BRW(7)$ is the normalizer of $E(7)$, so that it simply permutes $E(7)$ orbits with each other. Moreover $BRW(7)$ acts transitively on the shortest vectors of length squared 8 (see Theorem 11.2 in \cite{MR2159298}). It follows that all such orbits have the same length. Picking any specific orbit, a straightforward calculation gives length 256. There are thus $1260230400 / 256 = 4922775$ weight 4 states coming from lattice vectors.

On the other hand, there are also weight 4 states coming from the twisted sectors associated to elements conjugate to $\Si{11}$ in $E(7)$. (The lightest states in the twisted sectors for the group elements $-\id$ and $\Si{11}\Si{12}$ have weight 6 and 8 respectively.) From (\ref{S11conj}) there are $4^7+2^7-2=16510$ such elements, each of which lives in a conjugacy class of size 2 with respect to $E(7)$, giving 8255 twisted sectors. Each sector has $\sqrt{N}=2^{2^{7-3}}=2^{16}$ ground states, where the finite group $N$ is described at the end of section~\ref{ss:intertw}. This gives a total of $540999680$ twisted states at weight 4. Since the intertwining representation in the $\Si{11}$ sectors is trivial, all these states survive the orbifold projection.

In principle one can continue this analysis also for weight 5 states. Since the partition function is a modular function that is completely determined by the first six terms, knowing the spectrum up to weight 5 is thus enough to fix it completely. In this way one could thus obtain the full spectrum of the CFT. We will leave this for future work.

\subsection{Correlation functions}
Next, let us discuss how to compute correlation functions in the CFT. As usual, this reduces to the problem of determining the 3pt-functions, that is the structure constants of the holomorphic CFT. For the untwisted sector, this is in principle straightforward: Here, it is enough to restrict to the $E(7)$ invariant subsector of the original lattice CFT, whose structure constants are known and are essentially fixed by charge conservation. In particular the correlation functions only involving the lightest states, \ie Virasoro descendants, will be the same as always.

Correlation functions involving twisted sector fields are more difficult. Here the missing ingredients are the fusion rules between the twisted sectors. One way to obtain them would be to compute the $S$-matrices of the twisted sector characters under modular transformations and then apply the Verlinde formula. Unfortunately obtaining exact expressions for those characters and then extracting the $S$-matrices is very hard, since among other things it also requires constructing the action of the orbifold group on the twisted sectors. In principle we could obtain this by the methods described in section~\ref{s:defrep}. However, instead of just computing a single matrix element we would now have to construct the entire representation matrices, which is of course computationally much harder. We therefore leave this for future work. Perhaps less ambitiously, one could investigate the structure of the group multiplication of the conjugacy classes of the twisted sectors, requiring that the identity appear. This would provide at least partial selection rules for the structure constants.

\subsection{Symmetries}
Finally, let us say a few words about the symmetry of the orbifold CFT $V^{orb(E(7)}$. In general, obtaining the automorphisms of an orbifold CFT is a difficult question for various reasons, as we will describe below. 

Let us first focus on $E(7)$. Clearly the untwisted sector $V^{E(7)}$ and the twisted sectors will each have $E(7)$ symmetry separately. Moreover $E(7)$ will permute the twisted sectors. However, in order for the orbifold theory $V^{orb(E(7))}$ to have $E(7)$ symmetry, this permutation action has to be compatible with the fusion rules of the twisted sector, which we did not compute.

The next question is if the orbifold CFT has more symmetry than that. The first observation is that the orbifold CFT has a finite symmetry group since it has no states of weight 1. Next, since the underlying lattice has the much larger automorphism group $BRW(7)$, it is natural to speculate that the orbifold CFT also has at least $BRW(7)$ symmetry. In fact, $BRW(7)$ is the normalizer of $E(7)$, which means that its action is compatible with the orbifold projection onto $E(7)$ invariant states. In the untwisted sector, $BRW(7)$ is thus an automorphism, and it also permutes the twisted modules. However, one would again have to check if it also compatible with the fusion rules.

However, the issue is that $BRW(7)$ is the automorphism group of the lattice, and not of the CFT: the symmetry group of the CFT is rather the lift $\hat{BRW(7)}$. Unlike for $E(7)$, we did not establish that the lifted group of $BRW(7)$ remains the same, and in fact we see no reason why this should be. But if $E(7)$ is not a normal subgroup of $\hat{BRW(7)}$, then the discussion in the above paragraph does not apply.

Finally, there is also the possibility of extra automorphisms, that is symmetries of the orbifold CFT that do not come from automorphisms of the original CFT. The most famous example of this is of course the Monster CFT, where the original Conway group symmetry gets enhanced to the Monster group. To identify such symmetries, a much more detailed analysis of the resulting orbifold CFT would be necessary. We again leave this for future work.

\section*{Acknowledgements}
We thank Klaus Lux and Sven M\"oller  for helpful discussions and comments on the draft. We thank Thomas Gem\"unden and Kristine Van for initial collaboration on this project. We thank an anonymous referee for comments on the draft.
The work of CAK is supported in part by the Simons Foundation
Grant No. 629215. 
The work of CAK and JR is supported by NSF Grant 2111748.

\appendix

\section{Barnes-Wall lattices and their automorphisms}\label{s:BW}
\subsection{The Barnes-Wall lattice}
We will use the construction of \cite{NR02} of the Barnes-Wall lattice. Let $M=\begin{pmatrix}\sqrt 2 & 0\\1&1\end{pmatrix}$ be the generator matrix for the two basis vectors $u_0 = (\sqrt 2,  0 )$ and $u_1= (1,  1)$. The balanced Barnes-Wall lattice is the $\Z[\sqrt 2]$-lattice with generator matrix $\bigotimes^m M$. The Barnes-Wall lattice is then obtained as the integer sublattice of the balanced Barnes-Wall lattice rescaled by ${\sqrt 2}^{-\lceil m/2 \rceil}$.

It will be useful to use binary notation for vector indices. For $i=0,\ldots, 2^m-1$ and $a=0,\ldots,m-1$, let $i[a]$ denote the $a+1$-th bit in the binary notation of $i$, that is $i=(i[m-1]\cdots i[0])$ in binary notation. Let the parity $\pa(i)=0,1$ of $i$ denote the number of zeros of its binary notation of length $m$ modulo 2. Moreover we say $i$ and $j$ are complementary if $i[a] =\neg j[a]$ for all $a$, that is if $i$ is the bitwise complement of $j$.
The basis vectors of the Barnes-Wall lattice $BW(m)$ are then
\be\label{BWbasis}
w_i = {\sqrt 2}^{\pa(i)-\lceil m/2 \rceil} u_{i[m-1]}\otimes \cdots \otimes u_{i[0]}
\ee
Here the first term in the exponent comes from picking the integral sublattice and the second from the overall rescaling. For $m>1$, $BW(m)$ is even. For $m$ odd, it is unimodular, and for $m$ even its Gram matrix has determinant $2^{2^{m-1}}$. Its shortest vectors have length squared $2^{\lfloor m/2\rfloor}$.

From \ref{BWbasis} it is straightforward to compute the Gram matrix and check the above properties. In addition, we will need the following  results on the entries of the Gram matrix:
\begin{lem} \label{lem:Gram}
Let $m\geq 4$.
\begin{enumerate}
\item If $m$ is even, then $(G_m)_{ii}$ is divisible by 4, and $(G_m)_{ij}$ can only be odd if $\pa(i)=\pa(j)=0$.
\item If $m$ is odd, then $(G_m)_{ii}$ is divisible by 4. Moreover we can write $G_m= G^1_m+G^2_m$ where $(G^1_m)_{ij}$ can only be odd if $i$ and $j$ are complementary, and $(G^2_m)_{ij}$ can only be odd if $\pa(i)=\pa(j)=0$.
\end{enumerate}
\end{lem}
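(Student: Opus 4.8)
The plan is to obtain a closed form for $(G_m)_{ij}=\langle w_i,w_j\rangle$ and read every divisibility statement off the exponent of $2$ that appears in it. First I would record the three inner products among the building blocks, namely $\langle u_0,u_0\rangle=\langle u_1,u_1\rangle=2$ and $\langle u_0,u_1\rangle=\sqrt2$, so that $\langle u_a,u_b\rangle$ equals $2$ when $a=b$ and $\sqrt2$ when $a\neq b$. Since $w_i={\sqrt 2}^{\,\pa(i)-\lceil m/2\rceil}\,u_{i[m-1]}\otimes\cdots\otimes u_{i[0]}$ and the inner product is multiplicative over tensor factors, each bit position where $i$ and $j$ agree contributes a factor $2$ and each position where they differ contributes $\sqrt2$, which gives
\[
(G_m)_{ij}={\sqrt 2}^{\,\pa(i)+\pa(j)-2\lceil m/2\rceil}\cdot 2^{\,m-d(i,j)/2}=2^{\,\lfloor m/2\rfloor+\frac12(\pa(i)+\pa(j)-d(i,j))},
\]
where $d(i,j)$ is the Hamming distance between $i$ and $j$ as length-$m$ binary strings and I used $m-\lceil m/2\rceil=\lfloor m/2\rfloor$. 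Writing $|i|$ for the Hamming weight of $i$, the identities $\pa(i)\equiv m-|i|\pmod2$ and $d(i,j)\equiv|i|+|j|\pmod2$ show $\pa(i)+\pa(j)-d(i,j)$ is even, so the exponent is an integer; it is nonnegative because $BW(m)$ is integral (equivalently because $d(i,j)\le m$).

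With this formula the lemma becomes a short case analysis. On the diagonal $d(i,i)=0$, so the exponent is $\lfloor m/2\rfloor+\pa(i)\ge\lfloor m/2\rfloor\ge2$ for $m\ge4$; hence $(G_m)_{ii}$ is divisible by $4$ in both parities. Off the diagonal, $(G_m)_{ij}$ is odd precisely when the exponent vanishes, i.e. when $d(i,j)=2\lfloor m/2\rfloor+\pa(i)+\pa(j)$. For $m$ even this reads $d(i,j)=m+\pa(i)+\pa(j)$, and since $d(i,j)\le m$ it forces $\pa(i)=\pa(j)=0$ (and in fact $d(i,j)=m$, i.e. $i$ and $j$ complementary), which proves part (1).

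For $m$ odd the same equation reads $d(i,j)=(m-1)+\pa(i)+\pa(j)$, so $\pa(i)+\pa(j)\in\{0,1\}$: either $\pa(i)=\pa(j)=0$ with $d(i,j)=m-1$, or $\pa(i)+\pa(j)=1$ with $d(i,j)=m$, meaning $i$ and $j$ are \emph{complementary}. A complementary pair satisfies $\pa(i)+\pa(j)\equiv m\equiv1\pmod2$, so these two regimes are disjoint. I would then define $G^1_m$ to agree with $G_m$ on complementary index pairs and to vanish elsewhere, and set $G^2_m:=G_m-G^1_m$; both are integer symmetric matrices summing to $G_m$. By construction $(G^1_m)_{ij}$ is supported on complementary pairs, while an odd entry of $G^2_m$ sits at a non-complementary pair with $(G_m)_{ij}$ odd, which by the case analysis forces $\pa(i)=\pa(j)=0$. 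This gives part (2).

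I do not expect a genuine obstacle: the whole statement is driven by the single exponent formula, and the rest is bookkeeping with Hamming distances and parities. The only points needing a little care are the parity argument that keeps the exponent a nonnegative integer, and verifying that the "complementary" and "$\pa(i)=\pa(j)=0$" regimes are disjoint so that the splitting $G_m=G^1_m+G^2_m$ is clean; both are quick. (If a more structured choice of $G^1_m,G^2_m$ is preferred, one can also build them recursively from the tensor construction, but that refinement is not needed for the statement.)
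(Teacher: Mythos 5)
Your proposal is correct and follows essentially the same route as the paper: compute $(G_m)_{ij}$ from the tensor-product structure as a power of $\sqrt 2$ (your Hamming-distance form of the exponent is just the paper's $\sum_l(2i[l]j[l]-i[l]-j[l]+2)$ rewritten), then read off divisibility of the diagonal and characterize the zero-exponent (odd) entries as either complementary pairs or pairs with $\pa(i)=\pa(j)=0$, yielding the split $G_m=G^1_m+G^2_m$. The paper's argument and your case analysis are the same in substance, so no further comparison is needed.
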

\begin{proof}
We have 
\be\label{Gmij}
(G_m)_{ij}=(w_i,w_j)= \sqrt{2}^{\pa(i)+\pa(j)-2\lceil m/2 \rceil + \sum_{l=0}^{m-1}(2i[l]j[l]-i[l]-j[l]+2)}\ .
\ee
It is straightforward to check that the exponent is an even integer.
\begin{enumerate}
\item Let $m=2m'$. It follows
\be
(G_m)_{ii} = 2^{\pa(i)+m'}
\ee
which is divisible by 4 since $m'\geq 2$. For a general entry we can bound the exponent from below by $\pa(i)+\pa(j)$, which establishes the claim.
\item We have
\be
(G_m)_{ii} = 2^{\pa(i)+(m-1)/2}
\ee
which is divisible by 4 since $m\geq 5$. Next consider $(G_m)_{ij}$.  Any even entries of $G_m$ we can distribute at will between $G^1_m$ and $G^2_m$. The only odd entries we can get are if the exponent in (\ref{Gmij}) is 0. We have
\be\label{Gmoddmexp}
m-1 + \sum_{l=0}^{m-1}2i[l]j[l]-i[l]-j[l] \geq -1
\ee
with equality iff $i$ and $j$ are complementary, that is $i =\neg j$, the bitwise complement of $j$. Note that because $m$ is odd we then must have $\pa(i)+\pa(j)=1$, indeed giving an exponent of 0. From these entries we make up $G^1_m$. If (\ref{Gmoddmexp}) is equal to 0, then we can get an exponent of 0 only if $\pa(i)=\pa(j)=0$. From those entries we make up $G^2_m$. Since these are the only two possibilities of having a zero exponent, this establishes the claim.
\end{enumerate}
\end{proof}

\subsection{$E(m)$ and its lifts}\label{ss:appAut}
The automorphism group of the Barnes-Wall lattice is $Aut(BW(m))= BRW(m)$. As discussed in section~\ref{s:BWEm}, it contains an extraspecial group $E(m)=2_+^{2m+1}$ as a subgroup. 
We choose a presentation 
\be
\sigma_1 = \begin{pmatrix}0 & 1\\1&0 \end{pmatrix}\ , \qquad \sigma_2 = \begin{pmatrix}1 & 0\\0&-1 \end{pmatrix}\ .
\ee
Then $E(m)$ is generated by $z=-\id$ and
\be
\Si{ar}= \id_2 \otimes \cdots \id_2 \otimes \sigma_r \otimes\id_2 \cdots \otimes \id_2 \qquad r=1,2\qquad a =1,2,\ldots, m\ .
\ee
They satisfy the defining relations of $E(m)$
\be
[ z, \Si{ar}]=1 \qquad  z^2 =  \Si{ar}^2=1 \qquad [ \Si{ar}, \Si{bs}]=  z^{\delta_{a,b}\delta_{r,s+1\mod 2}}
\ee
Let us now give the matrices in the lattice basis.
The generators act on the lattice vectors (\ref{BWbasis}) as
\bea
w_i z &=& - w_i\\
w_i \Si{a1} &=& (-1)^{\delta_{i[m-a],0}} w_i + \delta_{i[m-a],0} 2^{\pa(i)}w_{i+e_{m-a}}\\
w_i \Si{a2} &=& (-1)^{\delta_{i[m-a],1}} w_i + \delta_{i[m-a],1} 2^{\pa(i)}w_{i-e_{m-a}}
\eea
We note in passing that after a suitable permutation of the basis vectors, the matrices for the generators $\Si{11}$ and $\Si{12}$ are
\be\label{Si11Si12form}
\Si{11}=\begin{pmatrix} -I & 0 & 2I & 0\\ 0 &-I &0 &I\\ 0 & 0&I&0\\ 0 & 0 & 0& I\end{pmatrix} \ ,
\qquad 
\Si{12}= \begin{pmatrix} I & 0 & 0 & 0\\ 0 &I &0 &0\\ I & 0&-I&0\\ 0 & 2I & 0& -I\end{pmatrix}\ ,
\ee
where $I$ and $0$ are block matrices of size $2^{m-2}$. Similar expressions hold for all other $\Si{ar}$ after different permutations of the basis vectors. For our purposes however we of course need explicit expressions for the same choice of basis for all generators.

For the purposes of lifting, we are interested in these generators modulo 2.
Let us define $\Si{ar}=I + A_{ar}$. 
Modulo 2 we then have
\bea
w_i A_{a1} \mod 2 &=& \left\{\begin{array}{ccc} w_{i+e_{m-a}} &: & \pa(i)=0 , \quad i[m-a]=0 \\ 0 &: & \textrm{else} \end{array}\right . \\
w_i A_{a2} \mod 2 &=& \left\{\begin{array}{ccc} w_{i-e_{m-a}} &: & \pa(i)=0 , \quad i[m-a]=1 \\ 0 &: & \textrm{else} \end{array}\right .
\eea
Let $V_0$ and $V_1$ be the subspaces spanned by the vectors $w_i$ of parity $\pa(i)=0$ and 1 respectively. We then have
\be
V_1 \subset \ker A_{ar} \qquad \textrm{and}\qquad \textrm{im}\ A_{ar}\subset V_1\ ,
\ee
so that in particular
\be\label{Anilpotent}
A_{ar}A_{bs}=0 
\ee
for all $a,b,s,r$.

As a reminder, we have $u_{\Sigma}(\alpha) =(-1)^{B_\Sigma(\alpha,\alpha)}$ with
\be
B_\Sigma(\alpha_i,\alpha_j) = \left\{ \begin{array}{cc} \alpha_i H \alpha_j^T - \alpha_i\Sigma H \Sigma^T\alpha_j &: i>j\\ 0 &:i\leq j \end{array} \right. ,
\ee
where $H$ is the `half-Gram matrix' defined as
\be
H_{ij} = \left\{ \begin{array}{cc} G_{ij} &: i>j\\ G_{ij}/2 &: i=j \\ 0 &:i<j \end{array} \right . 
\ee

\begin{lem}\label{lem:ueven}
Let $m\geq4$ be even. Then $u_\Si{ar}=1$ for all generators.
\end{lem}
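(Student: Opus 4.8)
The goal is to show that for even $m \geq 4$, the lift correction $u_{\Si{ar}}(\alpha) = (-1)^{B_{\Si{ar}}(\alpha,\alpha)}$ is identically $1$, i.e.\ that $B_{\Si{ar}}(\alpha,\alpha) \in 2\Z$ for every lattice vector $\alpha$. The plan is to unwind the definition of $B_\Sigma$ in terms of the half-Gram matrix $H$ and the generator matrix $\Sigma = I + A$, and to exploit two structural facts already available: the nilpotency relation $A_{ar}A_{bs}=0$ from (\ref{Anilpotent}) (so in particular $A_{ar}^2=0$), and the divisibility statements in Lemma~\ref{lem:Gram}(1), namely that for $m$ even the diagonal entries $(G_m)_{ii}$ are divisible by $4$ and an off-diagonal entry $(G_m)_{ij}$ can be odd only when $\pa(i)=\pa(j)=0$.

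\textbf{Key steps.} First I would expand $\Sigma H \Sigma^T = (I+A)H(I+A)^T = H + AH + HA^T + AHA^T$, so that
\[
H - \Sigma H \Sigma^T = -\bigl(AH + HA^T + AHA^T\bigr).
\]
Restricting the associated bilinear form to the diagonal and keeping only the $i>j$ part (per the definition of $B_\Sigma$), $B_{\Si{ar}}(\alpha,\alpha)$ becomes a sum over the quadratic form attached to $-(AH+HA^T+AHA^T)$, and I want this to vanish mod $2$. Working mod $2$, $H$ and the symmetric matrix $G/2$ differ only by lower-triangular bookkeeping, and it is cleaner to pass to the full symmetric form $\alpha \mapsto \alpha(AH+HA^T+AHA^T)\alpha^T \bmod 2$, since the $i>j$ restriction of a symmetric form equals half the full form minus the diagonal, and the diagonal contributions come from $(G_m)_{ii}$ which is $0 \bmod 4$ by Lemma~\ref{lem:Gram}(1). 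Thus the whole question reduces to showing $\alpha A H A^T \alpha^T \equiv 0$ and the cross terms $\alpha A H \alpha^T \equiv 0 \bmod 2$ (using that $\alpha H A^T\alpha^T$ is the transpose of the same scalar). For the $AHA^T$ term: since $\im A_{ar}\subset V_1$ and $V_1\subset\ker A_{ar}$, the vector $\alpha A_{ar}$ lies in $V_1$, i.e.\ is supported on odd-parity basis vectors; by Lemma~\ref{lem:Gram}(1) the submatrix of $G_m$ (hence of $H$ up to the harmless diagonal) indexed by such vectors has only even entries, so $(\alpha A_{ar}) H (\alpha A_{ar})^T$ is even. For the cross term $\alpha A_{ar} H \alpha^T \bmod 2$: again $\alpha A_{ar}$ is a $\pm$-combination of $w_{i\pm e_{m-a}}$ with $\pa(i)=0$, so its support has parity $1$; pairing a parity-$1$ index with an arbitrary index $j$ through $H$ gives an odd value only if both have parity $0$, which is impossible here, so this term is also even mod $2$.

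\textbf{Main obstacle.} The subtle point is the careful passage between $H$ (which is triangular and not symmetric) and $G/2$, and the correct mod-$2$ handling of the restriction "$i>j$" in the definition of $B_\Sigma$; one has to check that the diagonal terms that are dropped or doubled in this translation all land in $2\Z$, and this is exactly where the divisibility-by-$4$ of $(G_m)_{ii}$ is essential (divisibility by $2$ would not suffice). A secondary care point is that $\alpha A_{ar}$ may have coefficients that are not $0/1$ in $\Z$ (the off-diagonal entries in the generator matrices carry factors $2^{\pa(i)}$), so one should either reduce mod $2$ first — where those factors vanish, matching the clean description of $A_{ar}\bmod 2$ already recorded in the appendix — or track the $2$-adic valuations explicitly; I would do the former. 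Once these translations are set up, the two vanishing claims above are immediate from Lemma~\ref{lem:Gram}(1) together with the parity inclusions $\im A_{ar}\subset V_1\subset\ker A_{ar}$, and the lemma follows.
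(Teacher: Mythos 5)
Your proposal is, at its core, the paper's own argument: write $\Si{ar}=I+A$, expand $H-\Si{ar}H\Si{ar}^T=-(AH+HA^T+AHA^T)$, and kill the correction terms mod $2$ using lemma~\ref{lem:Gram}(1) together with the fact that $A$ maps into the odd-parity subspace $V_1$ mod $2$. Those two facts do finish the job, and in fact they do so entrywise: any odd contribution to an entry of $AH$, $HA^T$ or $AHA^T$ would require an odd entry $H_{kl}$ with at least one of $k,l$ of parity $1$, which lemma~\ref{lem:Gram}(1) forbids off the diagonal, while the diagonal $H_{kk}=G_{kk}/2$ is even because $G_{kk}\equiv 0 \bmod 4$. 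Hence $AH+HA^T+AHA^T\equiv 0 \bmod 2$ as a matrix, so $B_{\Si{ar}}\equiv 0 \bmod 2$ and $u_{\Si{ar}}=1$; this is exactly how the paper concludes.

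The part you should drop is the symmetrization/halving detour, which as stated does not work. Since $\Si{ar}$ preserves $G=H+H^T$, the matrix $H-\Si{ar}H\Si{ar}^T$ is \emph{antisymmetric} over $\Z$, not symmetric; consequently the ``full form'' $\alpha(AH+HA^T+AHA^T)\alpha^T$ is identically zero over $\Z$ and carries no information, and the identity ``the $i>j$ part equals half the full form minus the diagonal'' (valid for symmetric matrices) does not apply. Likewise, the scalar $\alpha HA^T\alpha^T$ transposes to $\alpha AH^T\alpha^T$, not to $\alpha AH\alpha^T$, so the two cross terms are not interchangeable in the way you claim. None of this sinks the proof, because your actual vanishing arguments (parity of the support of $\alpha A_{ar}$ played against the parity condition and the mod-$4$ diagonal in lemma~\ref{lem:Gram}(1)) are really entrywise statements, and entrywise evenness of $H-\Si{ar}H\Si{ar}^T$ already forces $B_{\Si{ar}}(\alpha,\alpha)\in 2\Z$ for every $\alpha$. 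So state and prove the matrix congruence directly and skip the passage through the quadratic form; then your proof coincides with the paper's.
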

\begin{proof}
We want to show that
\be
A H + H A^T + A H A^T = 0 \mod 2\ . \label{u1eq}
\ee
From this it then follows that $B_\Si{ar}=0 \mod 2$.
First note that lemma~\ref{lem:Gram} implies for diagonal entries $H_{ii}\mod 2 =0$. For $i\neq j$, note that the image of $A$ has odd parity. But lemma~\ref{lem:Gram} implies that $H_{ij}=0 \mod 2$ unless $\pa(i)=\pa(j)=0$, establishing the claim. 
\end{proof}

\begin{lem}\label{lem:uodd}
Let $m\geq 5$ be odd and $i>j$. Then the following identities hold $\mod 2$:
\begin{enumerate}
\item $(A_{ar}H)_{ij} = (HA^T_{ar})_{ij}$ for $a=2,3,\ldots, m$
\item $(A_{12}H)_{ij}=0$ \label{A12H}
\item $(HA_{11}^T)_{ij}=0$
\end{enumerate}
\end{lem}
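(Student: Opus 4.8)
The plan is to compute the entries $(A_{ar}H)_{ij}$ and $(HA^T_{ar})_{ij}$ modulo $2$ directly, combining the explicit description of $A_{ar}\bmod2$ recorded above with the mod-$2$ structure of the half-Gram matrix $H$ supplied by Lemma~\ref{lem:Gram}(2). Write $i'$ for the index obtained from $i$ by flipping the bit in position $m-a$, and $j'$ likewise for $j$. Each row of $A_{ar}\bmod2$ has at most one nonzero entry: for $A_{a1}$ it is $(A_{a1})_{i,i'}=1$ when $\pa(i)=0$ and $i[m-a]=0$, and for $A_{a2}$ it is $(A_{a2})_{i,i'}=1$ when $\pa(i)=0$ and $i[m-a]=1$. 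Hence $(A_{ar}H)_{ij}\equiv H_{i'j}\pmod2$ whenever those parity and bit conditions on $i$ hold and $(A_{ar}H)_{ij}\equiv0$ otherwise; dually $(HA^T_{ar})_{ij}\equiv H_{ij'}\pmod2$ whenever the analogous conditions on $j$ hold and $\equiv0$ otherwise.

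The key step is the following consequence of Lemma~\ref{lem:Gram}(2): if $x$ has odd parity $\pa(x)=1$, then $H_{xy}\equiv1\pmod2$ precisely when $y=\neg x$ and $x>y$, and $H_{xy}\equiv0\pmod2$ in every other case. Indeed $H_{xy}\equiv G_{xy}\pmod2$ for $x>y$ and $H_{xy}=0$ for $x<y$; by Lemma~\ref{lem:Gram}(2) the $G^2$-contribution to $G_{xy}$ is even once $\pa(x)=1$, so $G_{xy}$ is odd only if $x$ and $y$ are complementary, and a direct computation from (\ref{Gmij}) then gives $G_{x,\neg x}=1$. Applying this with $x=i'$ (which has odd parity whenever $\pa(i)=0$) and, by symmetry of $G$, with $x=j'$, the three identities reduce to a bit check. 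For item~1 the flip is not in the top position $m-1$ since $a\geq2$, so $i',j'$ keep the top bits of $i,j$; moreover $j=\neg i'\iff j'=\neg i\iff i=\neg j'$, and when this holds one has $\pa(i)=0\iff\pa(j)=0$ (using that $m$ is odd), the bit condition at position $m-a$ reads the same for $i$ and $j$ since $j[m-a]=i[m-a]$, and $i'>j\iff i[m-1]=1\iff i>j'$ (the top bits of $i'$ and $\neg i'$ always differ). Hence the two sides of item~1 are nonzero in exactly the same circumstances, and then both equal $1$, so they agree mod $2$. For items~2 and~3 the flip is in the top position: a nonzero $(A_{12}H)_{ij}$ would require (for $r=2$) $i[m-1]=1$ while $j=\neg i'$ forces $j[m-1]=1$, hence $j\geq2^{m-1}>i'$, contradicting $i'>j$; symmetrically a nonzero $(HA^T_{11})_{ij}$ would require (for $r=1$) $j[m-1]=0$ while $i=\neg j'$ forces $i[m-1]=0$, hence $i<2^{m-1}\leq j'$, contradicting $i>j'$. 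Thus both vanish mod $2$.

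I expect the only real difficulty to be organizational: getting the mod-$2$ reduction of $H_{xy}$ for odd-parity $x$ exactly right --- in particular that the sole surviving odd entries are the complementary ones, with value precisely $1$, which is where Lemma~\ref{lem:Gram}(2) and the exponent analysis in its proof are essential --- and then carefully tracking how flipping a non-top bit versus the top bit interacts with the lower-triangular truncation built into $H$. One must also check that the various sub-cases (whether the parity condition holds, whether the bit condition holds, and whether $i'$ lies above or below $j$) are mutually consistent between the two sides of each identity, but none of these involves more than the bookkeeping sketched above.
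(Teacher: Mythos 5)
Your proof is correct and follows essentially the same route as the paper: both rely on Lemma~\ref{lem:Gram}(2) to reduce $H$ mod $2$ (on the odd-parity indices forced by the bit flip of $A_{ar}$) to the "complementary pair with the right ordering" entries, and then both settle the three identities by tracking the flipped bit and the top bit $m-1$. The only difference is presentational: the paper phrases this as the operator splitting $H=H_1+H_2$ with $w_iH_1=\delta_{i>\neg i}w_{\neg i}$, while you carry out the same bookkeeping entrywise.
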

\begin{proof}
First, use (2) in Lemma~\ref{lem:Gram} to write $H=H_1+H_2$. Here $H_2$ again vanishes on odd parity subspaces, so we can apply the same logic as in Lemma~\ref{lem:ueven} to argue that the $H_2$ part of all identities works. It is thus enough to show the identities for $H_1$ only. Note that $H_1$ acts as
\be
 w_i H_1 = \delta_{i>\neg i} w_{\neg i}
\ee
Here $\neg i$ indicates the bitwise complement, and $\delta_{i>j}$ is 1 if $i>j$ and 0 otherwise.
\begin{enumerate}
\item Consider $w_iA_{a2}H_1$  and $w_iH_1A_{a2}^T$. Assume $\pa[i]=0$ and $i[m-a]=1$, since otherwise both terms vanish individually. (Note that because $m$ is odd, $\neg$ flips parity.) We
have $w_iA_{a2}H_1= w_{\neg(i-e_{m-a})}$ if $i[m-1]=1$ and 0 otherwise, since the digit $j[m-1]$ determines whether $j>\neg j$. Similarly we have $w_iH_1A_{a2}^T=w_{\neg i + e_{m-a}}= w_{\neg(i-e_{m-a}}$ again if $i[m-1]=1$ and 0 otherwise. It follows that $w_iA_{a2}H_1 = w_iH_1A_{a2}^T$. A similar argument works for $A_{a1}$. 
\end{enumerate}
Note that this argument only works because $A_{ai}$ leaves $j[m-1]$ invariant for $a>1$. For $a=1$ we do indeed find a different outcome:
\begin{enumerate}[resume]
\item Consider $w_iA_{12}H_1$. Assume $i[m-1]=1$, since otherwise the identity is automatically satisfied. But then $w_iA_{12} = w_j$ where $j[m-1]=0$, so that $\neg j > j$ and therefore $w_jH_1=0$.
\item Consider $w_iH_1A_{11}^T$. Assume $i[m-1]=1$, since otherwise $\neg i> i$ and therefore $w_iH_1=0$. But then $w_iH_1=j$ where $j[m-1]=0$, so that $w_j A_{11}^T=0$.
\end{enumerate}
\end{proof}
Note that \eg $w_{01\ldots 10}A_{11} H_1 = w_{0\ldots 01}$, so that indeed $u_{\Si{11}}\neq 1$.

\begin{lem}\label{lem:utriv}
For $m\geq 5$ odd, $r=1,2$ and $a=2,3,\ldots m$, $u_{\Si{ar}}=1$.
\end{lem}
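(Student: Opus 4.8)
The plan is to reduce the statement to the identities of Lemma~\ref{lem:uodd}. Writing $\Si{ar} = I + A_{ar}$ and recalling $u_{\Si{ar}}(\alpha) = (-1)^{B_{\Si{ar}}(\alpha,\alpha)}$, I would first record that for $i > j$
\[
(B_{\Si{ar}})_{ij} = \big(H - \Si{ar} H \Si{ar}^T\big)_{ij} = -\big(A_{ar}H + H A_{ar}^T + A_{ar} H A_{ar}^T\big)_{ij}\ ,
\]
while $(B_{\Si{ar}})_{ij} = 0$ for $i \le j$; hence $B_{\Si{ar}}(\alpha,\alpha) = \sum_{i>j}\alpha^i\alpha^j (B_{\Si{ar}})_{ij}$, and it suffices to show $(A_{ar}H + H A_{ar}^T + A_{ar}HA_{ar}^T)_{ij} \equiv 0 \pmod 2$ for all $i>j$ and all $a = 2,\dots,m$, $r = 1,2$.

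Next I would dispose of the cubic term. From the mod-$2$ description of $A_{ar}$, whenever $w_i A_{ar}$ is nonzero it equals $w_k$ with $\pa(k) = 1$, and similarly for $j$; a nonzero contribution to $(A_{ar}HA_{ar}^T)_{ij}$ therefore requires $H_{kl} \not\equiv 0$ with $\pa(k) = \pa(l) = 1$. But Lemma~\ref{lem:Gram}(2) says $G_{kl}$ (hence $H_{kl}$) is even unless $k$ and $l$ are complementary or $\pa(k) = \pa(l) = 0$; the second case is excluded, and the first is impossible since for odd $m$ complementary indices have opposite parity while $\pa(k) = \pa(l) = 1$. Thus $A_{ar}HA_{ar}^T \equiv 0 \pmod 2$. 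For the remaining two terms I would quote Lemma~\ref{lem:uodd}(1): for $a \ge 2$ and $i > j$ one has $(A_{ar}H)_{ij} \equiv (H A_{ar}^T)_{ij} \pmod 2$, so in characteristic $2$ their sum vanishes. Combining the two facts gives $(B_{\Si{ar}})_{ij} \equiv 0 \pmod 2$ for all $i > j$, hence $B_{\Si{ar}}(\alpha,\alpha) \equiv 0 \pmod 2$ for all $\alpha$, that is $u_{\Si{ar}} = 1$.

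There is no real obstacle here: all of the substance sits in Lemma~\ref{lem:uodd}(1), and the hypothesis $a \ge 2$ is exactly what makes the argument go through --- the parallel identity fails for $a = 1$, which is why $u_{\Si{11}}$ and $u_{\Si{12}}$ are genuinely nontrivial, as noted after Lemma~\ref{lem:uodd}. The only points needing a little care are the strictly-lower-triangular truncation built into $B_{\Si{ar}}$ (so that only the $i > j$ entries matter) and checking that the cubic term really does drop out modulo $2$.
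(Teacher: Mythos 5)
Your proof is correct and follows essentially the paper's route: reduce to showing the entries of $A_{ar}H+HA_{ar}^T+A_{ar}HA_{ar}^T$ vanish mod $2$ for $i>j$ (the paper's equation (\ref{u1eq})), with the first two terms cancelling by Lemma~\ref{lem:uodd}(1). The only difference is the cubic term: the paper invokes Lemma~\ref{lem:uodd}(1) again to write $AHA^T=A^2H$ and then uses $A^2=0$ from (\ref{Anilpotent}), whereas you kill it directly by noting that the image of $A_{ar}$ has odd parity and that, for odd $m$, Lemma~\ref{lem:Gram}(2) forces $H$ to be even between two odd-parity indices (complementary indices having opposite parity) --- an equally valid, and if anything slightly cleaner, disposal of that term since it does not reuse the $i>j$-restricted identity inside a matrix product.
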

\begin{proof}
We again want to establish (\ref{u1eq}) for $i>j$. The first two terms cancel by (1) in Lemma~\ref{lem:uodd}. For the third term we use (1) again to write $AHA^T=A^2H$, and then note that $A^2=0$ because the image of $A$ has odd parity and is therefore in the kernel of $A$.
\end{proof}

\begin{prop}\label{prop:appcov}
For $m\geq4$, we have $u_{\Si{ar}}(\alpha \Si{bs})=u_{\Si{ar}}(\alpha)$ for all generators. 
\end{prop}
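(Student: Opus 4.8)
The plan is to whittle the statement down to a single genuinely nontrivial case and then settle it by a computation modulo $2$ resting entirely on the parity bookkeeping already set up in Lemmas~\ref{lem:Gram}, \ref{lem:uodd} and \ref{lem:utriv}. For $m\geq 4$ even there is nothing to prove: Lemma~\ref{lem:ueven} gives $u_{\Si{ar}}\equiv 1$, and translating by $z$ is harmless since $z\equiv\id\bmod 2$ and $u_z\equiv 1$. So I would assume $m\geq 5$ odd. By Lemma~\ref{lem:utriv} we already know $u_{\Si{ar}}\equiv 1$ for every $a\geq 2$, hence the only functions that can fail to be constant are $u_{\Si{11}}$ and $u_{\Si{12}}$; fix $\Sigma\in\{\Si{11},\Si{12}\}$ and an arbitrary generator $\Si{bs}$. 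Writing $\Si{bs}=\id+A_{bs}$, we have $\alpha\Si{bs}=\alpha+\beta$ with $\beta:=\alpha A_{bs}\in V_1$, the odd-parity subspace.

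The first real step is a polarization identity. Put $M:=H-\Sigma H\Sigma^{T}$, so that $u_\Sigma(\gamma)=(-1)^{B_\Sigma(\gamma,\gamma)}$ with $B_\Sigma(\gamma,\gamma)=\sum_{i>j}M_{ij}\gamma_i\gamma_j$. Because $\Sigma$ preserves the Gram matrix $G=H+H^{T}$, the matrix $M$ is antisymmetric over $\Z$, so it has zero diagonal and $M_{ji}=-M_{ij}$. Expanding $B_\Sigma(\alpha+\beta,\alpha+\beta)$ over $\Z$ and reducing mod $2$, the cross term collapses to $\alpha M\beta^{T}=\alpha M A_{bs}^{T}\alpha^{T}$, giving
\be
B_\Sigma(\alpha\Si{bs},\alpha\Si{bs})\;\equiv\;B_\Sigma(\alpha,\alpha)+B_\Sigma(\beta,\beta)+\alpha M A_{bs}^{T}\alpha^{T}\pmod 2 .
\ee
It then suffices to show that both correction terms are even.

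For $B_\Sigma(\beta,\beta)$: expanding $\Sigma H\Sigma^{T}=(\id+A_{1r})H(\id+A_{1r})^{T}$ gives $M\equiv A_{1r}H+HA_{1r}^{T}+A_{1r}HA_{1r}^{T}\bmod 2$, and I would first check that $A_{1r}HA_{1r}^{T}\equiv 0$: a nonzero entry would force some $H_{kl}$ with $k$ and $l$ both of odd parity, but by Lemma~\ref{lem:Gram}(2) the restriction of $H$ to $V_1\times V_1$ has only even entries (two odd-parity indices are never complementary for odd $m$, and the diagonal entries are divisible by $4$). Hence $M\equiv A_{1r}H+HA_{1r}^{T}\bmod 2$, and since $A_{1r}$ maps $V_0$ into $V_1$ and kills $V_1$, every nonzero entry of $M\bmod 2$ carries at least one even-parity index; as $\beta\in V_1$ the monomials $\beta_i\beta_j$ in $B_\Sigma(\beta,\beta)$ all have two odd-parity indices, so $B_\Sigma(\beta,\beta)\equiv 0$. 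For $\alpha M A_{bs}^{T}\alpha^{T}$: using $A_{bs}A_{1r}=0$ from (\ref{Anilpotent}) one gets $\Sigma^{T}A_{bs}^{T}=(A_{bs}\Sigma)^{T}=A_{bs}^{T}$, whence $MA_{bs}^{T}=(\id-\Sigma)HA_{bs}^{T}=-A_{1r}HA_{bs}^{T}$ and $\alpha M A_{bs}^{T}\alpha^{T}=-(\alpha A_{1r})\,H\,(\alpha A_{bs})^{T}$, with both $\alpha A_{1r}$ and $\alpha A_{bs}$ in $V_1$; by the same evenness of $H$ on $V_1\times V_1$ this vanishes mod $2$. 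Both corrections being even, $u_\Sigma(\alpha\Si{bs})=u_\Sigma(\alpha)$.

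I expect the one delicate point to be the polarization itself. The form $B_\Sigma$ is attached to the strictly-lower-triangular \emph{truncation} of $M$ and is neither symmetric nor bilinear, so one must argue over $\Z$ first — where antisymmetry of $M$ really does give zero diagonal and $M_{ji}=-M_{ij}$ — and only then pass to $\F_2$, where the cross term simplifies cleanly to $\alpha M A_{bs}^{T}\alpha^{T}$. After that the argument is bookkeeping: tracking how $A_{1r}H$ and $HA_{bs}^{T}$ move vectors between the parity subspaces $V_0,V_1$, together with the single divisibility input of Lemma~\ref{lem:Gram}(2), which is precisely where oddness of $m$ enters (the even case being subsumed by Lemma~\ref{lem:ueven}). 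One has to be a little careful because, as the remark after Lemma~\ref{lem:uodd} notes, $u_{\Si{11}}$ is genuinely not the constant function $1$, so the cancellation in the two correction terms is doing real work.
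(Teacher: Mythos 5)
Your proof is correct and follows essentially the same route as the paper: reduce to the odd-$m$, $a=1$ case via Lemmas~\ref{lem:ueven} and \ref{lem:utriv}, then kill the correction terms mod $2$ using $A_{bs}A_{1r}=0$ from (\ref{Anilpotent}) together with the parity structure of $H$ supplied by Lemma~\ref{lem:Gram}(2). Your polarization step via the integrally antisymmetric $M=H-\Sigma H\Sigma^{T}$ is just a more carefully justified version of the paper's conjugation identity $\Si{bs}B_{\Si{11}}\Si{bs}^{T}=B_{\Si{11}}+A_{11}HA_{bs}^{T}$, so the substance is the same.
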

\begin{proof}
In view of Lemmas~\ref{lem:ueven} and \ref{lem:utriv}, this is trivially true if $m$ is even or $a>1$. For $m$ odd and $a=1$ we have
\be
\Si{bs}B_{\Si{11}}\Si{bs}^T= B_{\Si{11}}+ A_{bs}A_{11}H + A_{11}HA_{bs}^T + A_{bs}A_{11}HA_{bs}^T= B_{\Si{11}}+ A_{11}HA_{bs}^T
\ee
by (\ref{Anilpotent}).  Next note that for $m$ odd, $H$ flips the parity of all vectors. This means that $\im A_{11} H \subset V_0 \subset \ker A_{bs}^T$ so that the last term vanishes. The same argument works for $u_{\Si{12}}$.
\end{proof}

\section{Conjugacy classes of $E(m)$ in $BRW(m)$}\label{ss:Emcon}
In this appendix we prove the statement about conjugacy classes of $E(m)$ in $BRW(m)$.
Let us first discuss conjugation in $BRW(2)$:
\begin{lem}\label{lem:m2con}
Under conjugation by elements of $BRW(2)$, we have
\begin{enumerate}
\item $\sigma_3\otimes\sigma_3 \sim  \sigma_2\otimes\sigma_1$ \label{m21}
\item $\sigma_1\otimes \sigma_1 \sim  \id_2\otimes \sigma_1$\label{m22}
\item $\sigma_2\otimes \sigma_1 \sim \id_2\otimes \sigma_1$ \label{m23}
\item $\sigma_2\otimes \sigma_2\sim \sigma_2\otimes \id_2$\label{m24}
\item $\sigma_1\otimes \id_2\sim \sigma_2\otimes \id_2$, $\sigma_1\otimes\sigma_1 \sim \sigma_2\otimes \sigma_2$\label{m25}
\item $\sigma_3\otimes \sigma_2\sim \sigma_3\otimes \id_2$ \label{m26}
\end{enumerate}
\end{lem}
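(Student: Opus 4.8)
I would work in the quotient $\overline{E(2)} := E(2)/Z(E(2)) \cong \F_2^{4}$, carrying the quadratic form $Q(\bar x) := x^{2} \in \F_2$ (with $\pm\id$ read as $0,1$) and the nondegenerate symplectic form $[x,y] \in \F_2$. Every element appearing in (1)--(6) is a product of the generators $\Si{ar}$ and $z$, hence lies in $E(2)$ (in the paper's conventions $\sigma_{3} = \sigma_{1}\sigma_{2}$, so e.g.\ $\sigma_{3}\otimes\sigma_{3} = (\sigma_{1}\otimes\sigma_{1})(\sigma_{2}\otimes\sigma_{2}) \in E(2)$); a noncentral $x\in E(2)$ has order $2$ if $Q(\bar x)=0$ and order $4$ if $Q(\bar x)=1$. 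Writing the class of $\sigma_{1}^{a_{1}}\sigma_{2}^{b_{1}}\otimes\sigma_{1}^{a_{2}}\sigma_{2}^{b_{2}}$ as $(a_{1},b_{1},a_{2},b_{2})$, one computes $Q = a_{1}b_{1}+a_{2}b_{2}$, and under the identification $\F_2^{4}\cong\F_2^{2}\otimes\F_2^{2}\cong\mathrm{Mat}_{2}(\F_2)$ with $Q=\det$, the $9$ nonzero isotropic vectors ($Q=0$) are exactly the rank-$1$ matrices, i.e.\ the pure tensors $v\otimes w$, while the $6$ anisotropic vectors ($Q=1$) are exactly the invertible matrices.

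Next I would invoke that $BRW(2)$ acts on $\overline{E(2)}$ through $\Omega^{+}(4,2)$ preserving $Q$, exactly as in section~\ref{s:BWEm} for $m=2$; consistently, $|BRW(2)| = |\mathrm{Aut}(BW(2))| = 1152 = |E(2)|\cdot|\Omega^{+}(4,2)|$. The two transitivity statements I need are standard for the $\Omega^{+}(4,2)$-geometry: transitivity on the $9$ nonzero isotropic vectors (the hyperbolic quadric $\cong\mathbb{P}^{1}(\F_2)\times\mathbb{P}^{1}(\F_2)$, on which $\Omega^{+}(4,2)\cong\mathrm{GL}(2,2)\times\mathrm{GL}(2,2)$ acts factorwise) and on the $6$ anisotropic vectors (the invertible matrices under $M\mapsto AMB^{-1}$). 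Granting these, $BRW(2)$ is transitive on the order-$2$ classes of $\overline{E(2)}$ and on the order-$4$ classes.

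It remains to check $Q$-values and to lift back to $E(2)$. Each of (1)--(6) becomes a one-line inspection: in (1), $\sigma_{3}\otimes\sigma_{3}\mapsto(1,1,1,1)$ and $\sigma_{2}\otimes\sigma_{1}\mapsto(0,1,1,0)$, both with $Q=0$; in (6), $\sigma_{3}\otimes\sigma_{2}\mapsto(1,1,0,1)$ and $\sigma_{3}\otimes\id_{2}\mapsto(1,1,0,0)$, both with $Q=1$; items (2)--(5) are all of the $Q=0$ type, so in every item the listed classes share one $BRW(2)$-orbit. Finally, if $g\in BRW(2)$ sends the class of $x$ to that of $y$, then $gxg^{-1}\in\{y,-y\}$; since any noncentral $w\in E(2)$ is $E(2)$-conjugate to $-w = zw$ (nondegeneracy of the commutator form yields $y_{0}\in E(2)$ with $y_{0}wy_{0}^{-1}=-w$), we get $x\sim y$ in $BRW(2)$ either way. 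This proves (1)--(6).

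The main obstacle is the input of the second paragraph: one needs not merely the abstract isomorphism type $BRW(2) = E(2).\Omega^{+}(4,2)$ but the facts that the image of $BRW(2)$ in $O^{+}(4,2)$ is all of $\Omega^{+}(4,2)$ and that $\Omega^{+}(4,2)$ (not only $O^{+}(4,2)$) is transitive on each $Q$-level set of $\overline{E(2)}$; these can be read off the explicit generators of $BRW(m)$ in \cite{MR1845897}. A more elementary, self-contained alternative --- probably closest to the paper's own treatment --- is to skip the structure theory and, for each of (1)--(6), simply exhibit an explicit conjugating element of $BRW(2)$ as a short word in the Clifford generators $\sigma_{1}\otimes\id_{2}$, $\sigma_{2}\otimes\id_{2}$, $H$, $\mathrm{GL}(2,2)$, $D(2)$ containing an even number of Hadamard factors, and verify the conjugation by multiplying out $4\times4$ matrices; the one subtlety there is staying inside the index-$2$ subgroup $BRW(2)$ (for instance using $h\otimes h$ rather than $h\otimes\id_{2}$ to interchange $\sigma_{1}$ and $\sigma_{2}$ in items (4)--(5)).
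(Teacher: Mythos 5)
Your proposal is correct, but it takes a genuinely different route from the paper. The paper's proof is a six-line verification: for each item it simply exhibits a conjugating element of $BRW(2)$ --- the unipotent elements $\left(\begin{smallmatrix}1&1\\0&1\end{smallmatrix}\right),\left(\begin{smallmatrix}1&1\\1&0\end{smallmatrix}\right)\in GL(2,2)$ for (1),(2),(4),(6), the element $D(2)$ for (3), and $h\otimes h$ for (5) --- and the only structural remark it needs is exactly the one you anticipated, namely that $h\otimes h$ contains an even number of Hadamard factors, hence has Dickson invariant $0$ and lies in the index-$2$ subgroup; so your ``elementary alternative'' in the last paragraph is precisely the paper's proof. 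Your primary argument instead works in $E(2)/Z(E(2))\cong\F_2^4$ with the quadratic form $Q$, identifies the nonzero isotropic and the anisotropic vectors with rank-one and invertible matrices in $\F_2^2\otimes\F_2^2$, invokes transitivity of $\Omega^+(4,2)\cong GL(2,2)\times GL(2,2)$ on each level set, and then fixes the residual central sign inside $E(2)$ using nondegeneracy of the commutator form; the $Q$-values you list and this lifting step are all correct, and the order/$Q$ dictionary matches the elements in (1)--(6). What your route buys is a stronger statement essentially for free: any two noncentral elements of $E(2)$ of the same order are $BRW(2)$-conjugate, which is the $m=2$ germ of the argument behind Lemma~\ref{lem:mcon} and Proposition~\ref{prop:Emcc}. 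The price is the structural input you yourself flag --- that the conjugation action of $BRW(2)$ on $E(2)/Z(E(2))$ realizes all of $\Omega^+(4,2)$, and that $\Omega^+_4(2)$ is the factorwise-acting $S_3\times S_3$ --- which the paper states in section~\ref{s:BWEm} only for $m>3$, so for $m=2$ it would indeed have to be extracted from the generators in \cite{MR1845897} (your order count $1152=32\cdot 36$ is consistent with this); the paper's explicit conjugators avoid that machinery entirely, at the cost of generalizing less gracefully.
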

\begin{proof}
Conjugate by the following elements:
\begin{enumerate}
\item $\begin{pmatrix}1&1\\0&1\end{pmatrix}\in GL(2,2)$
\item $\begin{pmatrix}1&1\\1&0\end{pmatrix}\in GL(2,2)$
\item $D(2)$
\item $\begin{pmatrix}1&1\\0&1\end{pmatrix}\in GL(2,2)$
\item $h\otimes h$. (Note that this element has an even number of $h$ and therefore has Dickson invariant 0.)
\item $\begin{pmatrix}1&1\\0&1\end{pmatrix}\in GL(2,2)$
\end{enumerate}
\end{proof}
We can then use this to discuss conjugation for $m>2$:
\begin{lem}\label{lem:mcon}
Take $m\geq 2$. Define $\sigma_0=\id_2$ and $\sigma_3=\sigma_1\sigma_2$.  Then any element $g\in E(m)$,
\be
g = \pm \sigma_{i_1} \otimes \ldots \otimes \sigma_{i_m}\qquad i_j =0,1,2,3\ ,
\ee
is conjugate in $BRW(m)$ to either $\id, -\id, \sigma_2\otimes\id_2 \otimes \ldots \otimes \id_2$ or $\sigma_3\otimes\id_2 \otimes \ldots \otimes \id_2$.
\end{lem}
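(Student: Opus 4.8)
The plan is to prove this by induction on the number $k$ of tensor factors of $g$ that differ from $\id_2$, lowering $k$ at each stage by conjugating in $BRW(m)$; the conjugating elements are supplied by Lemma~\ref{lem:m2con} applied to a single adjacent pair of factors, together with permutations of the factors. I would first record three preliminary observations. (i) The symmetric group $S_m$ embeds in $GL(m,2)<BRW(m)$ by acting on the coordinates of $\F_2^m$; on $(\R^2)^{\otimes m}$ this permutes the tensor factors, so the factors of $g$ may be reordered at will. (ii) Every conjugating element appearing in the proof of Lemma~\ref{lem:m2con} --- two elements of $GL(2,2)$, the matrix $D(2)$, and $h\otimes h$ --- lies in $C_2$ and has Dickson invariant $0$; tensoring it with copies of $\id_2$ in the remaining slots produces an element of $C_m$ that still has Dickson invariant $0$ (the invariant is a homomorphism, and the two $h$-factors of $h\otimes h$ contribute $1+1=0$), hence an element of $BRW(m)$. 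So each of the relations \ref{m21}--\ref{m26} of Lemma~\ref{lem:m2con} can be realised on any chosen adjacent pair of tensor factors of an element of $E(m)$ by conjugation in $BRW(m)$, leaving the other factors untouched. (iii) If $g$ is non-central, i.e. $\bar g\neq 0$ in $E(m)/Z(E(m))\cong\F_2^{2m}$, then conjugating by an element of $E(m)$ that pairs non-trivially with $g$ under the commutator form sends $g$ to $-g$, so an overall sign is always harmless.

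For the base of the induction: if $k=0$ then $g=\pm\id$, both of which are on the list. If $k=1$, permute the non-identity factor to the first slot, so $g=\pm\,\sigma_i\otimes\id_2\otimes\cdots\otimes\id_2$ with $i\in\{1,2,3\}$; if $i=1$ I conjugate by $h\otimes h\otimes\id_2\otimes\cdots\otimes\id_2$ (this uses the spare slot, hence the hypothesis $m\geq 2$) to replace $\sigma_1$ by $\sigma_2$; then observation (iii) removes the sign, leaving $\sigma_2\otimes\id_2\otimes\cdots\otimes\id_2$ or $\sigma_3\otimes\id_2\otimes\cdots\otimes\id_2$.

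For the inductive step with $k\geq 2$, permute two non-identity factors into the first two slots, so $g=(\sigma_a\otimes\sigma_b)\otimes(\cdots)$ with $a,b\in\{1,2,3\}$, and run through the finitely many pairs. The pairs $(\sigma_1,\sigma_1)$, $(\sigma_2,\sigma_2)$, $(\sigma_2,\sigma_1)$ and $(\sigma_1,\sigma_2)$ drop to a single non-identity factor directly by \ref{m22}, \ref{m24} and \ref{m23} (the last after a swap of the two slots); $(\sigma_3,\sigma_3)$ drops by \ref{m21} followed by \ref{m23}; $(\sigma_3,\sigma_2)$ and $(\sigma_2,\sigma_3)$ drop by \ref{m26} (the second after a swap); and $(\sigma_1,\sigma_3)$ and $(\sigma_3,\sigma_1)$ are first conjugated on the two slots by $h\otimes h$, which sends $\sigma_1\mapsto\sigma_2$ and $\sigma_3\mapsto-\sigma_3$ (pulling out an overall sign) to yield a $(\sigma_2,\sigma_3)$-type pair, to which \ref{m26} then applies. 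In every case the result is an element of $E(m)$ with $k-1$ non-identity factors, up to an overall sign; after $k-1$ steps we are in the base case, and a last sign-flip via observation (iii), if needed, completes the proof.

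The only step calling for genuine care is the Dickson-invariant bookkeeping: $BRW(m)$ is the index-$2$ subgroup of the full Clifford group $C_m$ cut out by the Dickson invariant, so no conjugation may involve an odd number of $H$-factors. This is exactly why $\sigma_1$ cannot be conjugated to $\sigma_2$ by a single copy of $h$, and it is the real reason the hypothesis $m\geq 2$ is required: with a spare tensor slot one uses $h\otimes h$ in place of $h$, restoring Dickson invariant $0$. Everything else is the finite table of conjugations already collected in Lemma~\ref{lem:m2con}.
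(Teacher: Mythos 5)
Your proof is correct and takes essentially the same route as the paper: both reduce $g$ pairwise using the relations of Lemma~\ref{lem:m2con}, embedded into $BRW(m)$ by tensoring with identities and using $GL(m,2)$ permutations of the factors, and remove a residual sign by conjugating with an element of $E(m)$. The only differences are presentational --- you organize the reduction as an induction on the number of non-identity factors and spell out the Dickson-invariant bookkeeping (and thereby treat $m=2$ uniformly), whereas the paper first eliminates surplus $\sigma_3$ factors via its relation (1) and then cleans up the $\sigma_1,\sigma_2$ factors.
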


\begin{proof}
For simplicity assume $m\geq 3$; the case $m=2$ is a straightforward application of lemma~\ref{lem:m2con}.
First note that we can use elements in $GL(m,2)$ to permute the basis vectors in $\F_2$. In terms of $\R^{2^m}$ this operation corresponds to permuting the factors of $g$. The idea is then to use lemma~\ref{lem:m2con} on pairs of factors since the $m=2$ conjugation matrices can be embedded in $BRW(m)$ by tensoring with identity matrices $\id_2$.

Start with an element $g$ of the form above.
If all the factors are $\id_2$, then we are immediately in the case $\pm\id$. Assume therefore that there is at least one non-trivial factor. Then use (\ref{m21}) until there is only one or no $\sigma_3$ left.

If there is no $\sigma_3$, we use (\ref{m22}), (\ref{m23}) and (\ref{m24}) until we are left with a single $\sigma_{1}$ or $\sigma_2$, which we can turn into $\sigma_2$ using (\ref{m25}) if needed. Finally we can conjugate by $\sigma_1$ to switch sign if needed. This establishes that the element is conjugate to $\sigma_2\otimes\id_2 \otimes \ldots \otimes \id_2$.

If there is one $\sigma_3$, we apply the above procedure to the remaining factors until we are left with just one $\sigma_3$ and one $\sigma_2$. We then use (\ref{m26}) to eliminate $\sigma_2$. Finally we can conjugate by $\sigma_1$ to switch sign if needed. The element is thus conjugate to $\sigma_3\otimes\id_2 \otimes \ldots \otimes \id_2$
\end{proof}

\section{Evaluating Schur Averages}\label{app:SchurAve}
\subsection{Schur averaging}
In this section we give an explicit expression for a given entry of the Schur average of an elementary matrix. The central idea is to reduce the problem to a problem in linear algebra over $\F_2$. The quadratic form that appears in the exponent of the Schur average can then be brought to a standard form. In this new basis, the sum factorizes into sums over the elements of at most 2-dimensional vector spaces. These can easily be evaluated, giving a result in closed form.

Take the Schur average of an elementary matrix with entry at $\ct{e}, \ct{f}$:
\be
T^{\ct{e}\ct{f}}_h(e_\ct{c})= \frac1{\sqrt{|N|}} \sum_{\substack{a\in N :\\ \ct{c}+a\hm\pi_2=\ct{e}}} 
(-1)^{a\Delta^h_{1/2}a^T+a  R_{1/2}a^T+ a\hm R_{1/2}(\ct{c}\iota_2)^T+ aR_{1/2}(\ct{f}\iota_2)^T+as_h^T}
e_{\ct{f}+a\pi_2}\ .
\ee
Let us read off the matrix element $\ct{c},\ct{d}$. This means that we only sum over the elements of $N$ which satisfy the linear equations
\be\label{ls}
a\hm \pi_2 = \ct{e}+\ct{c}\ , \qquad a\pi_2 = \ct{f}+\ct{d}\ ,
\ee
which we will write as $a\Pi = (\ct{e}+\ct{c},\ct{f}+\ct{d})$.
To evaluate this sum efficiently, we do the following: first consider the kernel of the linear system (\ref{ls}), $K:=\ker \Pi = \ker \hm\pi_2 \cap \ker \pi_2$. Let $A$ be a basis matrix of this kernel, and $\vec y$ a coordinate vector so that the kernel is parametrized by $\vec y A$. Next let $x$ be a particular solution of (\ref{ls}). If there is no particular solution for theses values of $\ct{c},\ct{d},\ct{e},\ct{f}$, then the matrix entry vanishes.

Otherwise we parametrize $a=x+\vec y A$ and sum over $\vec y$. This gives the following expression for the matrix element 
\be
(T^{\ct{e}\ct{f}}_h)_{\ct{c}\ct{d}}
=  \frac1{\sqrt{|N|}}\sum_{\vec y} (-1)^{\vec y \alpha^h \vec y^T + \beta^h(\ct{c},\ct{d},\ct{e},\ct{f})\vec y^T+ \gamma^h(\ct{c},\ct{d},\ct{e},\ct{f})}
\ee
where
\bea
\alpha^h &=& A M^h_{1/2} A^T\\
\beta^h(\ct{c},\ct{d},\ct{e},\ct{f}) &=& x M^h A^T+ \ct{c} \iota_2 R^T_{1/2} \hm^T A^T +\ct{f}\iota_2 R^T_{1/2} A^T+s_hA^T\\
\gamma^h(\ct{c},\ct{d},\ct{e},\ct{f}) &=& xM^h_{1/2}x^T + x(\hm R_{1/2}(\ct{c}\iota_2)^T+ R_{1/2} (\ct{f}\iota_2)^T+s_h^T)
\eea
with $M^h_{1/2}=\Delta^h_{1/2}+R_{1/2} $ and $M^h = M^h_{1/2}+(M^h_{1/2})^T$. We note that $M^h$ is symmetric and has zeros on the diagonal. It follows that $AM^hA^T=0$, so that $\beta^h$ is independent of the choice of $x$. 

There are $2^{\dim K}$ terms in this sum, so evaluating it by brute force quickly becomes too hard. Instead we want to bring the quadratic function in the exponent to a standard form, and then factorize the sum into products of sums over at most 4 elements. To this, we first find a change of basis $S_h$ that brings the quadratic form $\alpha^h$ to standard form $\tilde \alpha^h$ --- see section~\ref{ss:Qstand} for a description of the algorithm. Note that $\alpha^h$ does not depend on any of the indices $\ct{c},\ct{d},\ct{e},\ct{f}$.

The standard form $\tilde \alpha^h = S_h \alpha^h S_h^T$ is given by
\be\label{alphaStandard}
\vec y \tilde \alpha^h \vec y^T = \sum_{i=1}^r y_{2i-1} y_{2i}+ \sum_{i=2r+1}^{2r+s} 1\cdot y_i^2 + \sum_{i=2r+s+1}^{\dim K} 0\cdot y_i^2\ .
\ee
That is, the corresponding bilinear form matrix is block diagonal with $r$ $2\times 2$ matrices and 1's and 0's on the remaining diagonal. (In principle there can be an additional defect term, but in our case this is either absent or can be included in the last term in (\ref{alphaStandard}).)
This means that after this change of basis, we can factorize the sum. Defining $\tilde\beta^h = \beta^h S^T_h $, for the first $r$ coordinate pairs we have
\be
\sum_{y_{2i-1}, y_i\in \Z_2} (-1)^{y_{2i-1}y_{2i} + \tilde \beta^h_{2i-1}y_{2i-1}+\tilde \beta^h_{2i}y_{2i}} = (-1)^{ \tilde \beta^h_{2i-1}\tilde \beta^h_{2i}}
\ee
The remaining coordinates then factor into sums. Because there is only a linear term (since $y_i^2=y_i$), each sum is either 0 or 2, depending if $\tilde \beta^h_i$ is 0 or 1. More precisely, if we define $\pi_d^h$ to be the projector onto the last $\dim K-2r$ entries and $\tilde\beta^h_d= \tilde\beta^h\pi^h_d$,  then the sum vanishes unless
\be\label{betacondition}
\tilde\beta^h_d = \bar\beta
\ee
where
\be
\bar \beta := (\underbrace{1,\ldots,1}_s,\underbrace{0,\ldots,0}_{\dim K-s-2r})\ .
\ee

Putting everything together we get
\be\label{Thshortcut}
(T^{\ct{e}\ct{f}}_h)_{\ct{c}\ct{d}}
=\left\{ \begin{array}{ccc} 
\frac{2^{\dim K-2r}}{\sqrt{|N|}} (-1)^{\gamma^h +  \sum_{i=1}^r \tilde \beta^h_{2i-1}\tilde \beta^h_{2i}}
&:& \tilde \beta^h_d = \bar \beta
\textrm{\ and\ } (\ct{e}+\ct{c},\ct{f}+\ct{d}) \in \im \Pi\\
0 &:& \textrm{else}
\end{array}\right. 
\ee

\subsection{Finding non-vanishing intertwiners}
Let us now discuss how we can systematically find $\ct{e}, \ct{f}$ so that at least some entry $\ct{c},\ct{d}$ is non-vanishing. First, let $B$ be a basis matrix for $\coim \Pi$. Since $B\Pi$ is a basis matrix for $\im B$, an arbitrary vector in $\im \Pi$ is given by $\vec z B\Pi$, where $\vec z$ is a vector of dimension $\dim \im \Pi$. Our approach is then to find a solution for $\ct{c},\ct{f},\vec z$ that leads to a tuple $\ct{e}, \ct{f},\ct{ c}, \ct{ d}$ satisfying the non-vanishing condition in (\ref{Thshortcut}).

To do this we first parametrize $\ct{e},\ct{d}$ as
\be
(\ct{e},\ct{d})=(\ct{c},\ct{f})+ \vec z B\Pi\ . 
\ee
This means a particular solution satisfying $x\Pi =(\ct{e}+\ct{c},\ct{f}+\ct{d})$ is given by
\be
x = \vec z B \ .
\ee
To see if (\ref{betacondition}) can be satisfied, we need to check if 
\be
(\vec z BM^h + \ct{c} \iota_2 R^T_{1/2} \hm^T  +\ct{f}\iota_2 R^T_{1/2} +s_h)A^TS_h^T\pi^h_d= \bar \beta
\ee
has a solution in $\vec z, \ct{c},\ct{f}$. That is, we need to check if
\be
\bar \beta+s_hA^TS_h^T\pi^h_d
\in \langle B M^h A^T S_h^T\pi^h_d, \iota_2 R^T_{1/2} \hm^T A^T S_h^T\pi^h_d,\iota_2 R^T_{1/2} A^TS_h^T\pi^h_d\rangle\ .
\ee
If so, then $(T^{\ct{e}\ct{f}}_h)$ will have a non-vanishing entry at $\ct c, \ct d$.

\subsection{Multiplying elementary intertwiners}
Let $T^1$ and $T^2$ be two intertwiner matrices for group elements $h_1,h_2$ coming from elementary matrices $(\ct{e}_1,\ct{f}_1), (\ct{e}_2,\ct{f}_2)$. We want to efficiently compute the matrix element of their product,
\be\label{Tproduct}
\sum_{\ct{b}} T^1_{\ct{c}\ct{b}}T^2_{\ct{b}\ct{d}}\ .
\ee
We want to find the set of $\ct{b}$ which gives non-vanishing terms. This set we can again characterize as the solution to a linear system.
First, let $B_1$ be a basis matrix for $\coim \Pi_1$ and $B_2$ a basis matrix for $\coim \Pi_{2}$. We then have
\be
\ct{b}= \ct{f}_1+\vec z_1 B_1\Pi_1\pi_2 = \ct{e}_2 +\vec z_2 B_2\Pi_2 \pi_1\ ,
\ee
where as usual $\pi_1,\pi_2$ are the projectors onto the first and second half of the coordinates respectively, and
\bea
\vec z_1 B_1\Pi_1\pi_1&=&\ct{e}_1+\ct{c}\\
\vec z_2 B_2\Pi_2\pi_2&=&\ct{f}_2+\ct{d}
\eea
We are then looking for solutions $\vec z_1,\vec z_2$ of the system
\bea
\vec z_1 B_1 M^1 A_1^T S_1^T\pi^1_d &=& -(\ct{c} \iota_2 R^T_{1/2} \hm_1^T  -\ct{f_1}\iota_2 R^T_{1/2} -s_1)A_1^TS_1^T\pi^1_d + \bar \beta_1 \\
\vec z_2 (B_2 M^2  + B_2\Pi_2\pi_1\iota_2 R^T_{1/2} \hm_2^T) A_2^T S_2^T\pi^2_d 
&=& -(\ct{e}_2 \iota_2 R^T_{1/2} \hm_2^T -\ct{f}_2\iota_2 R^T_{1/2} -s_2)A_2^TS_2^T\pi^2_d + \bar \beta_2\\
\vec z_1 B_1\Pi_1 \pi_2 - \vec z_2 B_2\Pi_2\pi_1 &=& -\ct{f}_1+ \ct{e}_2\\
\vec z_1 B_1\Pi_1\pi_1&=&\ct{e}_1+\ct{c}\\
\vec z_2 B_2\Pi_2\pi_2&=&\ct{f}_2+\ct{d}
\eea
Let $Z_1$ be the set of all $\vec z_1$ for which there is a solution $(\vec z_1, \vec z_2)$ of the above system. The sum in (\ref{Tproduct}) is then over the set
\be
\ct{b}\in \ct{f}_1 + Z_1B_1\Pi_1\pi_2\ . 
\ee

\subsection{Quadratic forms over characteristic 2}\label{ss:Qstand}
Let $Q$ be a (possibly degenerate) quadratic form over $\F_2$. Let its polar form be $\beta(u,v):=Q(u+v)-Q(u)-Q(v)$. For completeness, let us give an explicit implementation of an algorithm that brings it to standard form. This is a straightforward generalization of the usual standard result in \eg \cite{MR1189139} to the degenerate case.
\begin{enumerate}
\item Find a basis of $\ker \beta$. Note that $Q(u+v)=Q(u)+Q(v)$ if $u\in \ker \beta$.
\item \label{step2} Let $W:=\mathrm{coim}\ \beta$. Note that $\beta$ is non-degenerate on $W$. As long as $\dim W >2$, choose three linearly independent vectors $u_{1,2,3}\in W$.
\item Construct a vector $v\in W$ with $Q(v)=0$:
\begin{itemize}
\item If $Q(u_i)=0$, $v=u_i$.
\item Else if $\beta(u_i,u_j)=0$, $v=u_i+u_j$.
\item Else $v=u_1+u_2+u_3$.
\end{itemize}
\item Since $\beta$ is non-degenerate in $W$, pick $w\in W$ such that $\beta(v,w)=1$. Then define $u=w+Q(w)v$. $Q$ then has the desired form on $\langle v,u \rangle$, namely $Q(av+bu) = ab$.
\item Construct $\langle v,u \rangle^\perp \subset W$ by using the orthogonalization $w \mapsto w - \beta(w,v)u-\beta(w,u)v$.
\item Continue with step~\ref{step2} with $\langle v,u \rangle^\perp \subset W$.
\item Once $\dim \leq 2$, we are at the Witt defect described \eg in \cite{MR1189139}. For our purposes, this never occurs.
\end{enumerate}

\section{Data for the intertwiner representations for $m=5$ and $m=7$}
In the following we give the necessary data to construct the intertwiner matrices $T$ for $m=5$ and $m=7$. For convenience we use sparse array notation: for any vector we give the set of coordinates that have entry 1, all other entries being 0. Perhaps more usefully, all this data can also be found in the supplementary Magma and Mathematica files.

\subsection{Intertwiner representation for $m=5$}\label{ss:intm5}
Let $B$ be the $32\times 32$ basis transformation matrix that brings the bilinear form $C$ to its standard symplectic form described in section~\ref{ss:dr} via $B C B^T$. The group generators in this basis are then given by $B\Sigma B^{-1}$.
We give the row vectors of $B$ in table~\ref{Bm5}.
\begin{table}
\begin{tabular}{cccc}
 1 & \{1\} & 17 & \{32\} \\
 2 & \{1,17\} & 18 & \{8\} \\
 3 & \{1,17,29\} & 19 & \{2,17,29\} \\
 4 & \{1,17,29,31\} & 20 & \{2,17,20,29,31\} \\
 5 & \{1,17,27,31\} & 21 & \{5,17,29,31\} \\
 6 & \{1,17,27,28,31\} & 22 & \{5,17,22,29,31\} \\
 7 & \{1,17,26,28\} & 23 & \{3,17,20,27,31\} \\
 8 & \{1,17,26,28,30\} & 24 & \{3,17,20,23,27\} \\
 9 & \{1,17,25\} & 25 & \{8,16,20,22,23,28,30,31\} \\
 10 & \{1,9,26,27,29\} & 26 & \{4,20,28\} \\
 11 & \{1,9,15,16,23,24,26,28,30\} & 27 & \{4,20,24,28\} \\
 12 & \{4,18,20,24,28\} & 28 & \{13,15,16,23,24,25,27,28,31\} \\
 13 & \{4,12,16,18,26,27,28,30,31\} & 29 & \{6,18,22,24,30\} \\
 14 & \{4,11,12,16,18,25,26,30\} & 30 & \{6,18,21,22,24,30\} \\
 15 & \{6,13,14,16,18,25,26,28\} & 31 & \{7,21,23,24,31\} \\
 16 & \{6,10,13,14,16,30\} & 32 & \{7,19,21,23,24,31\} \\
\end{tabular}
\caption{The 32 row vectors of the basis transformation matrix $B$ for $m=5$}\label{Bm5}
\end{table}
For the $B_h$ we take the following vectors $s_h$:
\bea
s_{12} & =& \{19,21\} \\
s_{22} &=&  \{3,4\}\ \\
s_{51} &=&  \{3,4\} 
\eea
and $s_h = 0$ else.  
For the intertwiner matrices we take the Schur average of the following elementary matrices with entries at $\ct e, \ct f$:
\bea
T_{12} &:& \ct e = \{3,4\}\ , \ct f = 0 \\
T_{22} &:& \ct e = \{10\}\ , \ct f = 0 \\
T_{51} &:& \ct e = \{3,4\}\ , \ct f = 0 \\
\textrm{else} &:& \ct e = 0\ , \ct f= 0
\eea
We then normalize the generators by the following normalization factors:
\be
\vec n =(1, 2^{2}, 1, 2^{3}, 2^{2}, 2^{1}, 2^{1}, 2^{1}, 2^{1}, 2^{1})
\ee
Finally the entries of the matrix (\ref{commMatm5}) are obtained by reading off the $\ct c, \ct d$ entries of the product matrices. Here $\ct c = 0$ and the value of $\ct d$ is given in the following table:
\be
\left(
\begin{array}{cccccccccc}
 \{\} & \{10\} & \{\} & \{10\} & \{\} & \{\} & \{\} & \{\} & \{3,4\} & \{\} \\
 \{10\} & \{\} & \{10\} & \{\} & \{10\} & \{10\} & \{10\} & \{10\} & \{\} & \{10\} \\
 \{\} & \{10\} & \{\} & \{10\} & \{\} & \{\} & \{\} & \{\} & \{3,4\} & \{\} \\
 \{10\} & \{\} & \{10\} & \{\} & \{3,4\} & \{10\} & \{10\} & \{10\} & \{\} & \{10\} \\
 \{\} & \{10\} & \{\} & \{3,4\} & \{\} & \{\} & \{\} & \{\} & \{3,4,13,14\} & \{\} \\
 \{\} & \{10\} & \{\} & \{10\} & \{\} & \{\} & \{\} & \{\} & \{3,4\} & \{\} \\
 \{\} & \{10\} & \{\} & \{10\} & \{\} & \{\} & \{\} & \{\} & \{3,4\} & \{\} \\
 \{\} & \{10\} & \{\} & \{10\} & \{\} & \{\} & \{\} & \{\} & \{3,4\} & \{\} \\
 \{3,4\} & \{\} & \{3,4\} & \{\} & \{3,4,13,14\} & \{3,4\} & \{3,4\} & \{3,4\} & \{\} & \{3,4\} \\
 \{\} & \{10\} & \{\} & \{10\} & \{\} & \{\} & \{\} & \{\} & \{3,4\} & \{\} \\
\end{array}
\right)
\ee

\subsection{Intertwiner representation for $m=7$}\label{ss:intm7}
\begin{table}
\begin{footnotesize}
\begin{tabular}{cc}
 1 & \{1\} \\
 2 & \{1,65\} \\
 3 & \{1,65,113\} \\
 4 & \{1,65,113,125\} \\
 5 & \{1,65,113,125,127\} \\
 6 & \{1,65,113,123,127\} \\
 7 & \{1,65,113,123,124,127\} \\
 8 & \{1,65,113,122,124\} \\
 9 & \{1,65,113,122,124,126\} \\
 10 & \{1,65,113,121\} \\
 11 & \{1,65,105,121\} \\
 12 & \{1,65,105,111,121,127\} \\
 13 & \{1,65,105,111,112,121,127\} \\
 14 & \{1,65,105,110,112,121,126\} \\
 15 & \{1,65,105,109,121\} \\
 16 & \{1,65,101,109\} \\
 17 & \{1,65,101,109,119,127\} \\
 18 & \{1,65,101,109,119,120,127\} \\
 19 & \{1,65,101,109,118,120,126\} \\
 20 & \{1,65,101,109,117\} \\
 21 & \{1,65,99\} \\
 22 & \{1,65,99,116,120,124\} \\
 23 & \{1,65,99,115\} \\
 24 & \{1,65,98\} \\
 25 & \{1,65,98,114\} \\
 26 & \{1,65,97\} \\
 27 & \{1,60,64,65,92,99,122,123,126,127\} \\
 28 & \{1,60,63,65,92,95,99,111,112,119,120,122,124,125,127\} \\
 29 & \{1,60,62,63,64,65,92,94,95,99,101,105,109,111,112,119,120,121\} \\
 30 & \{1,59,60,64,65,92,99,115,121,122,124,126\} \\
 31 & \{1,59,60,64,65,92,99,108,112,115,121,122,126\} \\
 32 & \{1,59,60,64,65,91,92,99,108,112,115,121,122,126\} \\
 33 & \{1,54,65,101,109,114,117\} \\
 34 & \{1,54,56,64,65,88,99,114,117,119,120,127\} \\
 35 & \{1,54,55,56,64,65,88,99,114,115\} \\
 36 & \{1,54,55,56,64,65,88,99,104,112,114,115,120\} \\
 37 & \{1,54,55,56,64,65,87,88,99,104,112,114,115,120\} \\
 38 & \{1,52,65,99,114,115\} \\
 39 & \{1,48,64,65,80,104,105,108,111,112,120,121,124,127\} \\
 40 & \{1,47,48,64,65,80,104,105,108,109,120,121,124\} \\
 41 & \{1,47,48,64,65,80,104,105,107,108,109,120,121,124\} \\
 42 & \{1,47,48,64,65,79,80,104,105,107,108,109,120,121,124\} \\
 43 & \{1,47,48,64,65,79,80,103,104,105,107,108,109,120,121,124\} \\
 44 & \{1,46,65,105,109,121\} \\
 45 & \{1,33,66,67,69,73,81,98,99,101,105,113\} \\
 46 & \{1,33,57,59,62,64,66,67,69,73,81,89,91,94,98,99,105,107,109,115,121\} \\
 47 & \{1,33,57,59,61,66,67,69,73,81,89,91,98,99,101,107,109,115,117,121,122,124,126\} \\
 48 & \{1,33,53,54,55,61,66,67,69,73,81,85,87,98,99,103,105,109,114,115,121\} \\
 49 & \{1,33,53,54,55,61,66,67,69,73,81,85,87,93,98,99,103,105,109,114,115,121\} \\
 50 & \{1,33,49,66,67,69,73,97,98,99,101,105,114,115,117,121\} \\
 51 & \{7,67,69,71,72,79,87,103\} \\
 52 & \{7,39,47,55,67,69,72,99,101,104,107,109,112,115,117,120\} \\
 53 & \{7,39,40,47,55,67,69,99,101,103,107,109,115,117\} \\
 54 & \{7,23,49,51,52,59,67,69,72,79,81,85,88,91,95,97,99,103,107,117,120,127\} \\
 55 & \{7,23,49,51,52,59,67,69,72,79,81,84,85,88,91,95,97,99,103,107,117,120,127\} \\
 56 & \{11,43,47,59,67,73,76,99,103,105,108,109,112,115,121,124\} \\
 57 & \{11,43,44,47,59,67,73,99,103,105,106,107,109,115,121\} \\
 58 & \{11,43,44,47,59,67,73,99,102,103,105,106,107,109,115,121\} \\
 59 & \{13,21,25,29,61,79,86,87,90,91,93,94,95,102,106,126,127\} \\
 60 & \{13,21,25,29,61,78,79,86,87,90,91,93,94,95,126,127\} \\
 61 & \{24,72,84,86,87,88,96,120\} \\
 62 & \{28,76,84,90,91,92,96,124\} \\
 63 & \{30,78,86,90,93,94,96,126\} \\
 64 & \{31,79,87,91,93,95,96,127\} \\
\end{tabular}
\end{footnotesize}
\caption{Row vectors of the basis transformation matrix $B$ for $m=7$\label{Bm71}}
\end{table}

\begin{table}
\begin{footnotesize}
\begin{tabular}{cc}
 65 & \{128\} \\
 66 & \{32\} \\
 67 & \{8\} \\
 68 & \{2,65,113,125\} \\
 69 & \{2,65,68,113,125,127\} \\
 70 & \{5,65,113,123,127\} \\
 71 & \{5,65,70,113,123,124,127\} \\
 72 & \{3,65,68,113,122,124\} \\
 73 & \{3,65,68,71,113,122,124,126\} \\
 74 & \{8,68,70,71,80\} \\
 75 & \{20,68\} \\
 76 & \{17,65,113,121\} \\
 77 & \{17,65,82,113,121\} \\
 78 & \{19\} \\
 79 & \{20,68,82,83,88\} \\
 80 & \{12,68,80,112\} \\
 81 & \{9,65,105,109,121\} \\
 82 & \{9,65,74,105,109,121\} \\
 83 & \{11\} \\
 84 & \{12,68,74,75,80,92,112,124\} \\
 85 & \{14,70,74,80,99,105,109,112,121\} \\
 86 & \{13\} \\
 87 & \{14,70,74,77,80,94,99,105,112,115,121,126\} \\
 88 & \{15,71,75,77,80,111,127\} \\
 89 & \{15,71,75,77,80,95,111\} \\
 90 & \{32,64,80,88,92,94,95,112,120,124,126,127\} \\
 91 & \{6,70\} \\
 92 & \{4,68\} \\
 93 & \{4,66,68\} \\
 94 & \{6,66,70,85,117\} \\
 95 & \{6,38,66,85,98,105,109,117,121\} \\
 96 & \{6,38,66,69,98,105,109,121\} \\
 97 & \{11,67,75,91\} \\
 98 & \{10,66,74\} \\
 99 & \{10,66,74,89,91,121\} \\
 100 & \{10,42,54,55,56,64,66,88,89,91,98,99,104,108,114,115,120,122,126\} \\
 101 & \{10,42,54,55,56,64,66,73,87,88,98,99,104,108,114,115,120,121,122,126\} \\
 102 & \{13,69,73,77,109\} \\
 103 & \{18,66,82,114\} \\
 104 & \{6,22,54,66,69,82,85,88,94,114,117,120,126\} \\
 105 & \{6,22,50,52,66,69,85,88,94,98,113,115,120,122,124\} \\
 106 & \{10,26,50,52,54,66,73,89,92,94,98,113,114,115,117,124,126\} \\
 107 & \{10,18,26,73,81,89,92,94,114,122\} \\
 108 & \{19,67,81,83,87,91,115\} \\
 109 & \{16,79,80,112\} \\
 110 & \{4,66,67,68\} \\
 111 & \{4,66,67,68,72\} \\
 112 & \{4,36,52,66,67,72,98,99,104,108,114,115,120,124\} \\
 113 & \{4,36,52,66,67,72,76,98,99,104,108,114,115,120,124\} \\
 114 & \{16,72,76,79,80,96,112\} \\
 115 & \{58,114,121,122,124,126\} \\
 116 & \{25,73,81,89,91,93,121\} \\
 117 & \{25,73,81,89,90,91,93,121\} \\
 118 & \{13,45,46,47,61,69,73,101,103,105,107,109,117,121\} \\
 119 & \{13,45,46,47,61,69,73,101,103,105,106,107,109,117,121\} \\
 120 & \{21,69,81,85,87,93,117\} \\
 121 &
   \{13,27,29,43,44,47,49,51,52,55,59,61,69,73,76,81,84,85,87,92,93,94,97,105,109,115,117,
   121,124,126\} \\
 122 &
   \{13,21,27,29,43,44,47,49,51,52,55,59,61,73,76,84,86,92,94,97,102,105,109,115,121,124,1
   26\} \\
 123 & \{19,49,51,52,55,59,67,97,99,102,103,107\} \\
 124 & \{19,49,51,52,55,59,67,97,99,100,103,107\} \\
 125 & \{41,73,97,105,106,107,109,121\} \\
 126 & \{37,69,97,101,102,103,109,117\} \\
 127 & \{35,67,97,99,100,103,107,115\} \\
 128 & \{34,66,97,98,100,102,106,114\} \\
\end{tabular}
\end{footnotesize}
\caption{Row vectors of the basis transformation matrix $B$ for $m=7$, continued\label{Bm72}}
\end{table}

The row vectors of the $128\times128$ basis transformation matrix $B$ are given in tables~\ref{Bm71} and \ref{Bm72}.
For the $B_h$ we take the following vectors $s_h$:
\bea
s_{12} & =& \{68,70,100\} \\
s_{22} &=&  \{4,5\}\ \\
s_{32} &=&  \{100\} \\
s_{71} &=&\{ 67,70,91,95,120\}
\eea
and $s_h = 0$ else.  
For our elementary matrices we choose $\ct e =\ct f=0$ except for $T_{22}$, for which we choose
\be
T_{22}\ : \ \ct e =\{4,5\}\ , \qquad \ct f =\{47\}\ .
\ee
We normalize the generators by the following normalization factors:
\be
\vec n =(1, 2^{8}, 1, 2^{10}, 2^{1}, 2^{4}, 2^{3}, 2^{2}, 2^{2}, 2^{4}, 2^{3}, 2^{4}, 2^{6}, 1)
\ee
For the relation matrix we always compute the entry of the product with $\ct c=0$ and $\ct d =0$. The only exception is for $T_{22}$, for which we take $\ct c =0$ and the $\ct d$ vectors are given by
\begin{multline}
(\{4,5,28\},\{4,5,46\},\{4,5,28\},\{\},\{4,5,28\},\{4,5,28\},\{4,5,28\},\{4,5,28\},\\
\{4,5,28\},\{4,5,28\},\{4,5,28,63\},
\{4,5,28\},\{4,5,28,64\},\{4,5,28\})
\end{multline}
Here the $k$-th entry is the $\ct d$ vector for the product of $T_{22}$ with the $k$-th intertwiner matrix.




\bibliographystyle{alpha}
 \bibliography{./refmain}

\newcommand{\etalchar}[1]{$^{#1}$}
\def\cprime{$'$}
\begin{thebibliography}{CJV{\etalchar{+}}22}

\bibitem[BCP97]{Mag97}
Wieb Bosma, John Cannon, and Catherine Playoust.
\newblock The magma algebra system i: The user language.
\newblock {\em J. Symb. Comput.}, 24(3-4):235--265, October 1997.

\bibitem[BE73]{MR335654}
Michel Brou\'{e} and Michel Enguehard.
\newblock Une famille infinie de formes quadratiques enti\`eres; leurs groupes
  d'automorphismes.
\newblock {\em Ann. Sci. \'{E}cole Norm. Sup. (4)}, 6:17--51, 1973.

\bibitem[BG01]{MR1985228}
Christine Bachoc and Philippe Gaborit.
\newblock On extremal additive {$\Bbb F_4$} codes of length 10 to 18.
\newblock In {\em International {W}orkshop on {C}oding and {C}ryptography
  ({P}aris, 2001)}, volume~6 of {\em Electron. Notes Discrete Math.}, page~10.
  Elsevier Sci. B. V., Amsterdam, 2001.

\bibitem[BK04]{MR2172171}
Bojko Bakalov and Victor~G. Kac.
\newblock Twisted modules over lattice vertex algebras.
\newblock In {\em Lie theory and its applications in physics {V}}, pages 3--26.
  World Sci. Publ., River Edge, NJ, 2004.

\bibitem[BKM15]{Belin:2014fna}
Alexandre Belin, Christoph~A. Keller, and Alexander Maloney.
\newblock {String Universality for Permutation Orbifolds}.
\newblock {\em Phys. Rev.}, D91(10):106005, 2015.

\bibitem[BKM16]{Belin:2015hwa}
Alexandre Belin, Christoph~A. Keller, and Alexander Maloney.
\newblock {Permutation Orbifolds in the large N Limit}.
\newblock {\em Annales Henri Poincare}, pages 1--29, 2016.

\bibitem[BKN22]{Burbano:2021loy}
Ivan~M. Burbano, Justin Kulp, and Jonas Neuser.
\newblock {Duality defects in E$_{8}$}.
\newblock {\em JHEP}, 10:186, 2022.

\bibitem[Bol62]{MR142666}
Beverley Bolt.
\newblock On the {C}lifford collineation, transform and similarity groups.
  {III}. {G}enerators and involutions.
\newblock {\em J. Austral. Math. Soc.}, 2:334--344, 1961/62.

\bibitem[Bor86]{MR843307}
Richard~E. Borcherds.
\newblock Vertex algebras, {K}ac-{M}oody algebras, and the {M}onster.
\newblock {\em Proc. Nat. Acad. Sci. U.S.A.}, 83(10):3068--3071, 1986.

\bibitem[BRW62]{MR125874}
Beverley Bolt, T.~G. Room, and G.~E. Wall.
\newblock On the {C}lifford collineation, transform and similarity groups. {I},
  {II}.
\newblock {\em J. Austral. Math. Soc.}, 2:60--79, 80--96, 1961/62.

\bibitem[BW61]{MR0106893}
E.~S. Barnes and G.~E. Wall.
\newblock Some extreme forms defined in terms of {A}belian groups.
\newblock {\em J. Austral. Math. Soc.}, 1(part 1):47--63, 1959/1961.

\bibitem[Che15]{MR3299063}
Chuangxun Cheng.
\newblock A character theory for projective representations of finite groups.
\newblock {\em Linear Algebra Appl.}, 469:230--242, 2015.

\bibitem[CIZ87a]{Cappelli:1986hf}
Andrea Cappelli, C.~Itzykson, and J.~B. Zuber.
\newblock {Modular invariant partition functions in two dimensions}.
\newblock {\em Nucl. Phys. B}, 280:445--465, 1987.

\bibitem[CIZ87b]{Cappelli:1987xt}
Andrea Cappelli, C.~Itzykson, and J.~B. Zuber.
\newblock {The ADE Classification of Minimal and A1(1) Conformal Invariant
  Theories}.
\newblock {\em Commun. Math. Phys.}, 113:1, 1987.

\bibitem[CJV{\etalchar{+}}22]{Collins:2022nux}
Tristan~C. Collins, Daniel Jafferis, Cumrun Vafa, Kai Xu, and Shing-Tung Yau.
\newblock {On Upper Bounds in Dimension Gaps of CFT's}.
\newblock 1 2022.

\bibitem[CM16]{2016arXiv160305645C}
S.~{Carnahan} and M.~{Miyamoto}.
\newblock {Regularity of fixed-point vertex operator subalgebras}.
\newblock {\em ArXiv e-prints}, March 2016.

\bibitem[CS99]{MR1662447}
J.~H. Conway and N.~J.~A. Sloane.
\newblock {\em Sphere packings, lattices and groups}, volume 290 of {\em
  Grundlehren der Mathematischen Wissenschaften [Fundamental Principles of
  Mathematical Sciences]}.
\newblock Springer-Verlag, New York, third edition, 1999.
\newblock With additional contributions by E. Bannai, R. E. Borcherds, J.
  Leech, S. P. Norton, A. M. Odlyzko, R. A. Parker, L. Queen and B. B. Venkov.

\bibitem[Dav10]{MR2742735}
A.~Davydov.
\newblock Twisted automorphisms of group algebras.
\newblock In {\em Noncommutative structures in mathematics and physics}, pages
  131--150. K. Vlaam. Acad. Belgie Wet. Kunsten (KVAB), Brussels, 2010.

\bibitem[Die55]{MR85212}
Jean Dieudonn\'e.
\newblock Pseudo-discriminant and {D}ickson invariant.
\newblock {\em Pacific J. Math.}, 5:907--910, 1955.

\bibitem[DLM00]{MR1794264}
Chongying Dong, Haisheng Li, and Geoffrey Mason.
\newblock Modular-invariance of trace functions in orbifold theory and
  generalized {M}oonshine.
\newblock {\em Comm. Math. Phys.}, 214(1):1--56, 2000.

\bibitem[DM99]{MR1684904}
Chongying Dong and Geoffrey Mason.
\newblock Quantum {G}alois theory for compact {L}ie groups.
\newblock {\em J. Algebra}, 214(1):92--102, 1999.

\bibitem[DN99]{MR1745258}
Chongying Dong and Kiyokazu Nagatomo.
\newblock Automorphism groups and twisted modules for lattice vertex operator
  algebras.
\newblock In {\em Recent developments in quantum affine algebras and related
  topics ({R}aleigh, {NC}, 1998)}, volume 248 of {\em Contemp. Math.}, pages
  117--133. Amer. Math. Soc., Providence, RI, 1999.

\bibitem[DNR21]{Dong:2021yrv}
Chongying Dong, Siu-Hung Ng, and Li~Ren.
\newblock {Orbifolds and minimal modular extensions}.
\newblock 8 2021.

\bibitem[DRX17]{MR3715704}
Chongying Dong, Li~Ren, and Feng Xu.
\newblock On orbifold theory.
\newblock {\em Adv. Math.}, 321:1--30, 2017.

\bibitem[DW90]{dijkgraaf1990}
Robbert Dijkgraaf and Edward Witten.
\newblock Topological gauge theories and group cohomology.
\newblock {\em Comm. Math. Phys.}, 129(2):393--429, 1990.

\bibitem[EG18]{Evans:2018qgz}
David~E. Evans and Terry Gannon.
\newblock {Reconstruction and Local Extensions for Twisted Group Doubles, and
  Permutation Orbifolds}.
\newblock 2018.

\bibitem[FLM88]{MR996026}
Igor Frenkel, James Lepowsky, and Arne Meurman.
\newblock {\em Vertex operator algebras and the {M}onster}, volume 134 of {\em
  Pure and Applied Mathematics}.
\newblock Academic Press, Inc., Boston, MA, 1988.

\bibitem[FV87]{FV87}
Daniel~S. Freed and Cumrun Vafa.
\newblock {Global Anomalies on Orbifolds}.
\newblock {\em Commun. Math. Phys.}, 110:349, 1987.
\newblock [Addendum: Commun. Math. Phys.117,349(1988)].

\bibitem[Gem20]{Gemunden:2020lay}
Thomas Gem\"unden.
\newblock {\em {Non-Abelian Orbifold Theory and Twisted Modules for Vertex
  Operator Algebras}}.
\newblock PhD thesis, Zurich, ETH, 2020.

\bibitem[GK19]{Gemunden:2018mkh}
Thomas Gemünden and Christoph~A. Keller.
\newblock {Orbifolds of Lattice Vertex Operator Algebras at $d=48$ and $d=72$}.
\newblock {\em J. Algebra}, 523:93--118, 2019.

\bibitem[GK21]{Gemunden:2019dtr}
Thomas Gem\"unden and Christoph~A. Keller.
\newblock {Non-abelian orbifolds of lattice vertex operator algebras}.
\newblock {\em J. Algebra}, 585:656--696, 2021.

\bibitem[GR24]{Gannon:2024tcl}
Terry Gannon and Andrew Riesen.
\newblock {Orbifolds of Pointed Vertex Operator Algebras I}.
\newblock 10 2024.

\bibitem[Gri05]{MR2159298}
Robert~L. Griess, Jr.
\newblock Pieces of {$2\sp d$}: existence and uniqueness for {B}arnes-{W}all
  and {Y}psilanti lattices.
\newblock {\em Adv. Math.}, 196(1):147--192, 2005.

\bibitem[Gri11]{GriessRobertL2011Aitg}
Robert~L Griess.
\newblock {\em An introduction to groups and lattices : finite groups and
  positive definite rational lattices}, volume vol. 15 of {\em Advanced
  lectures in mathematics}.
\newblock International Press, Somerville, Mass, 2011.

\bibitem[HK07]{MR2300247}
Yi-Zhi Huang and Liang Kong.
\newblock Full field algebras.
\newblock {\em Comm. Math. Phys.}, 272(2):345--396, 2007.

\bibitem[HK10]{MR2592945}
Yi-Zhi Huang and Liang Kong.
\newblock Modular invariance for conformal full field algebras.
\newblock {\em Trans. Amer. Math. Soc.}, 362(6):3027--3067, 2010.

\bibitem[HKS14]{Hartman:2014oaa}
Thomas Hartman, Christoph~A. Keller, and Bogdan Stoica.
\newblock {Universal Spectrum of 2d Conformal Field Theory in the Large c
  Limit}.
\newblock {\em JHEP}, 09:118, 2014.

\bibitem[HM18]{Harvey:2017rko}
Jeffrey~A. Harvey and Gregory~W. Moore.
\newblock {An Uplifting Discussion of T-Duality}.
\newblock {\em JHEP}, 05:145, 2018.

\bibitem[H{\"o}h96]{MR1614941}
Gerald H{\"o}hn.
\newblock {\em Selbstduale {V}ertexoperatorsuperalgebren und das
  {B}abymonster}, volume 286 of {\em Bonner Mathematische Schriften [Bonn
  Mathematical Publications]}.
\newblock Universit\"at Bonn, Mathematisches Institut, Bonn, 1996.
\newblock Dissertation, Rheinische Friedrich-Wilhelms-Universit\"at Bonn, Bonn,
  1995.

\bibitem[H{\"o}h08]{MR2388095}
Gerald H{\"o}hn.
\newblock Conformal designs based on vertex operator algebras.
\newblock {\em Adv. Math.}, 217(5):2301--2335, 2008.

\bibitem[H{\"o}h19]{Hohn:2019luz}
Gerald H{\"o}hn.
\newblock {The Conformal Packing Problem}.
\newblock 9 2019.

\bibitem[HR15]{Haehl:2014yla}
Felix~M. Haehl and Mukund Rangamani.
\newblock {Permutation orbifolds and holography}.
\newblock {\em JHEP}, 03:163, 2015.

\bibitem[HRS06]{MR2269551}
Nadia Heninger, E.~M. Rains, and N.~J.~A. Sloane.
\newblock On the integrality of {$n$}th roots of generating functions.
\newblock {\em J. Combin. Theory Ser. A}, 113(8):1732--1745, 2006.

\bibitem[Hua08a]{MR2468370}
Yi-Zhi Huang.
\newblock Rigidity and modularity of vertex tensor categories.
\newblock {\em Commun. Contemp. Math.}, 10(suppl. 1):871--911, 2008.

\bibitem[Hua08b]{MR2387861}
Yi-Zhi Huang.
\newblock Vertex operator algebras and the {V}erlinde conjecture.
\newblock {\em Commun. Contemp. Math.}, 10(1):103--154, 2008.

\bibitem[Hua21]{MR4256405}
Yi-Zhi Huang.
\newblock Representation theory of vertex operator algebras and orbifold
  conformal field theory.
\newblock In {\em Lie groups, number theory, and vertex algebras}, volume 768
  of {\em Contemp. Math.}, pages 221--252. Amer. Math. Soc., [Providence], RI,
  [2021] \copyright 2021.

\bibitem[JFT19]{MR3990846}
Theo Johnson-Freyd and David Treumann.
\newblock Third homology of some sporadic finite groups.
\newblock {\em SIGMA Symmetry Integrability Geom. Methods Appl.}, 15:Paper No.
  059, 38, 2019.

\bibitem[JR11]{MR2854563}
Paul Jenkins and Jeremy Rouse.
\newblock Bounds for coefficients of cusp forms and extremal lattices.
\newblock {\em Bull. Lond. Math. Soc.}, 43(5):927--938, 2011.

\bibitem[Kat87]{Kato:1987td}
Akishi Kato.
\newblock {Classification of Modular Invariant Partition Functions in
  Two-dimensions}.
\newblock {\em Mod. Phys. Lett. A}, 2:585, 1987.

\bibitem[Kir02]{MR1923177}
Alexander Kirillov, Jr.
\newblock Modular categories and orbifold models.
\newblock {\em Comm. Math. Phys.}, 229(2):309--335, 2002.

\bibitem[KM19]{Keller:2017rtk}
Christoph~A. Keller and Beatrix~J. Mühlmann.
\newblock {The Spectrum of Permutation Orbifolds}.
\newblock {\em Lett. Math. Phys.}, 109(7):1559--1572, 2019.

\bibitem[Lee67]{MR209983}
John Leech.
\newblock Notes on sphere packings.
\newblock {\em Canadian J. Math.}, 19:251--267, 1967.

\bibitem[Lep85]{MR820716}
J.~Lepowsky.
\newblock Calculus of twisted vertex operators.
\newblock {\em Proc. Nat. Acad. Sci. U.S.A.}, 82(24):8295--8299, 1985.

\bibitem[Li96]{MR1372724}
Haisheng Li.
\newblock Local systems of twisted vertex operators, vertex operator
  superalgebras and twisted modules.
\newblock In {\em Moonshine, the {M}onster, and related topics ({S}outh
  {H}adley, {MA}, 1994)}, volume 193 of {\em Contemp. Math.}, pages 203--236.
  Amer. Math. Soc., Providence, RI, 1996.

\bibitem[LL04a]{LL}
J.~Lepowsky and H.~Li.
\newblock {\em Introduction to Vertex Operator Algebras and Their
  Representations}.
\newblock Introduction to Vertex Operator Algebras and Their Representations.
  Birkh{\"a}user Boston, 2004.

\bibitem[LL04b]{MR2023933}
James Lepowsky and Haisheng Li.
\newblock {\em Introduction to vertex operator algebras and their
  representations}, volume 227 of {\em Progress in Mathematics}.
\newblock Birkh\"auser Boston, Inc., Boston, MA, 2004.

\bibitem[Miy15]{MR3320313}
Masahiko Miyamoto.
\newblock {$C\sb 2$}-cofiniteness of cyclic-orbifold models.
\newblock {\em Comm. Math. Phys.}, 335(3):1279--1286, 2015.

\bibitem[MOS75]{MR0376536}
C.~L. Mallows, A.~M. Odlyzko, and N.~J.~A. Sloane.
\newblock Upper bounds for modular forms, lattices, and codes.
\newblock {\em J. Algebra}, 36(1):68--76, 1975.

\bibitem[MR24]{Moller:2024plb}
Sven M\"oller and Brandon~C. Rayhaun.
\newblock {Equivalence Relations on Vertex Operator Algebras, I: Genus}.
\newblock 8 2024.

\bibitem[MT04]{MR2040864}
Masahiko Miyamoto and Kenichiro Tanabe.
\newblock Uniform product of {$A_{g,n}(V)$} for an orbifold model {$V$} and
  {$G$}-twisted {Z}hu algebra.
\newblock {\em J. Algebra}, 274(1):80--96, 2004.

\bibitem[Neb98]{MR1489922}
Gabriele Nebe.
\newblock Some cyclo-quaternionic lattices.
\newblock {\em J. Algebra}, 199(2):472--498, 1998.

\bibitem[Neb12]{MR2999133}
Gabriele Nebe.
\newblock An even unimodular 72-dimensional lattice of minimum 8.
\newblock {\em J. Reine Angew. Math.}, 673:237--247, 2012.

\bibitem[Neb14]{MR3225314}
Gabriele Nebe.
\newblock A fourth extremal even unimodular lattice of dimension 48.
\newblock {\em Discrete Math.}, 331:133--136, 2014.

\bibitem[NRS01]{MR1845897}
Gabriele Nebe, E.~M. Rains, and N.~J.~A. Sloane.
\newblock The invariants of the {C}lifford groups.
\newblock {\em Des. Codes Cryptogr.}, 24(1):99--121, 2001.

\bibitem[NRS02]{NR02}
G.~{Nebe}, E.~M. {Rains}, and N.~J.~A. {Sloane}.
\newblock {A Simple Construction for the Barnes-Wall Lattices}.
\newblock {\em ArXiv Mathematics e-prints}, July 2002.

\bibitem[Rob25]{PhDJeremy}
Jeremy Roberts.
\newblock {\em On the Extra-Special automorphisms of Twisted Group Algebras of
  the Barnes-Wall Lattices}.
\newblock PhD thesis, University of Arizona, 2025.

\bibitem[RPD90]{Roche:1990hs}
P.~Roche, V.~Pasquier, and R.~Dijkgraaf.
\newblock {QuasiHopf algebras, group cohomology and orbifold models}.
\newblock {\em Nucl. Phys. Proc. Suppl.}, 18B:60--72, 1990.
\newblock [,60(1990)].

\bibitem[Tay92]{MR1189139}
Donald~E. Taylor.
\newblock {\em The geometry of the classical groups}, volume~9 of {\em Sigma
  Series in Pure Mathematics}.
\newblock Heldermann Verlag, Berlin, 1992.

\bibitem[Vaf86]{Vafa:1986wx}
Cumrun Vafa.
\newblock {Modular Invariance and Discrete Torsion on Orbifolds}.
\newblock {\em Nucl. Phys.}, B273:592, 1986.

\bibitem[vEMS17]{vanEkeren2017}
J.~van {Ekeren}, S.~{Möller}, and N.~R. {Scheithauer}.
\newblock {Construction and Classification of Holomorphic Vertex Operator
  Algebras}.
\newblock {\em Journal für die reine und angewandte Mathematik (Crelles
  Journal)}, November 2017.

\bibitem[Ver88]{Verlinde:1988sn}
Erik~P. Verlinde.
\newblock {Fusion Rules and Modular Transformations in 2D Conformal Field
  Theory}.
\newblock {\em Nucl. Phys.}, B300:360--376, 1988.

\bibitem[Wal62]{MR148736}
G.~E. Wall.
\newblock On the {C}lifford collineation, transform and similarity groups.
  {IV}. {A}n application to quadratic forms.
\newblock {\em Nagoya Math. J.}, 21:199--222, 1962.

\bibitem[Wit07]{Witten:2007kt}
Edward Witten.
\newblock {Three-Dimensional Gravity Revisited}.
\newblock 2007.

\bibitem[{Wol}]{Mathematica}
{Wolfram Research{,} Inc.}
\newblock Mathematica, {V}ersion 12.3.
\newblock Champaign, IL, 2024.

\end{thebibliography}

\end{document}